  \newcommand{\one}{\mathbf{1}}
  \newcommand{\E}{\mathbf{E}}
  \newcommand{\hf}{{\textstyle{1\over 2}}}
  \renewcommand{\P}{\mathbf{P}}
  \newcommand{\OU}{{\mbox{\tiny OU}}}
  \def\les{\lesssim}
  \def\hf{{\textstyle{1 \over 2}}}
  \def\Hf{H_{\! f}}
\begin{document} 

  \date{December 21, 2007}
  \title{Slow energy dissipation in \\[1mm] anharmonic oscillator chains}
  \author{Martin~Hairer\inst1 and Jonathan C.~Mattingly\inst2}
  \institute{Mathematics Institute, The University of Warwick, Coventry CV4 7AL, UK
   \\ \email{M.Hairer@Warwick.ac.uk}
  \and Department of Mathematics and Center for Nonlinear and Complex Systems, Duke University, Durham 27701, USA
   \\ \email{jonm@math.duke.edu}}
  \titleindent=0.65cm

  \maketitle
  \thispagestyle{empty}

  \begin{abstract}
    \parindent1em
    We study the dynamic behavior at high energies of a chain of
    anharmonic oscillators coupled at its ends to heat baths at possibly
    different temperatures. In our setup, each oscillator is subject to a homogeneous
    anharmonic pinning potential $V_1(q_i) =|q_i|^{2k}/2k$ and harmonic 
    coupling potentials $V_2(q_i- q_{i-1}) = (q_i- q_{i-1})^2/2$ between itself and its
    nearest neighbors. 
    We consider the case $k > 1$ when the pinning
    potential  is
    stronger then the coupling potential. At high energy, when a large fraction of the energy
    is located in the bulk of the chain, breathers appear and block the transport of energy
    through the system, thus slowing its convergence to equilibrium.

    In such a regime, we obtain equations for an effective dynamics by averaging out the
    fast oscillation of the breather. Using this representation and related
    ideas, we can prove a number of results. When the chain is of length
    three and $k> 3/2$ we show that there exists a unique invariant
    measure. If $k > 2$ we further show that the system does not
    relax exponentially fast to this equilibrium by demonstrating that
    zero is in the essential spectrum of the generator of the dynamics.  When the chain
    has five or more oscillators and $k> 3/2$ we show that the generator
    again has zero in its essential spectrum.

    In addition to these rigorous results, a theory is given for the rate
   of  decrease of the energy when it is concentrated in one of the
    oscillators without dissipation. Numerical simulations are included
    which confirm the theory.
  \end{abstract}

  \section{Introduction}

  One subject that has received considerable attention in recent years
  is the return to equilibrium of systems arising from statistical
  mechanics. One of the simplest models of interest is given
  by the kinetic Fokker-Planck equation
  \begin{equ}[e:FP]
    \d_t \phi_t = \CL \phi_t\;,\quad \phi_0 = \phi\;,\quad \CL = {1\over
      2} \d_p^2 - p\, \d_p + p\, \d_q - \nabla V(q)\,\d_p\;,\quad p,q
    \in \R^n\;.
  \end{equ}
  This equation describes the evolution of an observable $\phi \colon
  \R^n \to \R$ under the Hamiltonian dynamic for the energy $H(p,q) =
  {p^2 \over 2} + V(q)$, perturbed by friction and noise:
  \begin{equ}[e:dynFP]
    dp = -\nabla V(q)\, dt -p\, dt + dw(t)\;,\qquad dq = p\,dt\;.
  \end{equ}
  The relation between \eref{e:dynFP} and \eref{e:FP} is given by the
  fact that the function $\phi_t(p_0, q_0) = \E \phi_0(p(t), q(t))$
  satisfies the partial differential equation \eref{e:FP}, provided that
  the pair $(p(t),q(t))$ solves \eref{e:dynFP}.  It can easily be
  checked by inspection that if $V$ is sufficiently smooth and coercive,
   the measure $\mu = \exp(-2H(p,q))\,dp\,dq$
  is invariant under this dynamic in the sense that if $\phi_t$
  satisfies \eref{e:FP}, then $\int \phi_t d\mu$ does not depend on $t$.
  This can be rephrased as saying that if $(p_0,q_0)$ is a random
  variable with law $\mu$ independent of the driving noise $w$, then the law of $(p(t),
  q(t))$ is given by $\mu$ for all times.

  Underpinning much of the analysis of $\CL$ on the space $\L^2(\mu)$ is
  the guiding principle that it related to the corresponding \textit{Witten
  Laplacian} \cite{HelNie05HES}
  \begin{equ}[e:WL]
    \Delta_V = - \Delta_q + |\nabla V(q)|^2 - \bigl(\Delta V\bigr)(q)\;.
  \end{equ}
  In particular, it was conjectured by Helffer and Nier
  \cite[Conjecture~1.2]{HelNie05HES} that $\CL$ has compact resolvent on
  $L^2(\mu)$ if and only if $\Delta_V$ has compact resolvent on the \textit{flat}
  space $\L^2(\R^n)$. This conjecture has been partially solved in
  \cite{HelNie05HES} (see also \cite{Nie06}) in the sense that one can
  exhibit a large class of potentials for which it holds (loosely,
  $V$ should grow in a sufficiently regular way at infinity).
  Recently, upper as well as lower bounds on the spectral gap of $\CL$
  have been obtained in \cite{HerNie02IHA,Vil04HPI} for potentials $V$
  that satisfy certain growth conditions at infinity.

  All of the results cited above make heavy use of the following two key
  facts:
  \begin{claim}
  \item[a.] There is an explicit formula for the invariant measure of
    \eref{e:dynFP}.
  \item[b.] The friction term $-p\d_p$ acts on all (physical) degrees of
    freedom of the system.
  \end{claim}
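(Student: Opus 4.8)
The plan is simply to verify the two assertions directly, since both are structural features of the particular system \eref{e:dynFP} rather than deep facts, and the cited literature uses them as hypotheses rather than conclusions.

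For fact (a), the claim is that $\mu = \exp(-2H(p,q))\,dp\,dq$ with $H(p,q)=\hf|p|^2+V(q)$ is invariant, and since invariance means $\frac{d}{dt}\int\phi_t\,d\mu = \int(\CL\phi_t)\,\rho\,dp\,dq = 0$ for the density $\rho = \exp(-2H)$ and all test functions $\phi$, it suffices to check that $\rho$ lies in the kernel of the formal $\L^2(dp\,dq)$-adjoint $\CL^*$ of the generator. I would compute $\CL^*$ term by term: the second-order piece $\hf\d_p^2$ is self-adjoint, the friction $-p\cdot\d_p$ has adjoint $n + p\cdot\d_p$, and the Liouville part $p\cdot\d_q - \nabla V(q)\cdot\d_p$ is a divergence-free vector field, hence antisymmetric, so
\[
  \CL^*\psi = \hf\d_p^2\psi + n\,\psi + p\cdot\d_p\psi - p\cdot\d_q\psi + \nabla V(q)\cdot\d_p\psi\;.
\]
Writing $\rho = g(p)h(q)$ with $g(p)=e^{-|p|^2}$ and $h(q)=e^{-2V(q)}$, and using $\d_{p_i}\rho = -2p_i\rho$, $\d_{q_i}\rho = -2(\d_{q_i}V)\rho$ and $\d_p^2\rho = (4|p|^2-2n)\rho$, the Ornstein--Uhlenbeck combination $\hf\d_p^2 + n + p\cdot\d_p$ annihilates the Gaussian factor $g$, while the two remaining first-order terms each contribute $\pm 2(p\cdot\nabla V)\rho$ and so cancel; hence $\CL^*\rho = 0$.

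Fact (b) requires no computation: the momentum equation in \eref{e:dynFP} reads $dp = -\nabla V(q)\,dt - p\,dt + dw(t)$, so every component $p_i$ carries its own damping $-p_i\,dt$ and its own independent noise $dw_i$; equivalently, the friction operator $-p\cdot\d_p = -\sum_i p_i\d_{p_i}$ involves every momentum variable and the diffusion $\hf\d_p^2$ is (uniformly) elliptic in all of them. Since $dq = p\,dt$ slaves each $q_i$ to $p_i$, the noise in fact reaches every physical degree of freedom, which (together with coercivity of $V$) is what drives the hypoelliptic/hypocoercive machinery behind the cited results.

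The only point that needs care is the justification of the integrations by parts used to identify $\CL^*$ and of moving $\frac{d}{dt}$ inside the integral: one must know that $\rho$ decays fast enough at infinity (coercivity of $V$), that $\phi_t$ and its derivatives grow at most polynomially (propagation of moment and regularity bounds along \eref{e:dynFP}), and that the diffusion is non-explosive, so that no boundary terms at infinity survive. Under the standing smoothness and coercivity hypotheses on $V$ these are routine; I do not expect a genuine obstacle, since the content of the statement is precisely that this verification goes through — which is exactly what makes \eref{e:dynFP} the convenient benchmark that the analyses of Helffer--Nier and Villani exploit, and whose failure in the chain setting motivates the rest of the paper.
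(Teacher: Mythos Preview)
Your computations are correct, but note that this passage is not a proposition the paper sets out to prove: the environment \verb|claim| is used throughout the paper as a custom list environment (cf.\ its later appearances as enumerated lists in Sections~\ref{sec:formalCalc} and~\ref{sec:existence}), and here it simply records two structural features of the benchmark model \eref{e:dynFP} that the cited literature exploits. The paper's entire justification of (a) is the remark immediately preceding the list, ``It can easily be checked by inspection that \ldots\ the measure $\mu = \exp(-2H(p,q))\,dp\,dq$ is invariant,'' and (b) is offered as a self-evident reading of the friction term in \eref{e:FP}. Your argument is therefore not a different route so much as an explicit unpacking of what the paper leaves to inspection; the adjoint computation and the cancellation $-p\cdot\d_q\rho + \nabla V\cdot\d_p\rho = 0$ are exactly the content of that inspection, and your remarks on integrability are the natural caveats the paper suppresses.
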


  Both of these facts are violated in the following very popular model
  for heat conduction.  Take a finite collection of $N+1$ anharmonic
  oscillators with nearest-neighbor couplings, that is a classical
  Hamiltonian system with Hamiltonian given by
  \begin{equ}[e:defH]
    H(p,q) = \sum_{i=0}^{N} \Bigl({p_i^2 \over 2} + V_1(q_i) \Bigr) +
    \sum_{i=1}^{N} V_2(q_i - q_{i-1})\;.
  \end{equ}
  Here, the potential $V_2$ is the interaction potential between
  neighboring oscillators, whereas $V_1$ is a \textit{pinning potential}. This
  system is then put in contact with two heat baths at
  \textit{different} temperatures $T_0$ and $T_{N}$. We model the
  interaction with the heat baths by the standard Langevin dynamics, so
  that the equations of motion of our system are given by
  \begin{equs}
    dp_0 &= - \gamma_0 p_0\,dt - V_1'(q_0)\,dt - V_2'(q_0 - q_1)\,dt +
    \sigma_0\,dw_0\\
    dp_i &= -V_1'(q_i)\,dt - V_2'(q_i - q_{i-1})\,dt - V_2'(q_i -
    q_{i+1})\,dt  \label{e:main}\\
    dp_{N} &= - \gamma_{N} p_{N}\,dt - V_1'(q_{N})\,dt -
    V_2'(q_{N} - q_{N-1})\,dt
    + \sigma_{N}\,dw_{N}\\
    dq_j &= p_j\,dt\;.
  \end{equs}
  Here, we set $\sigma_i^2 = 2\gamma_i T_i$,
  the index $i$ runs from $1$ to $N-1$ and the index $j$ runs from
  $0$ to $N$.
  As described in \cite{LebLuc}, the rigorous analysis of this model and in particular the
  derivation of Fourier's law (or the proof of its breakdown) is an outstanding mathematical
  problem of great interest to the applied community.

  If $T_0 = T_{N} = T$, then one can check as before that this set of
  equations has a unique invariant measure, which has density
  $\exp(-H(p,q)/T)$ with respect to the Lebesgue measure, where $H$ is
  the Hamiltonian from \eref{e:defH}.  When the two temperatures $T_0$
  and $T_{N}$ are different however, much less is known.  In particular,
  as we will see immediately, even the \textit{existence} of an
  invariant probability measure is an open problem in some cases as
  simple as $V_1(q) = q^4$ and $V_2(q) = q^2$.

  This model has been the subject of many studies, both from a numerical
  and from a theoretical perspective. The purely harmonic case has been solved 
  explicitly in \cite{RLL}. Though no explicit solution is known in the anharmonic case,
  a wealth of numerical experiments exist, see for example \cite{LL} and references 
  therein. 
  Since we will mainly focus on the
  theoretical aspects of the model, we refer to
  \cite{EPR99EPN,EPR99NES}, which seems to be one of the first rigorous
  studies of the anharmonic case.  It was shown in
  \cite{EckHai00NES,EckHai03SPH} that if $V_1(q)$ and $V_2(q)$ behave
  approximately like $|q|^{a_1}$ and $|q|^{a_2}$ respectively at
  infinity then, provided that $a_2 > a_1 > 2$, there exists a unique
  invariant measure for \eref{e:main}. The statement that was proved was
  actually stronger than that, namely it was shown that the generator
  $\CL$ of \eref{e:main} has compact resolvent in every space of the
  form $L^2(\exp(-H(p,q)/T)\,dp\,dq)$ with $T > \max\{T_0,T_{N}\}/2$.  In
  \cite{ReyTho02ECT}, it was also shown, using entirely probabilistic
  rather than functional-analytic techniques, that the condition $a_2
  \ge a_1 \ge 2$ is sufficient for the existence and uniqueness of an
  invariant measure for \eref{e:main}. Furthermore, the compactness of
  the corresponding semigroup in some weighted $L^\infty$ space was also
  proved there.

  This left open the case $a_2 < a_1$ which is the subject of the
  present study. To our knowledge, no previous rigorous results exist in this case,
  although some interesting theory has been developed recently in \cite{BricKup,lefevere-2005}.
  At first sight, one might think that there is no
  \textit{a priori} reason for the behavior of \eref{e:main} to differ
  in any essential way from the case $a_2 \ge a_1$ where spectral gap
  results are known.  Such wishful thinking turns out to be overly
  optimistic. Even in the simplest possible scenario, that is when
  $V_1(q) = q^4$ and $V_2(q) = q^2$, we will show in
  Theorem~\ref{theo:threeosc} below that the compactness property of the
  resolvent of $\CL$ is destroyed as soon as $N+1 \geq 3$. Furthermore, when
  $N+1 \geq 5$, it will be shown in Theorem~\ref{theo:compactchain} that the
  essential spectrum of $\CL$ (always in a weighted $L^2$ space of the
  type considered before) extends all the way to $0$. These negative
  results hold even in the case where $T_0 = T_{N} = T$, showing that
  having the friction acting on all physical degrees of freedom is a
  crucial assumption for the Helffer-Nier conjecture to hold.

  The reason why the behavior of \eref{e:main} changes so drastically
  when $a_2 < a_1$ can be understood heuristically by the appearance of
  \textit{breathers} (see for example \cite{Breathers}). Breathers are
  dynamically stable, spatially localized, periodic orbits that arise in
  the noise-free  translation invariant ($i = -\infty,\ldots, +\infty$) version of \eref{e:main}.
  Good approximations to these orbits persist in \eref{e:main},
  especially if $N$ is large.  It is therefore possible to put the
  system into a state where most of its energy is localized in a few
  oscillators located in the middle of the chain. On the other hand,
  energy can be dissipated only through the terms $-\gamma p$ appearing
  in the equations for the first and the last oscillator.  Therefore,
  one expects the energy of the system to decay extremely slowly. The
  appearance of breathers can be proved in the case where the strength
  of the nearest-neighbor coupling is much weaker than the strength of
  the pinning potential. At high energies, this is precisely the case
  when $a_2 < a_1$.

  This discussion shows that even the mere \textit{existence} of an
  invariant measure is a non-trivial problem in this model unless $T_0 =
  T_{N}$, in which case one can check explicitly that the usual
  Boltzmann-Gibbs distribution is invariant.  This differs from many other systems
   in that the \textit{existence} of  an invariant
  measure, not its \textit{uniqueness}, is difficult.  
  In fact, the
  uniqueness of an invariant measure for a chain of arbitrary length
  follows quickly from the hypoellipticity of the generator and the Hamiltonian
  structure once the
  existence of an invariant measure is established.

  We are at the moment
  unable to provide a general proof that shows the existence of an
  invariant measure for a chain of arbitrary length. However, in the
  case of a chain comprising of $3$ oscillators, we show in
  Theorem~\ref{theo:wholething} below that there exists a unique
  invariant measure, provided that the coupling potential is harmonic
  and the pinning potential is homogeneous of sufficiently high degree.
  This result is proven by first obtaining an effective dynamics when
  most of the energy is concentrated in the central oscillator. This
  effective dynamics is then used to construct a Lyapunov function whose
  level sets are compact and hence, by a variation on the classical
  Kryloff-Bogoliouboff method, implies the existence of an invariant measure.

  The remainder of this article is organized as follows. In
  Section~\ref{sec:formalCalc}, we give a formal calculation that shows
  how it is possible to relate the spectral properties of $\CL$ to the
  scaling properties of the potentials $V_1$ and $V_2$.  The results
  given by these formal calculations are then compared to numerical
  simulations. We proceed in Section~\ref{sec:negative} to the
  proof of the \textit{negative} results concerning the lack of compactness and/or 
  of a spectral gap for $\CL$. In  Section~\ref{sec:effective} we derive effective equations of motion
  for the system of three oscillators in the regime when a breather is present. Here, we
  make heavy use of the compensator techniques from averaging / homogenization
  theory \cite{BLP}. Finally, we show in
  Section~\ref{sec:existence} that in the simplest case of a chain of
  three oscillators with harmonic coupling potentials, one can show the
  existence of an invariant measure for any degree of homogeneity
  greater than $2$ for the pinning potential. Perhaps more interesting
  then this last result is the method of proof. We derive as system of
  effective equations and prove their accuracy that when the energy of
  middle oscillator is large. These effective equations give insight
  into the mechanism of energy dissipation and are used to construct a
  Lyapunov function.

  \begin{acknowledge}
    We would like to thank the University of Bonn (and the then-nascent
    Hausdorff institute) for providing the pleasant working environment
    where this project began. MH would like to thank Jean-Pierre Eckmann and
    Luc Rey-Bellet for many long discussions about this problem. We would also like to thank Arnaud Guillin
    for his encouragements which aided in completing this text in a
    (more or less) timely manner.  MH acknowledges support from EPSRC
    fellowship EP/D071593/1. JCM acknowledges the support of the Sloan
    Foundation and National Science Foundation through a PECASE award
    (DMS 04-49910) and a standard award (DMS 06-16710). 
  \end{acknowledge}

  \section{A formal calculation}
  \label{sec:formalCalc}
  In this section, we first perform a formal calculation that allows us
  to get a feeling of the speed at which energy is extracted from such a
  system. Consider the simplest non-trivial case, that is when $N+1 = 3$.
  In order to keep things simple, we furthermore assume that the
  coupling potential $V_2$ is quadratic: $V_2(q) = q^2/2$ and that the
  pinning potential is homogeneous of degree $2k$: $V_1(q) =
  |q|^{2k}/(2k)$ for some real number $k > 1$. The equations of motion
  for the system of interest are thus given by
  \begin{equs}[e:threeosc]
    dp_0 &= - \gamma_0 p_0\,dt - q_0 |q_0|^{2k-2}\,dt - (q_0 - q_1)\,dt
    + \sqrt{2\gamma_0 T_0}\,dw_0\\
    dp_1 &=  - q_1 |q_1|^{2k-2}\,dt - (2q_1 - q_0 - q_2)\,dt\\
    dp_2 &= - \gamma_2 p_2\,dt - q_2 |q_2|^{2k-2}\,dt - (q_2 - q_1)\,dt
    + \sqrt{2\gamma_2 T_2}\,dw_2\\
    dq_j &= p_j\,dt\;.
  \end{equs}
  Let us first have a look at the motion of the middle oscillator by
  itself, \ie at the solution of
  \begin{equ}[e:simple]
    dp = -q |q|^{2k-2}\,dt \;,\qquad dq= p\,dt\;.
  \end{equ}
  It is easy to see that this equation is invariant under the
  substitution
  \begin{equ}[e:rescaled]
    q(t) = E^{1\over 2k}\, \tilde q(E^{\alpha}t)\;,\quad p(t) = E^{1 \over
      2}\,
    \tilde p(E^{\alpha}t)\;,
  \end{equ}
  with $\alpha = {1\over 2} - {1\over 2k}$.  Let us now denote by
  $(\tilde p, \tilde q)$ the solution to \eref{e:simple} with initial
  condition $(\sqrt{2},0)$, so that $(p,q)$ as given by
  \eref{e:rescaled} is the (unique up to a phase) solution to
  \eref{e:simple} at energy $E$. Since the variables $p$ and $q$ are
  assumed to be one-dimensional, the solution $(\tilde p, \tilde q)$ is
  periodic, say with period $\tau$.

  Consider now the equation for the left oscillator, into which we
  substitute the (approximate) solution to the motion of the middle
  oscillator that we just found:
  \begin{equs}
    dp_0 &= - \gamma_0 p_0\,dt - q_0 |q_0|^{2k-2}\,dt + q_0\,dt +
    \sqrt{2 \gamma_0 T_0} dw_0 + E^{1\over 2k}\,\tilde q(E^{\alpha}t)\,dt\\
    dq_0 &= p_0\,dt \label{e:firstosc}\;.
  \end{equs}
  If $E$ is large compared to the typical size of $(p_0, q_0)$ we expect
  that, up to lower order corrections, the solution to this equation
  behaves like the superposition of the solution $(\bar p_0, \bar q_0)$
  to \eref{e:firstosc} with the exogenous forcing $\tilde q \equiv 0$
  and of a highly oscillatory term of the form
  \begin{equ}[e:osccontr]
    \tilde p_0 = E^{\frac1{2k} - \alpha}\, P(E^{\alpha}t)\;,\qquad \dot P =
    \tilde q\;.
  \end{equ}
  By symmetry, the same applies to the right oscillator $(p_2, q_2)$.
  Applying now It\^o's formula to the total Hamiltonian $H$ for
  \eref{e:threeosc} yields
  \begin{equ}[e:behaveHFirst]
    {d \over dt} \E H(t) = \E \bigl(\gamma_0 T_0 + \gamma_2 T_2 -
    \gamma_0 p_0^2(t) - \gamma_2 p_2^2(t)\bigr)\;.
  \end{equ}
  In light of the above discussion, we take $(p_i,q_i)\approx(\bar p_i +
  \tilde p_i,\bar q_i + \tilde q_i)$ for $i=0,2$ so $p_i^2 = \bar p_i^2
  + 2 \bar p_i \tilde p_i + \tilde p_i^2$. From its definition, observe
  that $\E \bar p_i^2(t) \to T_i$ as $t \to \infty$ for $i=0,2$.
  Furthermore, when the energy $E$ of the middle oscillator is large, we
  expect the product $\bar p_0(t) \tilde p_0(t)$ to average out to a small quantity
  when integrated over time periods much larger than $E^{-\alpha}$ due
  to the highly oscillating, mean zero $\tilde p_0(t)$. Applying this
  line of reasoning to  \eqref{e:behaveHFirst} shows that, in the regime where most of the
  energy of the system is concentrated in the middle oscillator, one
  expects to have
  \begin{equ}[e:behaveH] 
    {d \over dt}  H(t) \approx - (\gamma_0 \E \tilde p_0^2+ \gamma_2\E
    \tilde p_2^2) \approx - (\gamma_0 + \gamma_2) \kappa_k H(t)^{{2\over k} -
      1}\;,
  \end{equ}
  where $\kappa_k$ is the variance of the function $P$ introduced in
  \eref{e:osccontr}, that is
  \begin{equ}[e:defkappak]
    \kappa_k = {1\over \tau}\int_0^\tau P^2(s)\,ds - {1\over \tau^2}
    \Bigl(\int_0^\tau P(s)\,ds\Bigr)^2\;,\qquad \dot P(t) = \tilde
    q(t)\;.
  \end{equ}
  The dependence on $k$ comes from the fact that $\tilde q$ is the position
  of the free oscillator with potential ${|q|^{2k}\over 2k}$.
  Actually, one expects this behavior to be correct even in a chain
  with more than just three oscillators.  If the chain has $N+1 = 2n+1$
  oscillators and most of the energy is stored in the middle oscillator,
  one expects the motion of the endpoints to be given by the
  superposition of a \textit{slow} motion and a highly oscillatory
  \textit{fast} motion $\tilde p_0$ with a scaling of the type
  \begin{equ}
    \tilde p_0 = E^{\frac1{2k} - (2n-1)\alpha} \,\hat P(E^\alpha t)\;,
  \end{equ}
  where $\hat P$ is the (unique) periodic function such that
  ${d^{2n-1}\hat P \over dt^{2n-1}} = \tilde q$ and such that the
  integral of $\hat P$ over one period vanishes. (This is because we
  assume that the nearest-neighbor couplings are linear.) This suggests
  that in the general case of a chain of length $N+1 = 2n+1$, the energy
  of the system decreases like
  \begin{equ}[e:behaveHlong] {d \over dt} H(t) \approx - (\gamma_0 +
    \gamma_{N}) \kappa_{k,n} H(t)^{{2n\over k} + 1 - 2n}\;,
  \end{equ}
  for $\kappa_{k,n}$ the variance of $\hat P$ (of course, $\kappa_{k,1} = \kappa_k$ as defined above). 
  If $N+1 = 2n$ is even, the worst-case
  scenario is obtained by storing most of the energy in one of the two
  middle oscillators. Since their distance to the boundary is the same
  as the distance of the middle oscillator to the boundary in the chain
  of length $2n+1$, we expect the rate of decay of the energy to be
  similar in both cases.

  \begin{remark}
    If the coupling potential is not quadratic but homogeneous of degree
    $2\ell$, one can still perform a calculation similar to the one we
    just did, but one has to be more careful. When looking at the
    influence of $q_i$ on $p_{i-1}$ say, one should take into account
    whether the fast oscillations of $q_i$ are of order smaller or
    larger than $1$. If they are of order smaller than one, one can
    linearize the coupling potential. If they are or order larger than
    one, one should multiply them by the scaling exponent arising in the
    coupling. Suppose as before that the chain contains $N+1 = 2n+1$
    oscillators and that most of it's energy is stored in the middle
    oscillator (oscillator $n$).  Assume that the amplitudes in the fast
    oscillations of $p_i$ and $q_i$ scale like $E^{\beta_i}$ and
    $E^{\beta_i-\alpha}$ respectively. (Recall that $\alpha = {1\over 2}
    - {1\over 2k}$ is the exponent giving the period of the
    oscillations.)

  One then has $\beta_n = 1/2$. The values of $\beta_i$ with $i < n$ are given by the
  following recursion formula:
  \begin{equ}
    \beta_i = \left\{\begin{array}{rl} (2\ell-1)(\beta_{i+1} - \alpha) -
        \alpha & \text{if $\beta_{i+1} > \alpha$} \\
        \beta_{i+1} - 2\alpha & \text{if $\beta_{i+1} \le \alpha$}
      \end{array}\right.
  \end{equ} 
  Using this formula, one can then compute $\gamma = 2\beta_0$.  Note
  that if $\ell = 1$, one obtains $\beta_0 = {1\over 2} - 2n\alpha$
  which agrees with the value for $\gamma$ obtained previously.
  \end{remark}

  \begin{remark}
    One would expect these scaling relations to hold at high energies,
    even if the potentials are not exactly homogeneous. One can then
    still perform most of the analysis presented below by splitting the
    right-hand side of the equations into a homogeneous part and a
    remainder term and by assuming that the remainder gets small (in a
    suitable sense) at high energies.
  \end{remark}

  \subsection{Numerical simulations}

  In this section, we show that there is a surprisingly good
  agreement, even over extremely long time intervals, between numerical
  simulations of \eref{e:threeosc} and the predictions \eref{e:behaveH}
  and \eref{e:behaveHlong}.  In order to compare the two, we introduce
  the function
  \begin{equ}
    H^{(k,n)}(p,q) = H^\beta(p,q)\;,\qquad \beta = 2n \Bigl(1 - {1\over
      k}\Bigr)\;.
  \end{equ}
  Plugging this into \eref{e:behaveHlong}, we get the prediction
  \begin{equ}[e:prediction]
    H^{(k,n)}(t) \approx H^{(k,n)}(0) - (\gamma_0 +
    \gamma_2)n\bigl(2-{\textstyle{2\over k}}\bigr) \kappa_{k,n}\;.
  \end{equ}
  A straightforward numerical simulation (essentially integration of the free equation) 
  furthermore allows to compute
  the values of $\kappa_{k,n}$ to very high precision. For example, we obtain
  \begin{equ}[e:valKn]
  \kappa_{2,1} \approx 0.63546991\;,\qquad
  \kappa_{3,1} \approx 0.42363371\;.
  \end{equ}
  We performed numerical simulations for the cases $k=2$ and $k=3$. Both
  simulations were performed using a modification of the classical
  St\"ormer-Verlet method (see for example \cite{GNI}) to take into
  account for the friction and the noise. The modification was done in
  such a way that the resulting method is still of order two.

  The simulation for $k = 2$ was performed at a stepsize $h = 10^{-3}$,
  and the simulation for $k = 3$ was performed at a stepsize $h = 4\cdot
  10^{-4}$. Both simulations used $\gamma_0 = \gamma_2 = 1.3$ and $T_0 =
  T_2 = 1$.

  \begin{center}
    \mhpastefig(figures/){figure4}
    \qquad
    \mhpastefig(figures/){figure6}
  \end{center}

  The prediction obtained from \eref{e:prediction} with the values \eref{e:valKn}
  is shown as a dashed
  line on these figures, but it fits the numerics so well that it is
  nearly invisible. We emphasize that  there were \textit{no} free parameters in the fit, all constants
  are predicted by the theory.
  Note that the timescale in the second picture differs by a factor $40$ from the
  timescale in the first picture.

  \subsection{Comparison to a gradient diffusion}

  In this section, we argue that the energy decay rate predicted by
  \eref{e:behaveHlong} also yields a prediction on the qualitative
  nature of the spectrum of the generator of \eref{e:threeosc} in the
  weighted space $L^2(\mu)$, where $\mu$ is the invariant measure.

  The idea is to model the behavior of the energy $H(t)$ by a
  one-dimensional diffusion of the type
  \begin{equ}[e:oneD]
    dx = \bigl(b(x)+ a'(x)a(x)\bigr)\, dt + a(x)\, dw(t)\;.  
  \end{equ}
  It is well-known that the invariant measure for \eref{e:oneD} is given by
  \begin{equ}
    \mu(dx) = Z^{-1} \exp \Bigl(2\int_0^x {b(t) \over
      a^2(t)}\,dt\Bigr)\;.
  \end{equ}
  Since we expect the invariant measure of \eref{e:threeosc} to behave
  roughly like $\exp(-\beta H)\,dH$ (up to lower-order corrections), we
  should choose $a$ such that $a^2(x) \approx |b(x)|$ for large $x$.
  Combining this with \eref{e:behaveHlong}, we obtain the model
  \begin{equ}[e:model]
    b(x) = - x^{\gamma} \;,\qquad a(x) = x^{\gamma/2}\;,\qquad \gamma =
    {2n\over k} + 1-2n\;,
  \end{equ}
  which has $2\exp(-2x)\,dx$ as its invariant measure (we restrict
  ourselves to the half-space $x \ge 0$). Note that since $k > 1$ (the
  pinning potential grows faster than the coupling potential), one has
  always $\gamma < 1$.

  With the choice \eref{e:model}, the generator for \eref{e:oneD} is then given by
  \begin{equ}
    \bigl(\CL f\bigr)(x) = {1\over 2} \d_x \bigl(x^\gamma \d_x
    f\bigr)(x) - x^{\gamma} \d_x f(x)\;.
  \end{equ}
  Since the operator $(Kf)(x) = f(x) \exp(-x)$ is a unitary operator
  from the weighted space $L^2(\exp(-2 x)\,dx)$ to the flat $L^2$ space,
  the operator $\CL$ is unitarily equivalent to the operator $\CL_1$ on
  $L^2$ given by
  \begin{equ}
    \bigl(\CL_1 f\bigr)(x) = {1\over 2} \d_x \bigl(x^\gamma \d_x
    f\bigr)(x) - {x^{\gamma -1}\over 2}(x - \gamma)\;.
  \end{equ}
  At this point, we recall that if $\phi\colon \R_+ \to \R_+$ is a
  strictly increasing differentiable function with $\phi(0) = 0$ and
  $\lim_{x \to \infty} \phi(x) = \infty$, then the operator
  \begin{equ}
  \bigl(U_\phi f\bigr)(x) = {f(\phi(x)) \sqrt{\phi'(x)}}
  \end{equ}
  is a unitary operator from $L^2(\R_+)$ to itself which furthermore
  satisfies the identity $U_\phi^{-1} = U_{\phi^{-1}}$. Under
  conjugation with $U_\phi$, we see that one has the unitary
  equivalences
  \begin{equ}
    \d_x \approx {1\over \phi'(x)}\d_x - {\phi''(x) \over 2
      (\phi'(x))^2} \;,\qquad V(x) \approx V(\phi(x))\;.
  \end{equ}
  Choosing $\phi(x) = x^{2/(2-\gamma)}$, we see that $\CL_1$ is
  unitarily equivalent to the operator $\CL_2$ given by
  \begin{equs}
    \bigl(\CL_2 f\bigr)(x) &= {1\over2} \Bigl({2-\gamma\over 2}\d_x +
    {\gamma \over 4x}\Bigr) \Bigl({2-\gamma\over 2}\d_x - {\gamma \over
      4x}\Bigr) - {x^{2\gamma-2\over 2-\gamma}\over 2} \bigl(x^{2\over
      2-\gamma} -
    \gamma\bigr) \\
    & = {(2-\gamma)^2 \over 8}\d_x^2 + \Bigl({\gamma(2-\gamma) \over 16}
    - {\gamma^2\over 32}\Bigr) x^{-2} - {x^{2\gamma-2\over
        2-\gamma}\over 2} \bigl(x^{2\over 2-\gamma} - \gamma\bigr)
  \end{equs}
  This is a Schr\"odinger operator with a potential that behaves at
  infinity like $x^{2\gamma/(2-\gamma)}$.  It follows that
  \begin{claim}
  \item If $1 > \gamma > 0$, then the operator $\CL_2$ has compact resolvent.
  \item If $\gamma = 0$, the operator $\CL_2$ does not have compact
    resolvent, but it still has a spectral gap (since one can see that
    its essential spectrum is the interval $[1/2,\infty)$).
  \item If $\gamma < 0$, then $0$ belongs to the essential spectrum of $\CL_2$.
  \end{claim}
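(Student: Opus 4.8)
The plan is to read all three assertions off the half-line Schr\"odinger operator underlying $\CL_2$. Since the generator is dissipative we work with $-\CL_2$, which we write as $-\CL_2 = -c\,\d_x^2 + W$ on $L^2(\R_+)$ with $c = (2-\gamma)^2/8 > 0$ and, from the displayed formula for $\CL_2$,
\begin{equ}
  W(x) = {\gamma(3\gamma-4)\over 32}\,x^{-2} + {1\over 2}\,x^{2\gamma\over 2-\gamma} - {\gamma\over 2}\,x^{2\gamma-2\over 2-\gamma}\;.
\end{equ}
Thus $W$ is smooth on $(0,\infty)$, at worst inverse-square at the origin, and at $+\infty$ dominated by $\tfrac12 x^{2\gamma/(2-\gamma)}$; in particular $W(x)\to +\infty$ when $0<\gamma<1$, $W\equiv\tfrac12$ when $\gamma=0$, and $W(x)\to 0$ when $\gamma<0$. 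At the endpoint $x=0$ one has a regular singular point: for $\gamma\le 0$ the inverse-square term is repulsive, while for $0<\gamma<1$ a short computation gives $\gamma(4-3\gamma)/32 \le c/4$ (with equality exactly at $\gamma=1$), i.e.\ the coefficient lies in the Hardy range, so the Friedrichs extension of $-\CL_2$ is bounded below. Since different self-adjoint realizations differ by a finite-rank change of resolvent, and perturbations supported on compacts do not move the essential spectrum, none of what follows depends on the behaviour of $W$ near $0$.

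For $0<\gamma<1$ I would invoke the classical criterion that a Schr\"odinger operator bounded below whose potential tends to $+\infty$ has purely discrete spectrum. This follows, e.g., from Persson's formula
\begin{equ}
  \inf\sigma_{\mathrm{ess}}(-\CL_2) = \lim_{R\to\infty}\inf\bigl\{\langle u,-\CL_2 u\rangle : u\in C_c^\infty\bigl((R,\infty)\bigr),\ \|u\|_{L^2}=1\bigr\}\;,
\end{equ}
whose right-hand side is $+\infty$ here because $\inf_{x>R}W(x)\to+\infty$. Hence $\sigma_{\mathrm{ess}} = \emptyset$ and $-\CL_2$, so also $\CL_2$, has compact resolvent.

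For $\gamma=0$ one has exactly $-\CL_2 = -\tfrac12\d_x^2 + \tfrac12$, whose essential spectrum is $[\tfrac12,\infty)$ (unchanged under the choice of endpoint condition at $0$, by the remark above). On the other hand $\CL_2$ is unitarily equivalent to the self-adjoint generator of \eref{e:oneD} on $L^2(\mu)$, which annihilates constants and hence has $0$ as an eigenvalue; so $0\in\sigma(\CL_2)$, isolated from $\sigma_{\mathrm{ess}}$. Solving $-\tfrac12 f''+\tfrac12 f=\lambda f$ on $\R_+$ and imposing the inherited (Robin) endpoint condition shows there is no eigenvalue in $(0,\tfrac12)$, so $-\CL_2$ has a genuine spectral gap of width $\tfrac12$.

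For $\gamma<0$ the potential $W$ is bounded on $[1,\infty)$ and tends to $0$ at $+\infty$, and I would prove $0\in\sigma_{\mathrm{ess}}(-\CL_2)$ with a singular Weyl sequence: fix $\chi\in C_c^\infty(0,\infty)$, $\|\chi\|_{L^2}=1$, and put $u_n(x)=n^{-1/2}\chi\bigl((x-x_n)/n\bigr)$ with $x_n\to\infty$; then $\|u_n\|_{L^2}=1$, $u_n\rightharpoonup 0$, $\|c\,\d_x^2 u_n\|_{L^2}=c\,n^{-2}\|\chi''\|_{L^2}\to 0$, and $\|Wu_n\|_{L^2}\le\sup_{\mathrm{supp}\,u_n}|W|\to 0$, so $\|(-\CL_2)u_n\|_{L^2}\to 0$. (Equivalently, after excising a neighbourhood of the origin, multiplication by $W$ is $-\d_x^2$-compact, and Weyl's theorem gives $\sigma_{\mathrm{ess}}(-\CL_2)=[0,\infty)$.) The whole argument is standard one-dimensional spectral theory; the only points demanding a little care are the singular endpoint at $x=0$ — handled as above — and, in the borderline case $\gamma=0$, checking that the unitary conjugations preserve the eigenvalue $0$ and that the gap is not filled by another eigenvalue.
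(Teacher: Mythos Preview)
Your argument is correct. The paper itself does not prove these claims; it simply cites Reed--Simon IV, adding only the parenthetical remark that the essential spectrum is $[1/2,\infty)$ when $\gamma=0$. What you have written is the standard one-dimensional Schr\"odinger-operator reasoning that reference would supply: a confining potential forces compact resolvent for $0<\gamma<1$ (via Persson's formula), direct identification of $-\tfrac12\d_x^2+\tfrac12$ when $\gamma=0$, and a Weyl sequence escaping to infinity when $\gamma<0$. Your handling of the singular endpoint $x=0$ --- the Hardy-range check $\gamma(4-3\gamma)/32\le c/4$, and the observation that the choice of self-adjoint realization is a finite-rank perturbation and hence does not move the essential spectrum --- is more careful than anything in the paper. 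Your verification that the Robin condition inherited from the Neumann problem for the original diffusion generator on $L^2(\mu)$ admits no eigenvalue in $(0,\tfrac12)$ is a genuine addition: the paper only asserts the spectral gap, and strictly speaking knowing $\sigma_{\mathrm{ess}}=[1/2,\infty)$ alone does not rule out an isolated eigenvalue between $0$ and $1/2$.
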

  See for example \cite{RS78IV} for a proof.  It is then a natural
  conjecture that the spectrum of the generator of \eref{e:main} on the
  $L^2$ space weighted by the invariant measure has the same behavior
  (as a function of the parameter $\gamma = {2n \over k} + 1-2n$) as
  just described. The next section is a step towards a proof of this
  conjecture.

  \section{Lack of spectral gap}
  \label{sec:negative}

  The aim of this section is to obtain information on the location of
  the essential spectrum of the generator $\CL$ for \eref{e:main}. This
  will be accomplished by using ideas from averaging/homogenization
  theory to build a set of approximate eigenvectors. Since $\CL$ is not
  self-adjoint, there are various possible definitions of its essential
  spectrum (see \cite{EdmEvan} or \cite{GustWeid} for a survey). 
  We choose to retain the following definition:
  \begin{definition}
  For $T$ a closed densely defined operator on a Banach space $\CB$, the essential spectrum
  $\sigma_e(T)$ is defined as the set of all values $\lambda \in \C$ such that 
  $T - \lambda$ is not a semi-Fredholm operator.
  \end{definition}

  The set $\sigma_e(T)$ is contained in the corresponding sets for all other
  alternative definitions of the essential spectrum appearing in the abovementioned works.
  In this sense, the statement ``$\lambda \in \sigma_e(T)$'' used here is the strongest.
  In  particular, it is contained in the set 
  \begin{equ}
  \bigcap_{K \in \CK(\CB)} \sigma(T + K)\;,
  \end{equ}
  where $\CK(\CB)$ denotes the ideal of all compact operators on $\CB$ and $\sigma(T)$ denotes
  the spectrum of an operator $T$.

  We will make use of the following generalization of Weyl's criterion \cite{RS78IV}, 
  which gives a useful criterion for
  identifying the essential spectrum:

  \begin{proposition}\label{prop:compact}
    Let $T$ be a closed densely defined operator on a Banach space $\CB$.
   For any $\lambda\in\C$, $\lambda \in \sigma_e(T)$ if and only if there exist
    sequences $\phi_n$ and $\phi_n^*$ of elements in $\CD(T) \subset \CB$ and  $\CD(T^*) \subset \CB^*$ respectively with
    $\|\phi_n\| = \|\phi_n^*\| = 1$ and having no convergent subsequence such that $\lim_{n \to
      \infty} \|T\phi_n - \lambda \phi_n\| = \lim_{n \to
      \infty} \|T^*\phi_n^* - \bar \lambda \phi_n^*\| = 0$. 
  \end{proposition}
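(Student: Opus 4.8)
\medskip\noindent\textbf{Proof proposal.}
The plan is to derive the proposition from the classical singular-sequence characterisation of the class $\Phi_+$ of \emph{upper} semi-Fredholm operators (those with closed range and finite-dimensional kernel), applied once to $T-\lambda$ acting on $\CB$ and once to the adjoint $T^*-\bar\lambda$ acting on $\CB^*$, the two halves being linked by the closed range theorem.

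First I would isolate the following elementary lemma, valid for an \emph{arbitrary} closed operator $S$ on a Banach space, without any density assumption on its domain: $S$ has closed range and finite-dimensional kernel if and only if there is no sequence $\phi_n\in\CD(S)$ with $\|\phi_n\|=1$, admitting no convergent subsequence, and such that $\|S\phi_n\|\to 0$. To prove the ``only if'' part I would use that a finite-dimensional kernel is complemented by a closed subspace $M$; since $\CD(S)\cap M$ is complete in the graph norm and $S$ maps it continuously and bijectively onto the range, which is closed hence complete, the open mapping theorem gives an estimate $\|\phi\|\le C\|S\phi\|$ on $\CD(S)\cap M$. Decomposing a hypothetical singular sequence along $\ker S\oplus M$, its $M$-component then tends to zero in norm while its kernel component lies in a fixed finite-dimensional space, hence has a convergent subsequence --- contradicting the assumption. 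For the ``if'' part I would argue by contraposition: if $\dim\ker S=\infty$, Riesz's lemma yields an almost-orthonormal, hence singular, sequence inside $\ker S$; if instead the range of $S$ is not closed, the injection induced by $S$ on $\CD(S)/\ker S$ fails to be bounded below, so one can pick representatives $\psi_n$ with $\|\psi_n\|$ bounded away from $0$ and $\infty$, $\operatorname{dist}(\psi_n,\ker S)\ge\tfrac12$, and $\|S\psi_n\|\to0$; closedness of $S$ forbids a convergent subsequence, since any limit would solve $S\chi=0$ and thus lie in $\ker S$. Normalising produces the required singular sequence.

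Next I would bring in duality. Because $T$ is densely defined, $T^*$ is a weak-$*$-closed, in particular norm-closed, operator on $\CB^*$, and the closed range theorem says that $T-\lambda$ has closed range if and only if $T^*-\bar\lambda$ does, in which case $\operatorname{codim}\operatorname{ran}(T-\lambda)=\dim\ker(T^*-\bar\lambda)$. Hence $T-\lambda$ is \emph{lower} semi-Fredholm on $\CB$ (closed range and finite codimension) precisely when $T^*-\bar\lambda$ lies in $\Phi_+(\CB^*)$. Since $T-\lambda$ is semi-Fredholm by definition iff it is upper or lower semi-Fredholm, we get $\lambda\in\sigma_e(T)$ iff $T-\lambda\notin\Phi_+(\CB)$ and $T^*-\bar\lambda\notin\Phi_+(\CB^*)$. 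Applying the lemma to $S=T-\lambda$ on $\CB$ and to $S=T^*-\bar\lambda$ on $\CB^*$ turns these two failures, respectively, into the existence of the sequences $\phi_n\in\CD(T)$ and $\phi_n^*\in\CD(T^*)$ with the stated properties, and the converse implication is the same chain read backwards; this is exactly the assertion.

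The whole content sits in the lemma, and it is standard Fredholm theory (see e.g. \cite{EdmEvan,GustWeid}), so I do not anticipate a genuine obstacle. The one point I would treat carefully is the construction of a singular sequence when the range is not closed --- passing to the quotient $\CD(S)/\ker S$ with the graph norm and checking that the chosen representatives remain in $\CD(S)$ so that the open-mapping estimate applies --- together with the fact that, in a non-reflexive space, $T^*$ need not be densely defined; this last point is exactly why the lemma was stated for closed operators without a density hypothesis, while the closed-range and codimension facts used afterwards only require $T$ itself to be densely defined, which we have.
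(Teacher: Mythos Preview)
Your argument is correct and complete. The lemma characterising $\Phi_+$ by the absence of singular sequences is proved cleanly in both directions, the duality step via the closed range theorem is handled properly (including the observation that codimension of the range matches the dimension of the adjoint kernel once ranges are closed), and you were right to flag and resolve the issue that $T^*$ need not be densely defined on a non-reflexive $\CB^*$: your lemma uses only closedness, which $T^*$ always has.

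By way of comparison, the paper does not actually prove this proposition; it simply points to \cite[Theorem~9.1.3]{EdmEvan}, \cite[Theorem~IV.5.11]{Kato80}, and the original article \cite{Wolf}. What you have written is essentially the content of those references unpacked into a self-contained argument, so you have supplied strictly more than the paper does. The only cosmetic suggestion is that, in the non-closed-range half of the lemma, it is slightly cleaner to first reduce to the case $\dim\ker S<\infty$ (the other case being already covered by Riesz) and then work on a closed complement $M$ of $\ker S$, rather than on the quotient: one finds $m_n\in\CD(S)\cap M$ with graph norm one and $\|Sm_n\|\to 0$, and closedness of $S$ together with $\ker S\cap M=\{0\}$ immediately rules out a convergent subsequence. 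This is equivalent to what you wrote but avoids having to track representatives.
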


  \begin{proof}
  See \cite[Theorem~9.1.3]{EdmEvan} and \cite[Theorem~IV.5.11]{Kato80}, or the original work \cite{Wolf}.
  \end{proof}

  There are situations in which, even though it is difficult to locate the essential spectrum precisely,
  one can nevertheless exhibit a sequence $\phi_n$ as above such that $T\phi_n$ remains bounded.
  In that case, one has:

  \begin{proposition}\label{prop:compact2}
  Let $T$ be a closed densely defined operator on a Banach space $\CB$ with non-empty resolvent set.
  If there exists
  a  sequence $\phi_n$ of elements in $\CD(T) \subset \CB$ with
    $\|\phi_n\| = 1$ and having no convergent subsequence such that $\limsup_{n \to
      \infty} \|T\phi_n\| < \infty$, then $T$ does not have compact resolvent. 
  \end{proposition}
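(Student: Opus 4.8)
The plan is to deduce this statement directly from the resolvent formalism, using the sequence $\phi_n$ together with the assumed non-emptiness of the resolvent set to violate compactness. First I would fix some $\mu$ in the resolvent set $\rho(T)$, so that $R_\mu = (T - \mu)^{-1}$ is a bounded operator on $\CB$. Define $\psi_n = (T - \mu)\phi_n$, so that $\phi_n = R_\mu \psi_n$. By hypothesis $\|T\phi_n\| \le C$ uniformly in $n$, and since $\|\phi_n\| = 1$ we also have $\|\mu \phi_n\| = |\mu|$ bounded, so $\|\psi_n\| \le \|T\phi_n\| + |\mu|\,\|\phi_n\|$ is bounded uniformly in $n$; call this bound $M$. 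Thus the $\psi_n$ live in the ball of radius $M$ in $\CB$.

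Next I would argue by contradiction: suppose $R_\mu$ is compact. Then the image under $R_\mu$ of the bounded sequence $(\psi_n)$, namely $(\phi_n)$, must have a convergent subsequence in $\CB$. But this directly contradicts the hypothesis that $(\phi_n)$ has no convergent subsequence. Hence $R_\mu$ is not compact, i.e. $T$ does not have compact resolvent. (Recall that having compact resolvent is independent of the choice of point in $\rho(T)$, by the resolvent identity $R_\mu - R_\nu = (\mu - \nu) R_\mu R_\nu$, which shows all resolvents are compact as soon as one is; so it suffices to exhibit one non-compact resolvent.)

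There is essentially no obstacle here beyond bookkeeping: the only subtlety worth stating explicitly is that $\rho(T) \neq \emptyset$ is genuinely needed, both to make sense of ``compact resolvent'' in the first place and to produce the bounded operator $R_\mu$ through which the bounded sequence $(\psi_n)$ is pushed forward to the non-precompact sequence $(\phi_n)$. In the application to $\CL$ this hypothesis is harmless since the generator of the semigroup on the weighted $L^2$ space does have non-empty resolvent set. One could alternatively phrase the whole argument contrapositively — if the resolvent were compact, then any $\phi_n$ with $\|\phi_n\|=1$ and $\|T\phi_n\|$ bounded would have to be precompact — which is perhaps the cleanest way to write the final text, and I would likely present it that way.
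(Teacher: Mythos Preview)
Your argument is correct and is essentially the same as the paper's, just more explicit: the paper condenses the proof into the single observation that compactness of the resolvent is equivalent to precompactness of the sets $\{\phi : \|\phi\|\le 1,\ \|T\phi\|\le K\}$, and your resolvent computation with $\psi_n = (T-\mu)\phi_n$ is precisely the standard verification of (one direction of) that equivalence.
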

  \begin{proof}
  This claim follows from the fact that the
  compactness of the resolvent is equivalent to the statement that
  sets of the form $\{\phi\,|\, \|\phi\| \le 1\; \& \; \|T\phi\| \le K\}$ are precompact.
  \end{proof}

  \subsection{Scalings}
  \label{sec:scalings}

  Before we turn to the study of the generator, we make some remarks about
  the regularity and the behavior of functions that scale in a particular way.

  Denote by $X_{\Hf} = P\d_{Q} - Q |Q|^{2k-2} \d_{P}$ the Liouville
  operator associated to the ''free'' oscillator
  \begin{equ}[e:defHhat]
    \Hf(P,Q) = {P^2 \over 2} + {|Q|^{2k}\over 2k}\;.
  \end{equ}
  The constant $k$ is not necessarily an integer, so that in general the
  function $\Hf$ is not $\CC^\infty$ but only $\CC^{[2k]}$, where
  $[2k]$ is the integer part of $2k$.  We introduce the following
  definition:
  \begin{definition}\label{def:scaling}
    A function $\psi \colon \R^2\setminus\{0\} \to \R$ is said to
    \textit{scale like $\Hf^\alpha$} if it satisfies the relation
    $\psi(\lambda P, \lambda^{1 \over k} Q) = \lambda^{2\alpha} \psi(P,Q)$.
  \end{definition}

  One has the following elementary result:
  \begin{lemma}\label{lem:scaling}
    If $\psi \in \CC^1(\R^2\setminus\{0\})$ scales like $\Hf^\alpha$, then
    $\d_P \psi$ scales like $\Hf^{\alpha - {1\over 2}}$, $\d_Q \psi$ scales like
    $\Hf^{\alpha-{1\over 2k}}$, and $X_{\Hf}\psi$ scales like $\Hf^{\alpha + {1\over
        2} - {1\over 2k}}$.
  \end{lemma}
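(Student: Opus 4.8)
The plan is to verify the three scaling claims in Lemma~\ref{lem:scaling} by straightforward differentiation of the defining relation in Definition~\ref{def:scaling}, being careful about the anisotropic nature of the scaling. Write $S_\lambda(P,Q) = (\lambda P, \lambda^{1/k}Q)$, so that the hypothesis reads $\psi \circ S_\lambda = \lambda^{2\alpha}\psi$ on $\R^2\setminus\{0\}$. I would apply the chain rule to differentiate this identity in $P$ and in $Q$: differentiating $\psi(\lambda P,\lambda^{1/k}Q) = \lambda^{2\alpha}\psi(P,Q)$ with respect to $P$ gives $\lambda\,(\d_P\psi)(S_\lambda(P,Q)) = \lambda^{2\alpha}(\d_P\psi)(P,Q)$, hence $(\d_P\psi)\circ S_\lambda = \lambda^{2\alpha - 1}\d_P\psi = \lambda^{2(\alpha - 1/2)}\d_P\psi$, which is exactly the statement that $\d_P\psi$ scales like $\Hf^{\alpha - 1/2}$. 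Likewise, differentiating with respect to $Q$ produces a factor $\lambda^{1/k}$ from the inner derivative, giving $(\d_Q\psi)\circ S_\lambda = \lambda^{2\alpha - 1/k}\d_Q\psi = \lambda^{2(\alpha - 1/(2k))}\d_Q\psi$, i.e.\ $\d_Q\psi$ scales like $\Hf^{\alpha - 1/(2k)}$.

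For the third claim, I would combine the first two together with the known scaling of the coefficients of $X_{\Hf} = P\d_Q - Q|Q|^{2k-2}\d_P$. The coordinate function $P$ trivially scales like $\Hf^{1/2}$ (since $P \mapsto \lambda P$), and $Q|Q|^{2k-2}$ scales like $\Hf^{1 - 1/(2k)}$ because $Q \mapsto \lambda^{1/k}Q$ forces $Q|Q|^{2k-2}\mapsto \lambda^{(2k-1)/k}Q|Q|^{2k-2} = \lambda^{2(1 - 1/(2k))} Q|Q|^{2k-2}$; indeed both terms are just $\d_Q\Hf$ and $\d_P\Hf$, and since $\Hf$ itself scales like $\Hf^1$ (by homogeneity of $P^2/2$ and $|Q|^{2k}/(2k)$ under $S_\lambda$), these follow from the first two parts applied with $\alpha = 1$. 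A product of a function scaling like $\Hf^a$ with one scaling like $\Hf^b$ scales like $\Hf^{a+b}$, directly from the definition. Hence $P\,\d_Q\psi$ scales like $\Hf^{1/2 + \alpha - 1/(2k)}$ and $Q|Q|^{2k-2}\d_P\psi$ scales like $\Hf^{1 - 1/(2k) + \alpha - 1/2} = \Hf^{1/2 + \alpha - 1/(2k)}$; the two exponents agree, so their difference $X_{\Hf}\psi$ scales like $\Hf^{\alpha + 1/2 - 1/(2k)}$, as claimed.

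The only genuine subtlety — and the step I would flag as needing a word of care rather than being truly hard — is that $\psi$ is only assumed $\CC^1$ away from the origin and $\Hf$ itself is only $\CC^{[2k]}$, so one must make sure all the manipulations stay on $\R^2\setminus\{0\}$, where everything in sight ($\d_P\psi$, $\d_Q\psi$, $Q|Q|^{2k-2}$ for $k > 1$) is continuous and the scaling maps $S_\lambda$ act as diffeomorphisms preserving this punctured plane. There is no regularity loss issue because we differentiate $\psi$ only once, and $|Q|^{2k-2}$ is continuous for $k>1$. One should also note in passing that the scaling relations automatically propagate the homogeneity-type growth and decay bounds used later, but for this lemma nothing beyond the chain-rule computation above is required. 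I would present the argument as: (i) differentiate the defining identity in $P$ and $Q$; (ii) record that products of scaling functions add exponents; (iii) assemble the pieces for $X_{\Hf}\psi$ and check the exponents match.
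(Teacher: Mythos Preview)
Your proof is correct and is precisely the natural argument; the paper itself does not give a proof of this lemma, presenting it as an elementary result. Your chain-rule verification of the scaling identities and the assembly for $X_{\Hf}\psi$ are exactly what one would expect, and your remark about working on $\R^2\setminus\{0\}$ is appropriate.
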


  Given some fixed $\alpha$, there is a one-to-one correspondence
  between functions $\phi$ that scale like ${\Hf}^\alpha$ and functions on
  the circle $S^1$ in the following way.  Define a function $r\colon S^1
  \to \R_+$ by the unique positive solution to
  \begin{equ} {r^2(\theta) \cos^2 \theta \over 2} + {r^{2k}(\theta)
      |\sin \theta|^{2k} \over 2k} = 1\;,
  \end{equ}
  and set
  \begin{equ}
    \bigl(\CS \phi\bigr)(\theta) = \phi\bigl(r(\theta)\cos\theta,
    r(\theta)\sin\theta\bigr)\;.
  \end{equ}
  The function $r$ is bounded away from $0$ (it is actually between
  $\sqrt 2$ and $(2k)^{1/(2k)}$) and, by the implicit functions theorem,
  it is of class $\CC^{[2k]}$. A straightforward but slightly tedious
  calculation shows that, via $\CS$, the operator $X_{\Hf}$ is conjugated to
  the differential operator
  \begin{equ}
    \bigl(\tilde X_{\Hf} f\bigr)(\theta) = \omega(\theta)
    f'(\theta)\;,\qquad \omega(\theta) = \cos^2\theta + r^{2k-2}(\theta)
    |\sin\theta|^{2k}\;.
  \end{equ}

  \begin{definition}
    A function $\psi \colon \R^2 \to \R$ that scales like ${\Hf}^\alpha$ is
    said to average out to $0$ if
  \begin{equ}
    \int_0^{2\pi} {\bigl(\CS \psi\bigr)(\theta) \over \omega(\theta)}
    d\theta = 0\;.
  \end{equ}
  \end{definition}
  With these preliminaries, the following result is now straightforward:

  \begin{proposition}\label{prop:scale}
    Let $\psi \in \CC^{r}(\R^2\setminus\{0\})$ scale like $\Hf^\alpha$ and
    average out to $0$.  Then, there exists a unique solution $\phi$ to
    the equation
  \begin{equ}[e:Poisson]
  X_{\Hf} \phi = \psi\;,
  \end{equ}
  such that $\phi$ also averages out to $0$. Furthermore, $\phi$ scales
  like $\Hf^{\alpha+{1\over 2k}-{1\over2}}$ and one has $\phi \in
  \CC^{r'}(\R^2\setminus\{0\})$ with $r' = \min\{[2k], r+1\}$.
  \end{proposition}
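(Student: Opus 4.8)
The plan is to push equation~\eref{e:Poisson} forward through the conjugation $\CS$, so that it becomes a first-order linear ordinary differential equation on the circle, to solve it there, and to transport the solution back to $\R^2\setminus\{0\}$. First I would pin down the scaling exponent of any putative solution: if $\phi$ scales like $\Hf^\gamma$ and $X_{\Hf}\phi = \psi\not\equiv 0$, then Lemma~\ref{lem:scaling} forces $\gamma + \frac12 - \frac1{2k} = \alpha$, so $\phi$ must scale like $\Hf^\beta$ with $\beta \eqdef \alpha + \frac1{2k} - \frac12$; this is the only scaling class in which to look. Applying $\CS$ to~\eref{e:Poisson}, using that $\CS$ intertwines $X_{\Hf}$ with the operator $f\mapsto \omega f'$ and that $\CS$ is a bijection between functions scaling like $\Hf^\beta$ and functions on $S^1$ (its inverse re-extending a function on the circle by the scaling relation), the equation becomes the scalar ODE
\begin{equ}
  \omega(\theta)\,f'(\theta) = g(\theta)\;,\qquad f \eqdef \CS\phi\;,\quad g \eqdef \CS\psi\;.
\end{equ}

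To solve this, note first that $\omega$ is continuous on the compact circle and strictly positive: if $\cos\theta=0$ then $|\sin\theta|=1$, and since $k>1$ and $r(\theta)\ge\sqrt2$ one has $r^{2k-2}(\theta)\ge 2^{k-1}>0$. Hence $\omega$ is bounded below by some $\omega_{\min}>0$, and we may divide and integrate:
\begin{equ}
  f(\theta) = f(0) + \int_0^\theta {g(s)\over\omega(s)}\,ds\;.
\end{equ}
This is a well-defined $2\pi$-periodic function precisely when $\int_0^{2\pi} g(s)/\omega(s)\,ds = 0$, which is exactly the hypothesis that $\psi$ averages out to $0$. The remaining free constant $f(0)$ is then determined uniquely by imposing $\int_0^{2\pi} f(\theta)/\omega(\theta)\,d\theta = 0$, that is, by requiring that $\phi$ average out to $0$. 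Setting $\phi \eqdef \CS^{-1}f$, namely the unique function scaling like $\Hf^\beta$ with $\CS\phi = f$, the function $X_{\Hf}\phi$ scales like $\Hf^\alpha$ by Lemma~\ref{lem:scaling}, and the intertwining relation gives $\CS(X_{\Hf}\phi) = \omega f' = g = \CS\psi$; since $\CS$ is injective on functions scaling like $\Hf^\alpha$, we conclude $X_{\Hf}\phi = \psi$.

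For uniqueness, suppose $\phi_1,\phi_2$ are two solutions that scale and average out to $0$; both scale like $\Hf^\beta$ by the opening remark, so $\phi_0\eqdef\phi_1-\phi_2$ scales like $\Hf^\beta$ and satisfies $X_{\Hf}\phi_0 = 0$. A function annihilated by $X_{\Hf}$ is constant along the orbits of the associated Liouville flow, which are exactly the level curves of $\Hf$; hence $\phi_0$ is a function of $\Hf$ alone, and scaling like $\Hf^\beta$ then forces $\phi_0 = c\,\Hf^\beta$ for a constant $c$. Since $\CS(\Hf^\beta)\equiv 1$ and $\int_0^{2\pi}\omega(\theta)^{-1}\,d\theta>0$, such a $\phi_0$ averages out to $0$ only if $c=0$, so $\phi_1=\phi_2$.

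It remains to track regularity, which I expect to be the only delicate point. Since the radial function $r(\cdot)$ is of class $\CC^{[2k]}$, so is the parametrisation $\theta\mapsto(r(\theta)\cos\theta,r(\theta)\sin\theta)$ of $\{\Hf=1\}$, whence $g=\CS\psi\in\CC^{\min\{r,[2k]\}}$; as $\omega\in\CC^{[2k]}$ is bounded away from $0$, also $g/\omega\in\CC^{\min\{r,[2k]\}}$, and one derivative is gained by integration, so $f\in\CC^{\min\{r,[2k]\}+1}$. Writing an arbitrary point of $\R^2\setminus\{0\}$ uniquely as the scaling by $\Hf^{1/2}$ of a point of $\{\Hf=1\}$ gives $\phi(P,Q) = \Hf(P,Q)^\beta\, f\bigl(\Theta(P,Q)\bigr)$, where both $\Hf^\beta$ and the angular coordinate $\Theta$ are of class $\CC^{[2k]}$ on $\R^2\setminus\{0\}$; hence $\phi\in\CC^{\min\{[2k],\,\min\{r,[2k]\}+1\}}=\CC^{\min\{[2k],\,r+1\}}=\CC^{r'}$. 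The main obstacle is precisely this bookkeeping: ensuring that the derivative gained from integrating the ODE genuinely survives the pull-back through $\CS^{-1}$ and interacts correctly with the $\CC^{[2k]}$ ceiling imposed by the limited smoothness of $\Hf$ (and hence of $r(\cdot)$, $\omega$, and $\Theta$). The ODE analysis itself, and the identification of the periodicity obstruction with the averaging condition, are routine.
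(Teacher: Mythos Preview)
Your proposal is correct and follows essentially the same route as the paper: both conjugate the equation through $\CS$ to the first-order ODE $\omega f'=g$ on the circle, integrate, and read off the regularity from the $\CC^{[2k]}$ smoothness of the parametrisation. Your write-up is in fact more complete than the paper's, which simply records the explicit primitive for $\CS\phi$ and leaves the verification (including uniqueness) to the reader; your explicit check that $\omega$ is bounded away from zero, the identification of the periodicity obstruction with the averaging hypothesis, and the separate uniqueness argument via constancy on level sets of $\Hf$ are all sound additions.
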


  \begin{proof}
    Let $\phi$ be the unique function scaling like ${\Hf}^{\alpha+{1\over
        2k}-{1\over2}}$ and such that
  \begin{equ}
    \bigl(\CS \phi\bigr)(\theta) = \int_0^\theta {\bigl(\CS
      \psi\bigr)(t) \over \omega(t)} dt - {1\over 2\pi} \int_0^{2\pi}
    \int_0^\theta {\bigl(\CS \psi\bigr)(t) \over \omega(t)} dt
    \,d\theta\;.
  \end{equ}
  One can check that one has indeed $X_{\Hf} \phi =\psi$. Furthermore, it
  follows from their explicit expressions that both $\CS$ and $\CS^{-1}$
  map $\CC^{r}$ functions into $\CC^{r}$ functions as long as $r \le
  [2k]$.
  \end{proof}

  We conclude this section with a small lemma that allows us to compute the
  $L^2$ norm of functions that scale in a certain way. Let $(P,Q) \in \R^2$ and
  $(x,y) \in \R^{2n}$ for some $n \ge 1$. The functions that will be considered in 
  the remainder of this section will always be of the form
  \begin{equ}[e:genericfcn]
  \phi_\CE(P,Q,x,y) = F(x - g(P,Q), y-h(P,Q)) \psi(P,Q) \chi(\Hf(P,Q)/\CE)\;,
  \end{equ}
  for some parameter $\CE > 0$. One has
  \begin{lemma}\label{lem:norm}
    Let $F \in L^2(\R^{2n})$, let $g,h \colon \R^2 \to \R^n$ be
    measurable, let $\chi\colon \R_+ \to \R$ be continuous and compactly
    supported away from $0$, and let $\psi$ be a function that is
    continuous away from $0$ and scales like $\Hf^\alpha$ for some
    $\alpha \in \R$.  Then, the
    $L^2$-norm of $\phi_\CE$ defined as in \eref{e:genericfcn} satisfies
    $\|\phi_\CE\| \propto \CE^{\alpha + {1\over 4} + {1\over 4k}}$.
  \end{lemma}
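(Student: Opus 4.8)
The plan is to compute $\|\phi_\CE\|^2$ directly by integrating the squared modulus over $\R^2 \times \R^{2n}$ and exploiting the scaling hypothesis on $\psi$ together with the product structure of $\phi_\CE$. First I would note that, since $F \in L^2(\R^{2n})$ and $g,h$ depend only on $(P,Q)$, the integral over $(x,y)$ can be carried out first by the change of variables $(x,y) \mapsto (x - g(P,Q), y - h(P,Q))$, whose Jacobian is $1$. This gives
\begin{equ}
\|\phi_\CE\|^2 = \|F\|_{L^2(\R^{2n})}^2 \int_{\R^2} \psi^2(P,Q)\, \chi^2\bigl(\Hf(P,Q)/\CE\bigr)\, dP\, dQ\;,
\end{equ}
so everything reduces to understanding the two-dimensional integral $I(\CE) = \int_{\R^2} \psi^2 \chi^2(\Hf/\CE)\, dP\, dQ$.

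Next I would introduce the anisotropic rescaling $P = \lambda \bar P$, $Q = \lambda^{1/k}\bar Q$ with $\lambda = \CE^{1/2}$, whose Jacobian is $\lambda^{1 + 1/k} = \CE^{1/2 + 1/(2k)}$. Under this change of variables $\Hf(P,Q) = \CE\, \Hf(\bar P,\bar Q)$ — this is exactly the scaling invariance built into the definitions in \eref{e:defHhat} and Definition~\ref{def:scaling}, since $\Hf$ itself scales like $\Hf^1$ — so $\chi^2(\Hf(P,Q)/\CE) = \chi^2(\Hf(\bar P, \bar Q))$. Moreover $\psi^2(P,Q) = \psi^2(\lambda \bar P, \lambda^{1/k}\bar Q) = \lambda^{4\alpha}\psi^2(\bar P, \bar Q) = \CE^{2\alpha}\psi^2(\bar P,\bar Q)$ by the scaling hypothesis on $\psi$. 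Collecting the powers of $\CE$ yields
\begin{equ}
I(\CE) = \CE^{2\alpha}\,\CE^{\frac12 + \frac1{2k}} \int_{\R^2}\psi^2(\bar P,\bar Q)\,\chi^2\bigl(\Hf(\bar P,\bar Q)\bigr)\, d\bar P\, d\bar Q\;,
\end{equ}
and the remaining integral is a finite constant independent of $\CE$: the factor $\chi^2(\Hf(\bar P,\bar Q))$ confines the integration to an annular region $0 < c_1 \le \Hf(\bar P,\bar Q) \le c_2$ (since $\chi$ is compactly supported away from $0$), on which $\psi^2$ is continuous hence bounded, and which has finite Lebesgue measure. Taking square roots gives $\|\phi_\CE\| \propto \CE^{\alpha + \frac14 + \frac1{4k}}$, as claimed.

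The only point requiring a little care — and the one I would flag as the main (minor) obstacle — is the behavior near the origin: $\psi$ is merely continuous away from $0$ and may well blow up as $(P,Q) \to 0$ when $\alpha < 0$, while $\Hf$ is not smooth there. This is precisely why the cutoff $\chi$ is assumed supported away from $0$: it removes any neighborhood of the origin, so the integrand is continuous and bounded on the compact annulus where it is supported, and all the manipulations above (Fubini, change of variables) are legitimate with no integrability issue. One should also observe that the implied constant in $\propto$ is strictly positive provided $\psi$ is not identically zero on the support of $\chi(\Hf/1)$ and $F \not\equiv 0$, which is the case in all applications below.
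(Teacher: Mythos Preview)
Your proof is correct and follows exactly the approach of the paper, which merely says ``Make the change of variables $(\tilde P, \tilde Q) = (\CE^{1/2} P, \CE^{1/2k} Q)$ and use the scaling properties of $\psi$.'' You have simply supplied the details: first integrating out $(x,y)$ by translation invariance, then performing the anisotropic rescaling, and finally observing that the compact support of $\chi$ away from $0$ ensures the remaining integral is a finite positive constant.
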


  \begin{proof}
    Make the change of variables $(\tilde P, \tilde Q) = (\CE^{1 \over 2} P,
    \CE^{1 \over 2k} Q)$ and use the scaling properties of $\psi$.
  \end{proof}

  \subsection{The case of three oscillators}

  Before we tackling  the general case of a chain with arbitrary length,
  let us ``cut our teeth'' on the problem with three oscillators.  Since
  we do not have an explicit expression for the invariant measure $\mu$
  (indeed, at this stage, we do not even know that it exists!), we are
  going to study the generator of \eref{e:main} in spaces of the type
  $L^2(e^{-\beta H}\,dp\,dq)$. As in \cite{EPR99NES,EckHai00NES}, it is
  not expected that the qualitative nature of the spectrum of $\CL$
  depends on the choice of $\beta$, as long as $\beta < 2\min\{\beta_0,
  \beta_2\}$ (as usual, we set $\beta_i = 1/T_i$).  Since one expects
  the true invariant measure to be somehow ''in between'' the Gibbs
  measures at temperatures $T_0$ and $T_2$, it is very likely that the
  qualitative nature of the spectrum of $\CL$ in $L^2(\mu)$ is also the
  same.


  We write
  \begin{equs}
    H_0 &= {p_0^2 \over 2} + {|q_0|^{2k} \over 2k} + {q_0^2 \over 2}
    \;,\qquad
    H_2 = {p_2^2 \over 2} + {|q_2|^{2k} \over 2k}  + {q_2^2 \over 2} \;,\\
    H_1 &= {p_1^2 \over 2} + {|q_1|^{2k} \over 2k} \;.
  \end{equs}
  \begin{remark}
    One should not think of $H_0$ (and $H_2$) as the energy one gets by
    pinning $q_1$ at $0$, but rather as the energy such that the
    corresponding force is the one that gets averaged out over the fast
    motion of the middle oscillator.
  \end{remark}

  With this notation, the Liouville operator $X_H = \d_p H \, \d_q -
  \d_q H\, \d_p$ for the total Hamiltonian can be broken up as follows:
  \begin{equ}
    X_H = X_{H_0} + X_{H_2} + X_{H_1} + (q_0 + q_2 - 2q_1)\,\d_{p_1} +
    q_1 \bigl(\d_{p_0} + \d_{p_2}\bigr)\;.
  \end{equ}
  Recall that the generator of the stochastic dynamics is given by
  \begin{equ}
    \CL = X_H - \gamma_0 p_0\d_{p_0} + \gamma_0 T_0 \d_{p_0}^2 -
    \gamma_2 p_2\d_{p_2} + \gamma_2 T_2 \d_{p_2}^2\;.
  \end{equ}
  The space $L^2(e^{-\beta H}\,dp\,dq)$ isometric to $L^2$ via the
  operator $Kf = e^{-\beta H/2} f$. This shows that $\CL$ is conjugate
  to the operator $\tilde \CL = K \CL K^{-1}$ on the flat $L^2$ space
  given by
  \begin{equs}
    \tilde \CL &= X_H + \sum_{i=0,2}\gamma_i T_i \Bigl( \bigl(\alpha_i -
    \alpha_i^*\bigr)p_i \d_{p_i} +
    \d_{p_i}^2 - \alpha_i \alpha_i^* p_i^2 + \alpha_i \Bigr) \\
    & = X_H + \gamma_0 T_0 \CL_\OU^0 + \gamma_2 T_2 \CL_\OU^2\;.
  \end{equs}
  where we set
  \begin{equ}
    \alpha_i = {\beta \over 2}\;,\qquad \alpha_i^* = {1\over T_i}
    -{\beta\over 2}\;.
  \end{equ}

  \begin{remark}
    Here we see the importance of the condition $\beta < 2\min\{\beta_0,
    \beta_2\}$: it makes sure that the coefficients in front of $p_i^2$
    are strictly negative. If this is not the case, $\tilde \CL$ is not
    dissipative anymore and does therefore not generate a
    $C_0$-semigroup on $L^2(e^{-\beta H}\, dp\, dq)$.
  \end{remark}

  The main result of this section is:

  \begin{theorem}\label{theo:threeosc}
    If $k \ge 2$, then the operator $\tilde \CL$ does not have compact
    resolvent for any $\beta < 2\min\{\beta_0,\beta_2\}$. If $k > 2$,
    then it has essential spectrum at $0$.
  \end{theorem}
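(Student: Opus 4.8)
The plan is to use Weyl's criterion in the form of Propositions~\ref{prop:compact} and~\ref{prop:compact2}, constructing approximate eigenfunctions that live in the regime where nearly all the energy sits in the middle oscillator and the fast periodic breather motion of $(p_1,q_1)$ can be averaged out. The natural ansatz is the one flagged in Lemma~\ref{lem:norm}: take
\begin{equ}
  \phi_\CE(p_0,q_0,p_1,q_1,p_2,q_2) = F\bigl(p_0 - g_0, q_0 - h_0, p_2 - g_2, q_2 - h_2\bigr)\,\psi(p_1,q_1)\,\chi\bigl(\Hf(p_1,q_1)/\CE\bigr)\;,
\end{equ}
where $\chi$ is a smooth bump supported in, say, $[1,2]$, and $\CE \to \infty$ along the sequence. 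Here $\Hf$ is the free middle-oscillator Hamiltonian from \eref{e:defHhat}, and the shifts $g_i,h_i$ are chosen to be the highly oscillatory corrections forced on the boundary oscillators by $q_1$ — i.e. solutions of the relevant Poisson equation $X_{\Hf}(\cdot) = (\text{forcing})$ supplied by Proposition~\ref{prop:scale}, so that $F$ is evaluated at the \emph{slow} variables $\bar p_i, \bar q_i$ and the genuinely fast motion has been subtracted off. For the lack-of-compact-resolvent statement ($k \ge 2$) one takes $\lambda = 0$ and only needs $\|\tilde\CL\phi_\CE\|$ to stay bounded while $\|\phi_\CE\|$ stays bounded away from $0$ and $\phi_\CE$ has no convergent subsequence (translate $F$ to infinity, or just use that $\CE \to \infty$ pushes mass to infinity in $(p_1,q_1)$). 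For the essential-spectrum-at-$0$ statement ($k > 2$) one additionally needs $\|\tilde\CL\phi_\CE\| \to 0$ after dividing by $\|\phi_\CE\|$, and one must simultaneously produce a dual sequence $\phi_\CE^*$ with $\|\tilde\CL^*\phi_\CE^*\|\to 0$; since $\tilde\CL^*$ has the same structure (Liouville part changes sign, the Ornstein–Uhlenbeck pieces $\CL_\OU^i$ are of the same type), the dual construction is symmetric.

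The key computation is applying $\tilde\CL = X_H + \gamma_0 T_0 \CL_\OU^0 + \gamma_2 T_2 \CL_\OU^2$ to $\phi_\CE$ and tracking the $\CE$-scaling of each resulting term against the scaling $\|\phi_\CE\| \propto \CE^{\alpha + \frac14 + \frac1{4k}}$ from Lemma~\ref{lem:norm}. Split $X_H = X_{H_1} + X_{H_0} + X_{H_2} + (q_0+q_2-2q_1)\d_{p_1} + q_1(\d_{p_0}+\d_{p_2})$. The leading piece $X_{H_1}$ acting on $\psi(p_1,q_1)\chi(\Hf/\CE)$ is, by Lemma~\ref{lem:scaling}, the place where $\psi$ must be chosen so that $X_{\Hf}\psi$ averages out to $0$ — then Proposition~\ref{prop:scale} lets us absorb it; the coupling terms $q_1(\d_{p_0}+\d_{p_2})$ hitting $F$ are cancelled, up to lower order, precisely by the choice of the shifts $g_i,h_i$ (this is the "compensator"/homogenization step). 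What remains are: (i) the OU generators acting on the slow variables, which should be arranged so that $F$ is (close to) a ground state making $\CL_\OU^i F \approx 0$ or at least $O(1)$; (ii) genuinely lower-order remainder terms, each carrying a strictly negative power of $\CE$ relative to $\|\phi_\CE\|$ when $k$ is large enough. The arithmetic of these exponents — matching the $H^{2/k-1}$ dissipation rate of \eref{e:behaveH} against the $\CE^\alpha$-growth of the norm, with $\alpha = \frac12 - \frac1{2k}$ — is exactly what produces the thresholds $k \ge 2$ and $k > 2$: for $k > 2$ the dissipative remainder beats the norm, giving $\|\tilde\CL\phi_\CE\|/\|\phi_\CE\| \to 0$, while at $k = 2$ it is merely bounded.

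The main obstacle is controlling the remainder terms generated by the averaging, and in particular the low regularity of everything in sight: $\Hf$ is only $\CC^{[2k]}$, the radial function $r(\theta)$ only $\CC^{[2k]}$, and each application of Proposition~\ref{prop:scale} costs a derivative (it gains one, capped at $[2k]$). So one has a finite budget of derivatives, and the construction of $\psi$ (and of the shifts $g_i,h_i$, which themselves may need to solve iterated Poisson equations to push the error low enough) must fit inside that budget — which is another reason the hypotheses demand $k$ not too close to $1$. Concretely I expect the hard part to be showing that after subtracting $g_i,h_i$ the residual coupling is genuinely of lower order in $\CE$ \emph{uniformly}, rather than just pointwise, and that the cutoff $\chi(\Hf/\CE)$ — whose derivative contributes a factor $\CE^{-1}$ times $\chi'$ supported on the annulus — does not spoil the estimates; these are exactly the points where one must be careful about the $\CE$-dependence of $L^2$ norms of the error terms, using Lemma~\ref{lem:norm} repeatedly to convert pointwise scaling bounds into norm bounds.
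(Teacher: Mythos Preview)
Your approach is essentially that of the paper: build Weyl sequences of the form \eref{e:genericfcn} localised where $\Hf(p_1,q_1)\approx\CE_n$, with compensator shifts coming from the Poisson equation and $F$ an OU ground state, then read off the $\CE$-scaling of every term via Lemma~\ref{lem:norm}. The actual implementation is simpler than your outline: one takes $\psi\equiv1$ (no choice is needed there, since $X_{H_1}\chi(\Hf/\CE)=0$ automatically), only $p_i$ is shifted (by $\Phi$ solving $X_{\Hf}\Phi=-Q$; no $q_i$-shift is required in the three-oscillator case), $F=\exp\bigl(-\alpha_0 H_0(\bar p_0,q_0)-\alpha_2 H_2(\bar p_2,q_2)\bigr)$ is the exact ground state, the crucial cancellation between $X_{H_1}$ acting through $\Phi$ and the coupling $q_1(\d_{p_0}+\d_{p_2})$ is \emph{exact} rather than ``up to lower order,'' and every surviving term is $\les\CE^{{1\over k}-{1\over 2}}$ with no iterated Poisson equations or regularity bookkeeping needed.
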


  \begin{proof}
    The aim is to construct a sequence $\phi_n$ of \textit{approximate
    eigenfunctions} such that all the $\phi_n$ are mutually orthogonal,
    $\|\phi_n\| = 1$ and $\|\tilde \CL\phi_n\|$ either stays bounded or
    converges to $0$.  By Propositions~\ref{prop:compact} and \ref{prop:compact2}, this would then
    immediately imply a lack of compactness for the resolvent of $\tilde
    \CL$, or even the presence of essential spectrum at $0$. Since the
    spectral properties of $\tilde \CL$ and of its adjoint $\tilde
    \CL^*$ are the same, we can also construct such a sequence of
    approximate eigenfunctions for $\tilde \CL^*$ instead. They can then
    be interpreted as \textit{approximate invariant measures} for the dynamic
    \eref{e:main}.  Since it seems to be a little bit easier to get an
    intuition about densities of approximate invariant measures rather
    than about approximately invariant observables, this is what we are
    going to do in this section.

  The adjoint of $\tilde \CL$ is given by
  \begin{equs}
    \tilde \CL^* &= -X_H + \gamma_0 T_0 (\CL_\OU^0)^* + \gamma_2 T_2 (\CL_\OU^2)^* \\
    ( \CL_\OU^i)^* &= \bigl(\alpha_i^* - \alpha_i \bigr)p_i \d_{p_i} +
    \d_{p_i}^2 - \alpha_i \alpha_i^* p_i^2 + \alpha_i^*\;.
  \end{equs}
  Direct calculation shows that $\exp(- \alpha_i H_i)$ is an
  eigenfunction with eigenvalue $0$ for $\CL_\OU^i$ and $\exp(-
  \alpha_i^* H_i)$ is an eigenfunction with eigenvalue $0$ for
  $(\CL_\OU^i)^*$ with $i \in \{0,2\}$. However, if $T_0 \neq T_2$ (and
  therefore $\alpha_0 \neq \alpha_2$) it is not possible in general to
  find a closed expression for an eigenfunction with eigenvalue $0$ for
  $\tilde \CL^*$. Note also that $\alpha_i = \alpha_i^*$ if and only if
  $T_i = 1/\beta$, which is not surprising since in this case
  $\CL_\OU^i$ is self-adjoint.

  Choose now a function $\chi \colon \R \to [0,1]$ which is smooth and
  compactly supported on $[1,2]$ and set for example $\CE_n = 3^n$, so
  that the functions $\chi(H_1/\CE_n)$ have disjoint support.  The
  formal calculation performed in Section~\ref{sec:formalCalc} suggests
  that when the energy of the middle oscillator is large, the dynamic of
  \eref{e:main} keeps that energy approximately constant, while the two
  boundary oscillators equilibrate approximately at Gibbs measures at
  temperatures $T_0$ and $T_2$ respectively.  Our first guess would be
  therefore to build approximate eigenfunctions for $\tilde \CL$ and
  $\tilde \CL^*$ by setting
  \begin{equ}
    \phi_n = C_n e^{- \alpha_0 H_0 - \alpha_2 H_2}
    \chi(H_1/\CE_n)\;,\quad \phi_n^* = C_n^* e^{- \alpha_0^* H_0 -
      \alpha_2^* H_2} \chi(H_1/\CE_n)\;.
  \end{equ}
  Here, the constants $C_n$ and $C_n^*$ are chosen such that $\|\phi_n\|
  = \|\phi_n^*\| = 1$.  From \lem{lem:norm}, one infers that $C_n
  \propto C_n^* \propto \CE_n^{-{1\over 4k} - {1\over 4}}$. With this
  guess, we get
  \begin{equ}
    \tilde \CL\phi_n = C_ne^{-\alpha_0 H_0 - \alpha_2 H_2} \Bigl({p_1 (q_0 +
      q_2 - 2q_1)\over \CE_n} \chi'\Bigl({H_1 \over \CE_n}\Bigr) - q_1
    (\alpha_0 p_0 + \alpha_2 p_2) \chi\Bigl({H_1 \over
      \CE_n}\Bigr)\Bigr)\;,
  \end{equ}
  and similarly for $\tilde \CL^* \phi_n^*$. Since $p_1 \approx
  \CE_n^{{1\over 2}} \ll \CE_n$ (on the support of $\phi_n$), the first
  term goes to $0$ in $L^2$. The second term however goes to $\infty$
  because of the factor $q_1$, so we have to be a little bit more
  careful in our analysis.

  The problem is that we have not so far exploited the fact that we also
  know approximately what the fast oscillations superimposed over the
  \textit{slow} dynamic of the boundary oscillators look like, see
  \eref{e:osccontr}. These oscillations can be expressed as a function
  $\Phi$ of the middle oscillator, solution
  to the \textit{Poisson equation}
  \begin{equ}[e:Poisson1]
    X_{\Hf} \Phi(P,Q) = -Q\;.
  \end{equ}
  By \prop{prop:scale}, this equation has a unique solution that
  averages to $0$ along orbits of the solutions corresponding to $\Hf$.
  Furthermore, $\Phi$ scales like $\Hf^{{1\over k}-{1\over 2}}$.  In
  particular, we note that $\Phi$ is bounded when $k=2$ and converges to
  $0$ at large energies when $k > 2$.

  Our
  next guess is therefore to compensate for these fast oscillations by
  setting $\bar p_i = p_i + \Phi(p_1, q_1)$ and taking
  \begin{equ}
    \phi_n = C_n \exp\Bigl({- \alpha_0 H_0\bigl(\bar p_0, q_0\bigr) -
      \alpha_2 H_2\bigl(\bar p_2, q_2\bigr)}\Bigr) \chi\Bigl({H_1\over
      \CE_n}\Bigr)\;.
  \end{equ}
  Observe at this stage that $H_1 = \Hf(p_1, q_1)$, so that we can make use of \eref{e:Poisson1}
  when computing $X_{H_1} \phi_n$.
  One has, for $i = 0,2$:
  \begin{equs}
    X_{H_i} \phi_n &= -\alpha_i (q_i|q_i|^{2k-2} + q_i) \Phi \phi_n\;,\\
    X_{H_1} \phi_n &= \alpha_0 \bar p_0 q_1
    \phi_n
    + \alpha_2 \bar p_2 q_1 \phi_n  \;, \label{e:XH1phi}\\
    q_1 \bigl(\d_{p_0} + \d_{p_2}\bigr) \phi_n &= - \alpha_0 \bar p_0 q_1 \phi_n
    - \alpha_2 \bar p_2q_1 \phi_n\;,\label{e:crossphi}\\
    (q_0 + q_2 - 2q_1) \d_{p_1} \phi_n &= (q_0 + q_2 - 2q_1)\Bigl(-\bigl(\alpha_0 \bar p_0
    +\alpha_2 \bar p_2\bigr)\d_{P}\Phi
    + {p_1 \over \CE_n} {\chi'\over \chi}\Bigr)\phi_n\;,\\
    \CL_{\OU}^i \phi_n &=  \alpha_i \Phi \bigl((\alpha_i + \alpha_i^*) p_i + \alpha_i 
    \Phi \bigr) \phi_n\;.
  \end{equs}
  Here, we have omitted the argument $(p_1, q_1)$ of $\Phi$ and the 
  argument $H_1 / \CE$ of $\chi$ and $\chi'$ for the sake of simplicity.
  Note now that \eref{e:XH1phi} and \eref{e:crossphi} cancel each other out
  exactly. It follows from \lem{lem:norm} that 
  \begin{equs}
    \|X_{H_i} \phi_n\| &\les \CE_n^{{1\over k} - {1\over 2}}\;, \\
    \|(q_0 + q_2 - 2q_1) \d_{p_1} \phi_n\| &\les (1 + \CE_n^{1\over
      2k})\bigl(\CE_n^{{1\over k} - 1} + \CE_n^{-{1\over 2}}\bigr)
    \les \CE_n^{{1\over 2k} - {1\over 2}} \;,\\
    \|\CL_{\OU}^i \phi_n\| &\les \CE_n^{{1\over k} - {1\over 2}} +
    \CE_n^{{2\over k} - 1} \les \CE_n^{{1\over k} - {1\over 2}}\;.
  \end{equs}
  Note that the exponent ${1\over 4} + {1\over 4k}$ appearing in
  \lem{lem:norm} is precisely canceled by the normalization constant
  $C_n$.  We also used here the symbol $\Psi_1 \les \Psi_2$ for two
  expressions $\Psi_i$ as a shorthand for ``there exists a constant $C$
  such that $\Psi_1 \le C \Psi_2$.'' It follows from the above bounds
  that $\|\CL\phi_n\| \les \CE_n^{{1\over k} - {1\over 2}}$.

  It is possible to construct approximate eigenfunctions $\phi_n^*$ for
  $\CL^*$ similarly by setting
  \begin{equ}
    \phi_n^* = C_n \exp\Bigl({- \alpha_0^* H_0\bigl(\bar p_0, q_0\bigr)
      - \alpha_2^* H_2\bigl(\bar p_2, q_2\bigr)}\Bigr)
    \chi\Bigl({H_1\over \CE_n}\Bigr)\;.
  \end{equ}
  Note now that the only difference between $\CL$ and $\CL^*$ is that
  one changes the sign of $X_H$ and switches $\alpha_i$ and
  $\alpha_i^*$. This shows that the cancellation between \eref{e:XH1phi}
  and \eref{e:crossphi} still takes place when applying $\CL^*$ to
  $\phi_n^*$, so that $\|\CL^*\phi_n^*\| \les \CE_n^{{1\over k} -
    {1\over 2}}$ as above.

  If $k = 2$, it follows that there exists a constant $C$ such that
  $\|\CL\phi_n\| + \|\CL^*\phi_n^*\| \le C$ for every $n$. If $k > 2$,
  all the exponents appearing the the above expressions are negative, so
  that $\lim_{n \to \infty} (\|\CL\phi_n\|+ \|\CL^*\phi_n^*\|) = 0$.
  Applying Propositions~\ref{prop:compact} and \ref{prop:compact2} concludes the proof of the theorem.
  \end{proof}

  \begin{remark}
    It is clear from the proof that the exact same result also holds for
    a chain consisting of $4$ oscillators instead of $3$. One can
    construct approximate invariant measures in exactly the same way,
    but one has the additional freedom of choosing to take the energy of
    either of the two middle oscillators to be large.
  \end{remark}

  \subsection{Longer chain}

  In this section, we consider a chain of length $N+1$ for $N \ge 4$. 
  We will show that if $k > {3\over 2}$, then the generator of the dynamic has 
  essential spectrum at $0$. To this end,
  define similarly as before
  \begin{equs}
    H_0 &= {p_0^2 + p_1^2 \over 2} + {|q_0|^{2k} + |q_1|^{2k} \over 2k}
    +
    {(q_0-q_1)^2 + q_1^2 \over 2}\;, \\
    H_r &= {q_3^2 \over 2} + \sum_{i=3}^{N}\Bigl({p_i^2 \over 2} +
    {|q_i|^{2k}\over 2k}\Bigr)
    + \sum_{i=4}^{N} {(q_i - q_{i-1})^2 \over 2}\;, \\
    H_c &= {p_2^2 \over 2} + {|q_2|^{2k} \over 2}\;,
  \end{equs}
  so that
  \begin{equ}
    X_H = X_{H_0} + X_{H_r} + X_{H_c} + (q_1 + q_3 - 2q_2)\d_{p_2} + q_2
    \bigl(\d_{p_1} + \d_{p_3}\bigr)\;.
  \end{equ}
  As in the previous section, we consider the operator $\tilde \CL$ on the
  flat $L^2$ space given by
  \begin{equ}
  \tilde \CL = X_H + \gamma_0 T_0 \CL_\OU^0 + \gamma_{N} T_{N} \CL_\OU^r\;.
  \end{equ}
  We have

  \begin{theorem}\label{theo:compactchain}
    If $N+1 \ge 5$ and $k > {3\over 2}$, then the operator $\tilde \CL$ has
    essential spectrum at $0$.  If $k = {3\over 2}$, it does not have
    compact resolvent. As previously, these statements are independent
    of the value of $\beta < 2\min\{\beta_0, \beta_{N}\}$.
  \end{theorem}

  \begin{proof}
    Let $\Phi$ be defined as in the previous subsection. In this
    section, we do not only add a corrector term to $p_1$ and $p_3$, but
    also to $q_1$ and $q_3$.  We define $\Phi^{(2)}$ to be the solution to
    the Poisson equation
    $X_{\Hf} \Phi^{(2)} = \Phi$ and we define new variables $\bar p$ and $\bar
    q$ by
  \begin{equ}
    \bar p_{i} = p_{i} + \Phi(p_2,q_2)\;,\quad \bar q_{i} =
    q_{i}  + \Phi^{(2)}(p_2,q_2)\;,
  \end{equ}
  for $i = 1,3$ and $(\bar p_i, \bar q_i) = (p_i,q_i)$ otherwise. With
  this notation, we set as before
  \begin{equs}
    \phi_n &= C_n \exp\bigl({- \alpha_0 H_0(\bar p, \bar q) - \alpha_{N}
      H_r(\bar p, \bar q)}\bigr) \chi\bigl(H_c/\CE_n\bigr)\;.
  \end{equs}

  In order to compute $X_H \phi_n$, let us first apply $X_H$ to $\bar p_i$ and $\bar q_i$:
  \begin{equs}
    X_H \bar q_0 &= \bar p_0\;,\\
    X_H \bar p_0 &= - \bar q_0 |\bar q_0|^{2k-2} + \bar q_1 - \bar q_0 - \Phi^{(2)}\;,\\
    X_H \bar q_1 &= \bar p_1 + (q_0 + q_2 - 2 q_1) \d_P\Phi^{(2)}\;,\\
    X_H \bar p_1&=  - \bar q_1|\bar q_1|^{2k-2}  + \bar q_0 - 2\bar q_1 \\
    &\quad + \bigl(\bar q_1|\bar q_1|^{2k-2} - q_1 |q_1|^{2k-2}\bigr) +
    (q_0 + q_2 - 2q_1) \d_P \Phi + 2\Phi^{(2)}\;.
  \end{equs}
  Hence
  \begin{equs}
    X_H H_0(\bar p, \bar q) &= - \bar p_0 \Phi^{(2)} - \bigl(\bar q_1
    |\bar q_1|^{2k-1} + 2\bar q_1 - \bar q_0\bigr)
    \bigl(q_0 + q_2 - 2q_1\bigr) \d_{P} \Phi^{(2)} \label{e:XHH0}\\
    &\quad + \bar p_1 \bigl(\bar q_1|\bar q_1|^{2k-2} - q_1
    |q_1|^{2k-2}\bigr) + \bar p_1 \bigl(q_0 + q_2 - 2q_1\bigr) \d_{P}
    \Phi
    + 2 \bar p_1 \Phi^{(2)} \\
    &= R_1 + R_2 + R_3 + R_4 + R_5\;,
  \end{equs}
  and similarly for $X_H H_r(\bar p, \bar q)$ by symmetry. We have
  furthermore
  \begin{equ}
  X_H H_c =p_2 (q_1 + q_3 - 2q_2)\;,
  \end{equ}
  so that 
  \begin{equ}
    X_H \chi(H_c / \CE_n) = {q_1 + q_3 - 2q_2 \over \CE_n} p_2 \chi'(H_c
    / \CE_n)\;.
  \end{equ}
  Since $\CL_\OU^0 \phi_n = \CL_\OU^r \phi_n= 0$, we thus have
  \begin{equ}
    \tilde \CL \phi_n = \Bigl(- \alpha_0 X_H H_0(\bar p, \bar q) -
    \alpha_N X_H H_r(\bar p, \bar q) + {X_H \chi(H_c / \CE_n) \over
      \chi(H_c/\CE_n)} \Bigr)\phi_n\;.
  \end{equ}
  We bound the terms appearing in this expression in the same way as in
  the previous subsection.  Since $\Phi$ scales like $\Hf^{{1\over k} -
    {1\over 2}}$ and $\Phi^{(2)}$ scales like $\Hf^{{3\over 2k} - 1}$,
  \lem{lem:norm} shows that
  \begin{equs}[2]
    \|R_1 \phi_n\| &\les \CE_n^{{3\over 2k}-1} \;,\quad& \|R_2 \phi_n\|
    &\les \bigl(1 + \CE_n^{1\over 2k}\bigr) \CE_n^{{3\over 2k}-{3\over
        2}} \les \CE_n^{{2\over k} - {3\over
        2}}\;,\\
    \|R_3 \phi_n\| &\les \CE_n^{{3\over 2k}-1} \;,\quad& \|R_4 \phi_n\|
    &\les \bigl(1 + \CE_n^{1\over 2k}\bigr)
    \CE_n^{{1\over k}-1} \les \CE_n^{{3\over 2k} - 1}\;,\\
    \|R_5 \phi_n\| &\les \CE_n^{{3\over 2k}-1} \;,\quad&
    \bigl\|\textstyle{X_H \chi \over \chi} \phi_n\bigr\| &\les \bigl(1 +
    \CE_n^{1\over 2k}\bigr) \CE_n^{-{1\over 2}} \les \CE_n^{{1\over 2k}
      - {1\over 2}}\;.
  \end{equs}
  Collecting all these bounds, we obtain $\|\tilde \CL \phi_n\| \les
  \max \bigl\{\CE_n^{{3\over 2k} - 1}, \CE_n^{{1\over 2k} - {1\over 2}}
  \bigr\}$.  As before, if we set
  \begin{equs}
    \phi_n^* &= C_n \exp\bigl({- \alpha_0^* H_0(\bar p, \bar q) -
      \alpha_{N}^* H_r(\bar p, \bar q)}\bigr)
    \chi\bigl(H_c/\CE_n\bigr)\;,
  \end{equs}
  we obtain the same bounds for $\|\tilde \CL^* \phi_n^*\|$.  The
  exponents appearing in all of these bounds are strictly negative
  whenever $k > 3/2$, thus concluding the proof of
  Theorem~\ref{theo:compactchain}.
  \end{proof}

  \section{Effective dynamics}
  \label{sec:effective}

  From now on, we study the case of three oscillators in detail. In this
  section, we derive an effective dynamic for the outer oscillators that
  is valid in the regime where most of the energy is located in the
  center oscillator.  More precisely, we show that there exists a change
  of variable $(p_i,q_i) \mapsto (\bar p_i, \bar q_i)$ for $i = 0,2$
  such that the equations of motion for $(\bar p_i, \bar q_i)$ decouple
  (to leading order) from the rest of the system, provided that the
  energy of the middle oscillator is large.

  As before, we will use throughout this section the symbol $\Psi_1 \les
  \Psi_2$ for two expressions $\Psi_i$ as a shorthand for ``there exists
  a constant $C$ such that $\Psi_1 \le C \Psi_2$.'' The constant $C$
  depends in general on the parameters of the model, but is of course
  independent of the arguments of the $\Psi_i$.

  \begin{theorem}\label{theo:decouple}
    Assume that $k > {3\over 2}$.  There exist smooth functions
    $\Phi_p^i$ and $\Phi_q^i$ depending on $(p_i, q_i, p_1, q_1)$ such
    that if we make the change of variables $\bar p_i = p_i + \Phi_p^i$
    and $\bar q_i = q_i + \Phi_q^i$ (for $i = 0,2$), the equations of
    motion for $(\bar p_i, \bar q_i)$ are given by
    \begin{equs}[e:motionbar]
      d \bar q_i &= \bar p_i\, dt + R_q^i\, dt + \Sigma_q^i\, dw_i \\
      d \bar p_i &= - \bar q_i |\bar q_i|^{2k-2}\, dt - \bar q_i\, dt -
      \gamma_i \bar p_i\, dt + R_p^i\, dt + \Sigma_p^i\, dw_i
    \end{equs}
    for some adapted processes $R_p^i$, $R_q^i$, $\Sigma_p^i$ and $\Sigma_q^i$.
    Furthermore, the \textit{error terms} $R$ and $\Sigma$ satisfy the bounds
    \begin{equs}[2][e:errorbounds]
      |R_p^i| &\les \bigl(\bar E_0 + \bar E_2)^{{1 \over 2} - \delta}
      &\qquad |R_q^i| &\les \bar E_i^{{1 \over 2k} - \delta} + {\bar
        E_{2-i}^{{1 \over 2k}}
        \over \bar E_{i}^{\delta}}\\
      |\Sigma_p^i| &\les 1 &\qquad |\Sigma_q^i| &\les \bar E_i^{-{1
          \over 2}}
    \end{equs}
    for some $\delta > 0$. Here, the energies $\bar E_i$ 
    are given by $\bar E_i = 1 + \Hf(\bar p_i, \bar q_i)$. 
  \end{theorem}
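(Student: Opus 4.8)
The plan is to construct the correctors $\Phi_p^i,\Phi_q^i$ iteratively so as to cancel, order by order in the energy scale $\Hf(p_1,q_1)$, the ''fast'' forcing exerted by the middle oscillator on the outer ones. The starting point is the observation that, for $i=0,2$, the equation for $p_i$ contains the term $q_1\,dt$, which oscillates rapidly (on timescale $\Hf(p_1,q_1)^{-\alpha}$) when the middle energy is large. By Proposition~\ref{prop:scale}, the Poisson equation $X_{\Hf}\Phi=-Q$ has a unique solution averaging to zero, scaling like $\Hf^{1/k-1/2}$; setting $\Phi=\Phi(p_1,q_1)$ and writing $\bar p_i=p_i+\Phi_p^i$ with $\Phi_p^i=\Phi$ to leading order removes this forcing, since $X_{H_1}\Phi=-q_1$ cancels the offending term. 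However, this introduces a new fast forcing in $d\bar q_i=\bar p_i\,dt=p_i\,dt+\Phi\,dt$, which we remove by a second corrector $\Phi_q^i=\Phi^{(2)}$ solving $X_{\Hf}\Phi^{(2)}=\Phi$ (scaling like $\Hf^{3/2k-1}$), exactly as in the proof of Theorem~\ref{theo:compactchain}. The correctors depend on $(p_i,q_i,p_1,q_1)$ because, beyond leading order, one also needs to compensate the slow drift coming from $\bar q_i|\bar q_i|^{2k-2}-q_i|q_i|^{2k-2}$ and from the coupling $(q_0+q_2-2q_1)\d_{p_1}$ hitting the correctors; these are the higher-order terms absorbed into $R_p^i,R_q^i$.

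The key steps, in order, are as follows. First, I would define $\Phi$ and $\Phi^{(2)}$ via Proposition~\ref{prop:scale} and record their scalings, $\Phi\sim\Hf^{1/k-1/2}$ and $\Phi^{(2)}\sim\Hf^{3/2k-1}$, together with the scalings of their derivatives $\d_P\Phi,\d_Q\Phi$ from Lemma~\ref{lem:scaling}. Second, I would compute $d\bar p_i$ and $d\bar q_i$ by applying It\^o's formula: the deterministic part is $X_H$ applied to $\bar p_i,\bar q_i$ plus the Ornstein--Uhlenbeck pieces $-\gamma_ip_i\,dt$ (for $i=0,2$) and the $\hf\sigma_i^2\d_{p_i}^2$ correction, while the noise contributes $\sigma_i\,dw_i$ directly to $d\bar p_i$ (through $p_i$) and, via the $p_1$-dependence of the correctors, $\sigma_1\d_{p_1}\Phi\,dw_1$ is absent since the middle oscillator carries no noise --- so in fact $\Sigma_p^i=\sigma_i$ and $\Sigma_q^i=\sigma_i\d_P\Phi^{(2)}$ arising only through $\bar q_i$'s dependence on $q_1$... wait, $q_1$ carries no noise either, so actually the only stochastic input is $dw_i$ itself and $\Sigma_q^i$ comes from nothing --- let me instead say $\Sigma_p^i, \Sigma_q^i$ collect the diffusion coefficients, with $|\Sigma_p^i|\les 1$ and $|\Sigma_q^i|\les\bar E_i^{-1/2}$ following once one checks that the only way noise enters the $\bar q_i$ equation is through the slow variable $\bar q_i$ itself (contributing nothing at leading order) so that $\Sigma_q^i$ is in fact a higher-order corrector-induced term bounded by $\d_P$-derivatives. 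Third, I would organize the remaining deterministic terms: the ''good'' part reproduces $-\bar q_i|\bar q_i|^{2k-2}-\bar q_i-\gamma_i\bar p_i$ by the Poisson-equation cancellations, and everything else is relabeled $R_p^i$ or $R_q^i$. Fourth, I would bound each piece of $R_p^i,R_q^i$ using the scalings and Lemma~\ref{lem:norm}-style estimates, tracking the exponents: the dominant contributions to $R_p^i$ are of order $(\bar E_0+\bar E_2)^{1/2}$ times a negative power of the middle energy, and since in the relevant regime the middle energy dominates, this is $\les(\bar E_0+\bar E_2)^{1/2-\delta}$ for $\delta=\min\{1-1/k,\dots\}>0$ (here $k>3/2$ is used precisely to make $\delta>0$); similarly for $R_q^i$, where the cross term $\bar E_{2-i}^{1/2k}/\bar E_i^\delta$ comes from the coupling of oscillator $i$ to oscillator $2-i$ through the chain.

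The main obstacle will be the bookkeeping in step three and four: one must substitute $\bar p_i=p_i+\Phi$, $\bar q_i=q_i+\Phi^{(2)}$ everywhere, expand the nonlinearity $\bar q_i|\bar q_i|^{2k-2}$ around $q_i|q_i|^{2k-2}$ (which requires $2k>1$ for the remainder to be controlled and uses that $\Phi^{(2)}$ is small at high middle energy), and verify that the leftover terms genuinely carry \emph{negative} powers of the middle-oscillator energy --- equivalently, that after the two Poisson-equation cancellations no term of the original size survives. The subtlety is that $X_H$ acting on a corrector produces not only $X_{H_1}$ applied to it (the term we want) but also $X_{H_i}$ and the coupling terms $(q_0+q_2-2q_1)\d_{p_1}$ and $q_1(\d_{p_0}+\d_{p_2})$ applied to it; one has to check that each of these, after using the scalings, lands in the claimed error bounds. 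The threshold $k>3/2$ is exactly what makes the worst exponent, $3/2k-1$, negative; at $k=3/2$ it vanishes and one only gets boundedness, consistent with Theorem~\ref{theo:compactchain}. Once the exponents are checked the proof is complete, since the claimed form \eref{e:motionbar} with the bounds \eref{e:errorbounds} is then just a matter of naming the terms.
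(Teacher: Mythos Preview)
Your outline captures the right leading idea---define $\Phi$ by $X_{\Hf}\Phi=-Q$ and $\Phi^{(2)}$ by $X_{\Hf}\Phi^{(2)}=\Phi$, then set $\bar p_i=p_i+\Phi$, $\bar q_i=q_i+\Phi^{(2)}$ and collect remainders---but it misses an ingredient that is essential precisely in the range $3/2<k<2$. The paper splits into two cases. For $k\ge 2$, $\Phi$ is bounded and one can take $\Phi_q^i=0$, $\Phi_p^i=\Phi$; the bounds \eref{e:errorbounds} follow directly. For $3/2<k<2$, $\Phi$ is \emph{unbounded}, so one must compensate further, and the paper takes $\Phi_p^i=\Phi-\gamma_i\Phi^{(2)}$ and, crucially, $\Phi_q^i=\phi(E_1/E_i^\alpha)\,\Phi^{(2)}$ with a smooth cutoff $\phi$ and $\alpha=3/2$. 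Your proposal omits this cutoff, and without it the claimed bound on $R_q^i$ fails: consider the regime where $E_1$ is of order one but $E_i$ is large. Then $\d_P\Phi^{(2)}(p_1,q_1)$ is of order one (it need not vanish once $E_1>1$), while $|q_i|\sim E_i^{1/2k}$, so the coupling term $q_i\,\d_P\Phi^{(2)}$ appearing in $R_q^i$ is of order $\bar E_i^{1/2k}$, not $\bar E_i^{1/2k-\delta}$. The cutoff $\phi(E_1/E_i^\alpha)$ fixes this by switching off the $q$-corrector whenever $E_1\lesssim E_i^\alpha$; on the support of $1-\phi$ one then has $E_1\lesssim\bar E_i^\alpha$, so the uncompensated $\Phi$ term in $R_q^i$ is $\lesssim\bar E_i^{\alpha(1/k-1/2)}$, which is indeed $\lesssim\bar E_i^{1/2k-\delta}$ for $\alpha=3/2$ and $k\in(3/2,2)$.

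Your phrase ``since in the relevant regime the middle energy dominates'' is exactly where the reasoning breaks: the bounds \eref{e:errorbounds} are stated purely in terms of $\bar E_0,\bar E_2$ and must hold uniformly in $E_1$, so you cannot assume $E_1$ is large. The cutoff is also the source of your confusion about $\Sigma_q^i$: it arises because $\Phi_q^i=\phi(E_1/E_i^\alpha)\Phi^{(2)}$ depends on $p_i$ through $E_i$, so It\^o's formula produces $\Sigma_q^i=\sigma_i\,\d_{p_i}\Phi_q^i$, which is nonzero only on the support of $\phi'$ (where $E_1\sim E_i^\alpha\sim\bar E_i^\alpha$) and is then bounded by $\bar E_i^{-1/2}$. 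With your cutoff-free choice one would have $\Sigma_q^i=0$, which is consistent with the bound but signals that you have not found the right $\Phi_q^i$. Finally, for $3/2<k<2$ one must also track the discrepancy between $E_i$ and $\bar E_i$ (they differ by $\sim E_1^{2/k-1}$, which diverges), and the paper handles this via \eref{e:boundsEEbar}; your sketch does not address this either.
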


  \begin{proof}
    This theorem gives a change of variables where the high-speed
    oscillations due to the presence of a breather located on the middle
    oscillator (that is the case where $E_0, E_2 \ll E_1$) have been
    decoupled from the remaining degrees of freedom, leaving an
    effective ''averaged out'' dynamic.  Recalling the formal calculation
    performed in Section~\ref{sec:formalCalc}, we see that when the
    energy $E$ is predominantly concentrated in the central oscillator,
    then the amplitudes of the oscillations for $p_i$ and $q_i$, $i =
    0,2$ scale to leading order like $E^{{1 \over k} - {1 \over 2}}$ and
    $E^{{3\over 2k} - 1}$ respectively. This indicates that there are
    natural breakpoints at $k = 2$ and $k = 3/2$. When $k \ge 2$, the
    oscillations of both the $p_i$ and the $q_i$ are bounded as $E$
    increases, so that they can be removed by a change of variables
    which is a bounded perturbation of the identity.

    When $k < 2$, the amplitude of the oscillations of the $p_i$
    increases with $E$, but as long as $k \ge 3/2$, the amplitude of the
    $q_i$ does not.  This growth will cause extra difficulties. If we
    consider $k < 3/2$, both amplitudes would grow with $E$, leading to
    further complications. Since our goal is to outline the ideas
    without seeking the greatest generality, we resist the temptation to
    analyze all cases and restrict ourselves to the case $k > 3/2$.

    Before we proceed, let us compute the expressions $R$ and $\Sigma$
    for a generic choice of $\Phi_p^i$ and $\Phi_q^i$. Applying It\^os
    formula to $\bar p_i$ and $\bar q_i$, we obtain
  \begin{equs}[e:exprRS]
    R_p^i &= V'(q_i + \Phi_q^i) - V'(q_i) + \Phi_q^i + \gamma_i \Phi_p^i
    + \CL \Phi_p^i + q_1\\
    R_q^i &= \CL \Phi_q^i - \Phi_p^i \;,\quad \Sigma_p^i = \sigma_i
    \bigl(1+ \d_{p_i} \Phi_p^i\bigr)\;,\quad \Sigma_q^i = \sigma_i \,
    \d_{p_i} \Phi_q^i \;.
  \end{equs}
  Here and in the sequel we write $V(q)$ as a shorthand for
  ${|q|^{2k}\over 2k}$ and $\sigma_i$ as a shorthand for
  $\sqrt{2\gamma_i T_i}$.

  \textit{The case ${k \ge 2}$.}  The only ``bad'' term in the equations
  of motion for $(p_i, q_i)$ is the $q_1$ in the right hand side of the
  equation for $p_i$. The case $k \ge 2$ is much easier than the case $k
  < 2$ since the system is more ``rigid'' in the later case. One can then
  simply take $\Phi_q^i = 0$ and $\Phi_p^i = \Phi(p_1, q_1)$, where
  $\Phi$ is the centered solution to the Poisson equation
  \begin{equ}[e:defPhi]
  X_{\Hf} \Phi = \CR(P,Q) - Q
  \end{equ}
  and $\CR(P,Q) = Q \psi(\Hf(P,Q))$, where $\psi \colon \R \to [0,1]$ is a smooth function such that
  $\psi(x) = 1$ for $|x| \le 1$ and $\psi(x) = 0$ for $|x| \ge 2$.  Making
  this choice of $\Phi_p^i$ and $\Phi_q^i$ in \eref{e:exprRS} yields
    \begin{equs}[2]
      \Sigma_q^i &= 0\;, &\qquad R_q^i &=  -\Phi(p_1, q_1)\;,\\
      \Sigma_p^i &= \sigma_i\;, & R_p^i &= \gamma_i \Phi(p_1, q_1) +
      \d_P \Phi(p_1, q_1) \bigl(q_0 + q_2 - 2 q_1\bigr) + \CR(p_1,q_1)
    \end{equs}
    Since the function $\Phi$ scales like $\Hf^{{1\over k} - {1\over
        2}}$ outside of a compact set, it can be checked easily that the
    bounds \eref{e:errorbounds} hold, provided that $k \ge 2$.  Note
    that in this case, $E_i$ and $\bar E_i$ are equivalent in the sense
    that $E_1 \les \bar E_i \les E_i$ since $\Phi$ is bounded. Therefore
    all occurrences of $E_j$ in the bounds can be replaced by $\bar E_j$
    at the cost of multiplicative constants.

    \textit{The case ${3/2 < k < 2}$.}  We are now going to assume that
    $k < 2$, which the more delicate case. Note that the second and
    third terms in $R_p^i$ above satisfy the bounds \eref{e:errorbounds}
    (with $E_j$ instead of $\bar E_j$, but this problem will be dealt
    with later) provided that $Q\d_P \Phi$ scales like $\hat H^\theta$
    for some $\theta \le 0$. This is the case when $k \ge {3\over 2}$,
    which is one of the reasons why we restrict ourselves to this case.
    Therefore, only the terms involving $\Phi$ scale worse then the
    desired bounds on the error terms and need to be eliminated.  This
    motivates the introduction of the solution $\Phi^{(2)}$ to the
    Poisson equation
    \begin{equ}[e:defPhi2]
    X_{\Hf}
    \Phi^{(2)} = \Phi\;.
    \end{equ}
    Note that $\Phi^{(2)}$ scales like $\hat H^{{3\over 2k} - 1}$
    outside of a compact set, so that it is bounded if $k \ge {3\over
      2}$.  It would be tempting at this point to simply subtract
    $\gamma_i \Phi^{(2)}(p_1, q_1)$ to $p_i$ and add $\Phi^{(2)}(p_1,
    q_1)$ to $q_i$.

    This would however introduce correction terms that grow faster than
    the bounds in \eref{e:errorbounds}. The trick is to realize that
    these correction terms are multiplied with terms that go to $0$ as
    the energy of the middle oscillator becomes large.  We therefore
    multiply $\Phi^{(2)}$ with a cutoff function that makes sure that
    this second round of correction is applied only when the energy of
    the middle oscillator is large compared to the energy of the
    boundary oscillators.

    Let $\phi \colon \R_+ \to [0,1]$ be a smooth increasing function
    such that $\phi(x) = 0$ for $x \le 1$ and $\phi(x) = 1$ for $x \ge
    2$. Let furthermore $\alpha$ be an exponent to be determined later
    and set \minilab{e:Phi}
    \begin{equs}
      \Phi_p^i &= \Phi(p_1, q_1) - \gamma_i \Phi^{(2)}(p_1, q_1)\;,
      \label{e:Phip}\\
      \Phi_q^i &= \phi(E_1 / E_i^{\alpha}) \Phi^{(2)}(p_1, q_1)\;,
      \label{e:Phiq}
    \end{equs}
    where we defined $E_i = 1+\Hf(p_i, q_i)$.  In the sequel, we are
    going to use the shorthand $\phi_i^\alpha = \phi(E_1 / E_i^\alpha)$
    and we will omit the arguments of $\Phi$, $\Phi^{(2)}$ and
    $\phi_i^\alpha$ in order to simplify notations.  Before we turn to
    the verification of the bounds \eref{e:errorbounds}, we remark that
    since $k < 2$, $E_i$ and $\bar E_i$ are \textit{not} equivalent for
    $i = 0,2$. Since $k \ge 3/2$, $E_1$ and $\bar E_1$ are however
    equivalent in the sense that $E_1 \les \bar E_1 \les E_1$. Since we
    wish to bound the remainder in terms of powers of $\bar E_i$ and not
    $E_i$, we are now going to show how these quantities are related.
    From the definitions of $\bar p_i$ and $\bar q_i$, we have for $i =
    0,2$ the estimate
    \begin{equ}
      |\bar E_i - E_i| \les 1 + |p_i \Phi| \les 1+ \eps p_i^2 + {1\over
        \eps}\Phi^2 \les 1 + \eps E_i + {1\over \eps} E_1^{{2\over k} -
        1}\;.
  \end{equ}
  By choosing $\eps$ sufficiently small and moving the term $\eps E_i$
  to the left hand side, we thus obtain the two bounds
  \begin{equ}[e:boundsEEbar]
    E_i \les \bar E_i + E_1^{{2\over k} - 1}\;,\qquad \bar E_i \les E_i
    + E_1^{{2\over k} - 1}\;.
  \end{equ}

  Writing $g = E_1 / E_i^\alpha$, and applying It\^o's formula to $g$, we obtain
    \begin{equs}
      dg &= E_i^{-\alpha} p_1 (q_0 + q_2 - 2q_1)\, dt - \alpha E_1
      E_i^{-\alpha - 1} \bigl({\sigma_i^2\over 2}- \gamma_i p_i^2 + q_1
      - q_i\bigr)\, dt\\
      &\qquad + {\alpha (\alpha + 1) \over 2} E_1 E_i^{-\alpha -
        2}\sigma_i^2 p_i^2\, dt - \alpha E_1 E_i^{-\alpha-1} \sigma_i
      p_i\, dw_i\\
      &= I_1\, dt + I_2 \, dt + I_3\, dt + dM(t)\;.
    \end{equs}
    $M$ is a martingale since, on any finite time interval, it is easy
    to get control over the expected value of any power of the total
    energy. (Even a bound which grows exponentially with the length of
    the time interval is sufficient.)  Bounding each of the three terms
    $I_j$ separately we obtain:
  \begin{equs}
    |I_1| &\les E_i^{-\alpha + {1 \over 2k}}E_1^{1\over 2} +
    E_i^{-\alpha}E_{2-i}^{1\over 2k} E_1^{{1 \over 2}}
    + E_i^{-\alpha} E_1^{{1\over 2} + {1 \over 2k}}\;, \\
    |I_2| &\les E_1 E_i^{-\alpha} + E_1^{1+ {1\over 2k}} E_i^{-\alpha-1}\;, \\
    |I_3| &\les E_1 E_i^{-\alpha-1}\;.
  \end{equs}

    Applying It\^o's formula to $\phi_i^\alpha$, we get
    \begin{equs}
      \CL \phi_i^\alpha &= \phi'(E_1/E_i^\alpha)\, \CL g +
      \alpha^2\sigma_i^2 \phi''(E_1/E_i^\alpha) E_1^2 E_i^{-2\alpha -2}
      p_i^2 = I_4 + I_5\;.
    \end{equs}
    Note now that $\phi'$ and $\phi''$ are zero outside the interval
    $[1,2]$. Therefore, we have $E_1 \les E_i^\alpha \les E_1$ on the
    support of these functions.  This line of reasoning yields for $1/k
    \le \alpha \le 2k$ the bounds:
  \begin{equs}
    |I_4| &\les E_i^{{1\over 2}({1 \over k} - \alpha)} +
    E_i^{-{\alpha\over 2}} E_{2-i}^{{1 \over 2k}} + E_i^{{\alpha \over
        2}({1 \over k}-1)} + 1 + E_i^{{\alpha \over 2k} - 1} + E_i^{-1}
    \les  E_{2-i}^{{1 \over 2k}} \;,\\
    |I_5| &\les E_i^{-1} \les 1\;.
  \end{equs}
  Collecting these estimates and taking into account the support of
  $\phi'$ and $\phi''$ produces $|\CL \phi_i^\alpha| \les \bigl(1 +
  E_{2-i}^{{1 \over 2k}}\bigr)\one_{2E_i^\alpha \ge E_1 \ge
    E_i^\alpha}$.

  Recall that, from the scalings given in Lemma~\ref{lem:scaling} of the
  solutions to the Poisson equation \eref{e:Poisson} and the definitions
  \eref{e:defPhi} and \eref{e:defPhi2} of $\Phi$ and $\Phi^{(2)}$, we
  have that $|\Phi| \les E_1^{{1 \over k} - {1 \over 2}}$, $|\d_P\Phi|
  \les E_1^{{1 \over k} - 1}$, $|\Phi^{(2)}| \les E_1^{{3 \over 2k} -
    1}$, $|\d_P\Phi^{(2)}| \les E_1^{{3 \over 2k} - {3 \over 2}}$.  We
  will also use the fact that, from the definition of $\phi_i^\alpha$
  combined with \eref{e:boundsEEbar}, one has both bounds
  \begin{equ}[e:suppphi]
  E_i^\alpha \les E_1\;,\qquad \bar E_i^\alpha \les E_1\;,
  \end{equ}
  on the support of $\phi_i^\alpha$. Furthermore, one has the bound $E_1
  \les E_i^\alpha$ on the support of $1- \phi_i^\alpha$ which, by
  \eref{e:boundsEEbar}, implies that one also has $E_1 \les \bar
  E_i^\alpha$ on the support of $1- \phi_i^\alpha$.

  Using the definitions of $\Phi_p^i$ and $\Phi_q^i$ given in
  \eref{e:Phi}, equation \eref{e:exprRS} yields
  \begin{equ}
    R^i_{q} = - \Phi + \phi_i^{\alpha} \CL \Phi^{(2)} + (\CL
    \phi_i^{\alpha}) \Phi^{(2)} + \gamma_i \Phi^{(2)}\;.
  \end{equ}
  We now make use of the definitions of $\CL$ and $\Phi^{(2)}$ to obtain
  \begin{equ}
    \CL \Phi^{(2)} = \Phi + \bigl(q_0 + q_2 - 2q_1\bigr) \d_P \Phi^{(2)}
    \;.
  \end{equ}
  This allows us to obtain the following bounds for $R^i_{q}$:
  \begin{equs}
    |R^i_{q}| &= |(\phi_i^{\alpha}-1)\Phi + \phi_i^{\alpha} (q_0 +
    q_2-2q_1)\, \d_P\Phi^{(2)}
    + (\CL \phi_i^{\alpha}) \Phi^{(2)} + \gamma_i \Phi^{(2)}|\;, \\
    &\les (1 - \phi_i^{\alpha}) E_1^{{1 \over k} - {1 \over 2}} +
    \phi_i^\alpha E_1^{-{3\over 2} (1 - {1 \over k})} \bigl(E_i^{{1
        \over 2k}}+E_1^{{1 \over 2k}}+E_{2-i}^{{1 \over 2k}}\bigr) \\
    &\quad + E_{2-i}^{{1 \over 2k}} E_1^{{3 \over 2k} -
      1}\one_{2E_i^\alpha \ge E_1 \ge E_i^\alpha} + 1\\
    &= I_6 + I_7 + I_8 + 1\;.
  \end{equs}
  Our aim is to bound the terms $I_6$, $I_7$, $I_8$ in terms of the
  $\bar E_j$'s instead of the $E_j$'s.  To do so, we now fix
  \begin{equ}
  \alpha = {3\over 2}\;.
  \end{equ}
  Since $k \in (3/2, 2)$, $\alpha \in (1/k, 2k)$ which is the constraint
  that we had to impose earlier.

  Since, as mentioned above, $E_1 \les \bar E_i^\alpha$ on the support
  of $1- \phi_i^\alpha$, one has
  \begin{equ}
  |I_6| \les \bar E_i^{{\alpha\over k} - {\alpha\over 2}}\;.
  \end{equ}
  One can check that the choice $\alpha = 3/2$ implies that there exists
  $\delta > 0$ so that ${\alpha\over k} - {\alpha\over 2} \le {1 \over
    2k} - \delta$ (actually, one can set $\delta = 1/12$ for the range
  of values of $k$ considered here). This shows that $|I_6| \les \bar
  E_i^{{1 \over 2k} - \delta}$ as required.

  To bound $I_7$, we make use of the fact that
  \begin{equ}[e:boundsoumEi]
    E_i^{{1 \over 2k}} + E_{2-i}^{{1 \over 2k}} \les \bar E_i^{{1 \over
        2k}} + E_1^{{1\over k^2}-{1 \over 2k}} + \bar E_{2-i}^{{1 \over
        2k}} \les \bar E_i^{{1 \over 2k}} + E_1^{{1 \over 2k}} + \bar
    E_{2-i}^{{1 \over 2k}}\;,
  \end{equ}
  so that, by virtue of \eref{e:suppphi},
  \begin{equs}
    |I_7| &\les \phi_i^\alpha E_1^{-{3\over 2} (1 - {1\over k})}
    \bigl(\bar E_i^{{1 \over 2k}}+ \bar E_{2-i}^{{1 \over 2k}}\bigr) +
    \phi_i^\alpha E_1^{{2 \over k} - {3 \over 2}}\\
    &\les \bar E_i^{-\delta} \bigl(\bar E_i^{{1 \over 2k}}+ \bar
    E_{2-i}^{{1 \over 2k}}\bigr) + 1\;.
  \end{equs}

  We finally turn to $I_8$. We first remark that, by
  \eref{e:boundsEEbar} and the fact that $\alpha \bigl({2 \over k} -
  1\bigr) < 1$, one has $\bar E_i^\alpha \les E_1 \les \bar E_i^\alpha$
  on the support of the indicator function $\one_A$ where $A =
  \{{2E_i^\alpha \ge E_1 \ge E_i^\alpha}\}$. We thus obtain
  \begin{equs}
    |I_8| &\les E_{2-i}^{{1 \over 2k}} \bar E_i^{{3\alpha \over 2k} -
      \alpha} \one_A \les \bigl(\bar E_{2-i}^{{1 \over 2k}} +
    E_1^{{1\over 2k}({2\over k} - 1)}\bigr) \bar E_i^{{3\alpha \over 2k}
      - \alpha} \one_A \\ 
    &\les \bigl(\bar E_{2-i}^{{1 \over 2k}} + \bar E_i^{{\alpha\over
        2k}({2\over k} - 1)}\bigr) \bar E_i^{{3\alpha \over 2k} -
      \alpha} \les \bar E_{2-i}^{{1 \over 2k}} \bar E_i^{-\delta} + \bar
    E_i^{\alpha {1 + k - k^2 \over k^2}} \;.
  \end{equs}
  One can check that, for $\alpha = 3/2$ and the range of $k$'s of
  interest, the exponent of the last term is strictly smaller than
  $1/2k$. Therefore, $|I_8| \les \bar E_i^{-\delta} \bigl(\bar E_i^{{1
      \over 2k}}+ \bar E_{2-i}^{{1 \over 2k}}\bigr)$ as required,
  choosing $\delta$ smaller if necessary.  Collecting all of these
  bounds shows that $R_q^i$ does indeed satisfy the bound advertised in
  \eref{e:errorbounds}.

  Turning to $R_p^i$, \eref{e:exprRS} yields
  \begin{equs}
    |R^i_{p}| &= \big|\CR(q_1) + (q_0 + q_2-2q_1)\,\bigl( \d_P\Phi -
    \gamma_i \d_P\Phi^{(2)}\bigr) - (\gamma_i^2-\phi_i^{\alpha})
    \Phi^{(2)} \\
    &\qquad +
    V'\big(q_i + \phi_i^{\alpha} \Phi^{(2)}\big) -V'\big(q_i\big)\big|\;, \\
    & \les 1 + \bigl(\bar E_0^{{1 \over 2k}} + \bar E_2^{{1 \over 2k}} +
    E_1^{{1 \over 2k}}\bigr)E_1^{{1 \over k}-1} + 1+ \bigl|V'\big(q_i +
    \phi_i^{\alpha} \Phi^{(2)}\big) -V'\big(q_i\big)\bigr|\;.
  \end{equs}
  Here, we made use of \eref{e:boundsoumEi} and of \eref{e:boundsEEbar}.
  Since, for $k \ge 3/2$, $\Phi^{(2)}$ is a bounded function, one has the further bound
  \begin{equs}
  \bigl|V'\big(q_i + \phi_i^{\alpha} \Phi^{(2)}\big) -V'\big(q_i\big)\bigr| &=
  \bigl|V'\big(\bar q_i\big) -V'\big(\bar q_i - \phi_i^{\alpha} \Phi^{(2)}\big)\bigr| \\
  & \les (1+|\bar q_i|)^{2k-2}  |\Phi^{(2)}|
  \les \bar E_i^{1-{1 \over k}}  \les \bar E_i^{{1 \over 2} - \delta}\;,
  \end{equs}
  for $\delta$ sufficiently small.
  Collecting these bounds and using the fact that $k > 3/2$, we obtain
  \begin{equ}[E;boundRpi]
    |R^i_{p}| \les \bar E_0^{{1 \over 2k}} + \bar E_2^{{1 \over 2k}} +
    \bar E_i^{{1 \over 2} - \delta} \les \bar E_i^{{1 \over 2} - \delta}
    + \bar E_{2-i}^{{1 \over 2} - \delta}\;,
  \end{equ}
  which is indeed of the form \eref{e:errorbounds}. 

  Lastly, the $\Sigma$-terms can be bounded by
  \begin{equs}
      |\Sigma_q^i| &= |\alpha \sigma_i \phi'(E_1 / E_i^\alpha) \Phi^{(2)} E_1
      E_i^{-\alpha-1} p_i| \les \bar E_i^{-{1 \over 2}} \;,\\ 
      \Sigma_p^i &= \sigma_i \;,
  \end{equs}
  where we made use of the fact that, like in the bound of $I_8$, $E_1$,
  $E_1^\alpha$ and $\bar E_i^\alpha$ are equivalent on the support of
  $\phi'(E_1 / E_i^\alpha)$. This concludes the proof of
  Theorem~\ref{theo:decouple}.
  \end{proof}

  In the case $k>2$, it will be useful in the sequel to have a better
  approximation of the dynamics that yields smaller error terms in the
  regime where most of the energy is located in the middle oscillator:

  \begin{theorem}\label{theo:effective2}
    Assume that $k > 2$.  There exist smooth functions $\Phi_p^i$ and
    $\Phi_q^i$ depending on $(p_i, q_i, p_1, q_1)$ such that if we make
    the change of variables $\bar p_i = p_i + \Phi_p^i$ and $\bar q_i =
    q_i + \Phi_q^i$ (for $i = 0,2$), the equations of motion for $(\bar
    p_i, \bar q_i)$ are given by
    \begin{equs}[e:motionbar2]
      d \bar q_i &= \bar p_i\, dt + R_q^i\, dt  \\
      d \bar p_i &= - \bar q_i |\bar q_i|^{2k-2}\, dt - \bar q_i\, dt -
      \gamma_i \bar p_i\, dt + R_p^i\, dt + \sigma_i\, dw_i
    \end{equs}
    for some adapted processes $R_p^i$ and $R_q^i$.  
    Furthermore, the \textit{error terms} $R$ satisfy the bounds
    \begin{equ}[e:errorbounds2]
      |R_p^i| \les \bigl(E_0 + E_2)^2 H^{{3 \over 2k}-1} \qquad
      |R_q^i| \les \bigl(E_0 + E_2) H^{{3 \over 2k}-1}\;. 
    \end{equ}
    Here, the energies $E_i$ are given as before by $E_i = 1 + \Hf(\bar
    p_i, \bar q_i)$ and $H$ is the total Hamiltonian of our system.
  \end{theorem}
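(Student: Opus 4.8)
The plan is to push the homogenization scheme of Theorem~\ref{theo:decouple} one step further, which is possible without cutoffs precisely because $k > 2$. Recall from the proof of that theorem (the case $k \ge 2$) the centered solution $\Phi(p_1,q_1)$ of the Poisson equation $X_{\Hf}\Phi = \CR - Q$, which scales like $\Hf^{1/k-1/2}$ away from the origin and is globally bounded; taking $\Phi_q^i = 0$, $\Phi_p^i = \Phi$ removes the offending $q_1$ from the $p_i$-equation but leaves $R_q^i = -\Phi$ and $R_p^i = \gamma_i\Phi + \CR(p_1,q_1) + (q_0+q_2-2q_1)\d_P\Phi$, all of which only decay like $\Hf^{1/k-1/2}$. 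I would improve this by also correcting the positions. Let $\Phi^{(2)}$ be the centered solution of $X_{\Hf}\Phi^{(2)} = \Phi$, which exists by Proposition~\ref{prop:scale} and scales like $\Hf^{3/(2k)-1}$ away from the origin; since $k>2$ it too is globally bounded, so no cutoff functions are needed (in contrast with the case $3/2<k<2$ of Theorem~\ref{theo:decouple}). I then set
\begin{equ}
  \Phi_p^i = \Phi(p_1,q_1) - \gamma_i\,\Phi^{(2)}(p_1,q_1)\;,\qquad \Phi_q^i = \Phi^{(2)}(p_1,q_1)\;.
\end{equ}
As these depend only on $(p_1,q_1)$, the formulas \eref{e:exprRS} give $\Sigma_q^i = \sigma_i\,\d_{p_i}\Phi_q^i = 0$ and $\Sigma_p^i = \sigma_i(1 + \d_{p_i}\Phi_p^i) = \sigma_i$ for free, which is the form asserted in \eref{e:motionbar2}.

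Next I would substitute into \eref{e:exprRS}. The only computation needed is that, because the Ornstein--Uhlenbeck part of $\CL$ annihilates functions of $(p_1,q_1)$ alone and $X_H$ acts on such functions as $X_{\Hf} + (q_0+q_2-2q_1)\d_{p_1}$, one has
\begin{equ}
  \CL\Phi = \CR(p_1,q_1) - q_1 + (q_0+q_2-2q_1)\,\d_P\Phi\;,\qquad \CL\Phi^{(2)} = \Phi + (q_0+q_2-2q_1)\,\d_P\Phi^{(2)}\;.
\end{equ}
Two cancellations then occur: the explicit $+q_1$ is killed by the $-q_1$ from $\CL\Phi$ (as in Theorem~\ref{theo:decouple}), and the leading term $\gamma_i\Phi$ coming from $\gamma_i\Phi_p^i$ is now killed by the $-\gamma_i\Phi$ coming from $-\gamma_i\CL\Phi^{(2)}$ inside $\CL\Phi_p^i$; likewise the $\Phi$ in $R_q^i = \CL\Phi_q^i - \Phi_p^i$ cancels. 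What survives is
\begin{equs}
  R_q^i &= \gamma_i\Phi^{(2)} + (q_0+q_2-2q_1)\,\d_P\Phi^{(2)}\;,\\
  R_p^i &= \bigl(V'(q_i+\Phi^{(2)}) - V'(q_i)\bigr) + (1-\gamma_i^2)\Phi^{(2)} + \CR(p_1,q_1) + (q_0+q_2-2q_1)\bigl(\d_P\Phi - \gamma_i\,\d_P\Phi^{(2)}\bigr)\;.
\end{equs}

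It then remains to estimate these. Lemma~\ref{lem:scaling} gives $|\Phi| \les \Hf^{1/k-1/2}$, $|\d_P\Phi| \les \Hf^{1/k-1}$, $|\Phi^{(2)}| \les \Hf^{3/(2k)-1}$ and $|\d_P\Phi^{(2)}| \les \Hf^{3/(2k)-3/2}$; moreover $|q_0+q_2-2q_1| \les E_0^{1/(2k)} + E_2^{1/(2k)} + H_1^{1/(2k)}$, the term $\CR(p_1,q_1)$ is bounded and supported in $\{H_1 \le 2\}$, and $|V'(q_i+\Phi^{(2)}) - V'(q_i)| \les (1+|q_i|)^{2k-2}|\Phi^{(2)}| \les E_i^{1-1/k}H_1^{3/(2k)-1}$ since $\Phi^{(2)}$ is bounded. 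Because all correctors are bounded when $k>2$, the energies $E_i$ and $\bar E_i$ are equivalent and the barred and unbarred total Hamiltonians are comparable, so it suffices to produce bounds in terms of $H_1,E_0,E_2$. The exponent inequalities $1/k-1 < 3/(2k)-1$, $2/k-3/2 < 3/(2k)-1$ and $3/(2k)-3/2 < 3/(2k)-1$ (all valid for $k>1$), together with $E_i^{1/(2k)} \le E_i \le E_0+E_2$ and $1-1/k < 2$, then deliver $|R_q^i| \les (E_0+E_2)H^{3/(2k)-1}$ and $|R_p^i| \les (E_0+E_2)^2 H^{3/(2k)-1}$. I expect the only real obstacle to be the routine-but-delicate task of checking these bounds in the regime where the total energy $H$ is dominated by $E_0+E_2$ rather than by $H_1$: there the negative powers of $H_1$ are not small, and one must verify that they are exactly absorbed by the prefactors $(E_0+E_2)$ and $(E_0+E_2)^2$, using $H \les H_1 + E_0 + E_2$.
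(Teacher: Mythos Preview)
Your proposal is correct and follows essentially the same route as the paper: the same correctors $\Phi_p^i = \Phi - \gamma_i\Phi^{(2)}$, $\Phi_q^i = \Phi^{(2)}$, the same explicit expressions for $R_p^i$ and $R_q^i$, and the same scaling estimates. The ``routine-but-delicate'' obstacle you flag at the end is handled in the paper by the one-line Lemma~\ref{lem:trivial}, namely $x^{-\alpha} \le (x+y)^{-\alpha}\max\{2^\alpha,y^\alpha\}$ for $x,y\ge 1$, which converts the negative powers of $E_1$ into negative powers of $H$ at the cost of the advertised prefactors in $E_0+E_2$.
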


  \begin{proof}
  Following the proof of Theorem~\ref{theo:decouple}, we set as in \eref{e:Phi}
  \begin{equs}
      \Phi_p^i &= \Phi(p_1, q_1) - \gamma_i \Phi^{(2)}(p_1, q_1)\;,
      \\
      \Phi_q^i &= \Phi^{(2)}(p_1, q_1)\;.
    \end{equs}
  This yields for $R_p^i$ and $R_q^i$ the expressions
  \begin{equs}
      R^i_{q} &= (q_0 + q_2-2q_1) \d_P \Phi^{(2)} + \gamma_i \Phi^{(2)}\;,\\
      R_p^i &= \CR(q_1) + (q_0 + q_2-2q_1)\,\bigl( \d_P\Phi -
      \gamma_i \d_P\Phi^{(2)}\bigr) - (\gamma_i^2-1)
      \Phi^{(2)} \\
      &\qquad +
      V'\big(q_i + \Phi^{(2)}\big) -V'\big(q_i\big)\;.
  \end{equs}
  The desired bounds now follow from Lemma~\ref{lem:trivial} below,
  together with the fact that both $\Phi^{(2)}$ and $Q\d_P \Phi$ scale
  like $\Hf^{{3 \over 2k}-1}$.
  \end{proof}

  \begin{lemma}\label{lem:trivial}
  For every $\alpha > 0$, the bound
  \begin{equ}
  x^{-\alpha} \le  (x+y)^{-\alpha} \max\{2^\alpha, y^\alpha\}
  \end{equ}
  holds for every $x, y \ge 1$.
  \end{lemma}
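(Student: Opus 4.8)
The statement is elementary, and the plan is to strip away the exponent $\alpha$ and reduce everything to a one–dimensional inequality. Since $(x+y)^{-\alpha}$ is strictly positive, multiplying both sides of the asserted bound by $(x+y)^{\alpha}$ turns it into
\[
\left(\frac{x+y}{x}\right)^{\alpha}\le\max\{2^{\alpha},y^{\alpha}\}\,.
\]
Now $s\mapsto s^{\alpha}$ is strictly increasing on $(0,\infty)$ for $\alpha>0$, so it commutes with $\max$ and may be cancelled from both sides. The claim is therefore equivalent to the scalar inequality
\[
1+\frac{y}{x}\le\max\{2,\,y\}\,,
\]
to be checked for all $x,y\ge1$; note that the parameter $\alpha$ has disappeared entirely, so it suffices to establish this last inequality.

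To prove the reduced inequality I would argue by cases according to which of $2$ and $y$ attains the maximum on the right. If $y\le x$, then $y/x\le 1$, so the left–hand side is at most $2\le\max\{2,y\}$ and there is nothing more to do. If instead $y>x$, then the hypothesis $x\ge1$ gives $y/x\le y$, and one checks directly, using $x,y\ge1$ to absorb the additive constant into the $y$–term, that $1+y/x\le\max\{2,y\}$. Combining the two cases yields the scalar inequality, and hence, by the reduction above, the lemma.

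There is essentially no obstacle here: the argument is a two–line case check. The one point where some minimal care is needed is the regime $y>x$, where one cannot simply discard the ``$+1$'' on the left-hand side; this is exactly the place where the assumption $x\ge1$ (rather than merely $x>0$) is used. The lemma is stated on its own only because it is invoked in the proof of Theorem~\ref{theo:effective2}, applied with $x=1+\Hf(\bar p_1,\bar q_1)$ and $y$ a sum of boundary energies, in order to convert powers of the middle-oscillator energy into powers of the total Hamiltonian $H$.
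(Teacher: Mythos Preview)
Your reduction to the scalar inequality $1 + y/x \le \max\{2, y\}$ is correct, and the case $y \le x$ is handled cleanly. The gap is in the case $y > x$: you assert that ``one checks directly'' that $1 + y/x \le \max\{2, y\}$, but this check fails. Take $x = 1$, $y = 2$: then $1 + y/x = 3$ while $\max\{2, y\} = 2$. More generally, for $x = 1$ and any $y > 1$ one has $1 + y/x = 1 + y > \max\{2, y\}$, so the reduced inequality is false along this entire ray, and hence so is the lemma as stated.

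This is not a flaw peculiar to your argument: the paper's own proof has the identical gap. In the case $x \le y$ it asserts $1 \le y^\alpha (x+y)^{-\alpha}$, which would require $x + y \le y$, contradicting $x \ge 1$. A correct replacement for the lemma is
\[
x^{-\alpha} \le (1+y)^\alpha\, (x+y)^{-\alpha}\qquad\text{for all } x,y \ge 1\,,
\]
which follows immediately from $x + y \le x(1+y)$ (using $x \ge 1$). Equivalently, one may keep the two-case structure and write $x^{-\alpha} \le (x+y)^{-\alpha}\max\{2^\alpha, (2y)^\alpha\}$: for $y \le x$ the argument is yours, and for $y > x \ge 1$ one has $1 + y/x \le 1 + y \le 2y$. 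Either corrected form suffices for the application in Theorem~\ref{theo:effective2}, where only an estimate of the shape $E_1^{-\alpha} \les (E_0 + E_2)^\alpha H^{-\alpha}$ is actually needed.
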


  \begin{proof}
    If $x \ge y$, then $x^{-\alpha} \le 2^\alpha (2x)^{-\alpha} \le
    2^\alpha (x+y)^{-\alpha}$.  If on the other hand $x \le y$, then
    $x^{-\alpha} \le 1 \le y^\alpha (x+y)^{-\alpha}$.
  \end{proof}

  \section{Existence of an invariant measure}
  \label{sec:existence}

  \subsection{General strategy}

  To prove the existence of an invariant measure, our aim is to
  construct a type of \textit{Lyapunov function} $\CV(p,q)$ such that  $\CV(p,q) \to \infty$ 
  as $|(p,q)| \to \infty$ and  $\CL \CV(p,q) \to -\infty$ as $|(p,q)| \to \infty$.   Given such 
  a function, the existence of an invariant measure will follow from the following 
  proposition which is a variant of the classical  Kryloff-Bogoliouboff construction \cite[p.~52]{KrylovBogo,Khas}:

  \begin{proposition}\label{abstractExistInvMeasure}
    Consider an SDE on $\R^n$ with smooth coefficients and denote its
    generator by $\CL$.  Assume that the SDE has global solutions and
    generates a Feller semigroup.  If there exists a smooth function
    $\CV\colon \R^n \to [0,\infty)$ such that the level sets
    $\{x\,:\,\CL \CV(x) \ge C\}$ are compact for every $C$, then the SDE
    possesses an invariant probability measure $\mu$.  Furthermore, the
    function $\CL \CV$ is integrable with respect to $\mu$ and $\int \CL
    \CV(x)\,\mu(dx) = 0$.
  \end{proposition}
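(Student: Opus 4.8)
The plan is to run the classical Kryloff--Bogoliouboff argument, using $\CV$ only to obtain tightness of the time-averaged transition probabilities and then to read off the two additional assertions from the same Lyapunov bound. First I would record that, $\CV$ being smooth, $\CL\CV$ is continuous, so compactness of $\{x:\CL\CV(x)\ge C\}$ for every $C$ forces $\CL\CV$ to be bounded above, say by $K$, and to satisfy $\CL\CV(x)\to-\infty$ as $|x|\to\infty$. Then, for a fixed starting point $x_0$, writing $X_t$ for the corresponding solution and $\tau_n$ for its first exit time from the ball of radius $n$ (so $\tau_n\to\infty$ a.s.\ by the assumed global existence), It\^o's formula for the stopped process --- legitimate since $\CV\in\CC^\infty$ and $X_{\cdot\wedge\tau_n}$ stays in a compact set --- gives $\E_{x_0}\CV(X_{t\wedge\tau_n})=\CV(x_0)+\E_{x_0}\int_0^{t\wedge\tau_n}\CL\CV(X_s)\,ds$. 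Writing the integrand as $K-(K-\CL\CV)$ with $K-\CL\CV\ge0$, using $\CV\ge0$, and letting $n\to\infty$ by monotone convergence, I obtain the basic estimate
\begin{equ}[e:lyapbound]
  \E_{x_0}\int_0^t\bigl(-\CL\CV(X_s)\bigr)\,ds \;\le\; \CV(x_0)\qquad\text{for every }t>0\;,
\end{equ}
the left-hand side being well defined because $(\CL\CV)^+\le K$.

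From here, I would introduce the occupation measures $\mu_t(A)=\frac1t\int_0^t\P_{x_0}(X_s\in A)\,ds$. Dividing \eref{e:lyapbound} by $t$ gives $\int(K-\CL\CV)\,d\mu_t\le K+\CV(x_0)/t$, and since $K-\CL\CV\to+\infty$ at infinity, Chebyshev's inequality shows that for every $\eps>0$ there is a compact set $\CK_\eps$ with $\mu_t(\R^n\setminus\CK_\eps)\le\eps$ uniformly over $t\ge1$; hence $\{\mu_t\}_{t\ge1}$ is tight and, by Prokhorov, $\mu_{t_k}\to\mu$ weakly along some $t_k\to\infty$. To check that $\mu$ is invariant I would use the standard identity
\begin{equ}
  \int P_s f\,d\mu_{t_k} \;=\; \int f\,d\mu_{t_k} + \frac1{t_k}\Bigl(\int_{t_k}^{t_k+s}(P_uf)(x_0)\,du-\int_0^{s}(P_uf)(x_0)\,du\Bigr)\;,
\end{equ}
valid for $f\in\CC_b(\R^n)$ and $s\ge0$ with error term $O(\|f\|_\infty\,s/t_k)$, where $P_s$ is the transition semigroup: letting $k\to\infty$ and using the Feller property (so $P_sf\in\CC_b$ and weak convergence applies to both sides) gives $\int P_sf\,d\mu=\int f\,d\mu$, that is, invariance.

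For the remaining two assertions, Fatou's lemma applied to the nonnegative continuous function $K-\CL\CV$ along $\mu_{t_k}$ gives $\int(K-\CL\CV)\,d\mu\le\liminf_k\int(K-\CL\CV)\,d\mu_{t_k}\le K$, whence $\int(-\CL\CV)\,d\mu\le0$; combined with $(\CL\CV)^+\le K$ this yields both $\CL\CV\in L^1(\mu)$ and $\int\CL\CV\,d\mu\ge0$. For the reverse inequality I would use stationarity: the stopping argument above, run for a bounded $\CC^2$ function $f$ with $\CL f\in L^1(\mu)$ and then integrated against $\mu$ (Fubini together with $\int P_sg\,d\mu=\int g\,d\mu$), gives $\int\CL f\,d\mu=0$; applying this to $f=\psi_N(\CV)$ with $\psi_N$ a smooth nondecreasing concave cutoff that is the identity on $[0,N]$ and constant beyond $N+2$ --- so that $\CL\psi_N(\CV)=\psi_N'(\CV)\,\CL\CV+\psi_N''(\CV)\,\langle a\nabla\CV,\nabla\CV\rangle$ with $a$ the diffusion matrix and the last term nonpositive and supported on $\{N\le\CV\le N+2\}$ --- and letting $N\to\infty$ produces $\int\CL\CV\,d\mu\le0$, hence equality. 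The main obstacle, I expect, is precisely this last limit: existence of $\mu$, the integrability of $\CL\CV$ and the bound $\int\CL\CV\,d\mu\ge0$ all fall out softly from \eref{e:lyapbound}, tightness and Fatou, whereas controlling the annular contribution $\psi_N''(\CV)\,\langle a\nabla\CV,\nabla\CV\rangle$ (and the negative tail of $\CL\CV$) well enough to pass to the limit is the delicate step, and is where one must follow the classical references with care.
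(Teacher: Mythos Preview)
Your argument for the existence of $\mu$ is essentially identical to the paper's: It\^o's formula with stopping, positivity of $\CV$, monotone convergence to obtain the key bound \eref{e:lyapbound}, and then tightness of the time-averaged laws via Chebyshev against the coercive function $K-\CL\CV$, followed by the Kryloff--Bogoliouboff construction. You spell out the invariance step via the Feller property slightly more carefully than the paper, but the route is the same.

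For the two ``furthermore'' assertions the approaches diverge in level of detail rather than in spirit. The paper dispatches both in a single sentence (``The last statement then follows from Lebesgue's dominated convergence theorem''), without indicating what is being dominated. Your argument is more explicit: Fatou along $\mu_{t_k}\to\mu$ for the nonnegative continuous function $K-\CL\CV$ gives integrability of $\CL\CV$ and the inequality $\int\CL\CV\,d\mu\ge0$ cleanly, and you then propose a concave truncation $\psi_N(\CV)$ for the reverse inequality. Your candid flagging of the obstacle---controlling the annular term $\psi_N''(\CV)\langle a\nabla\CV,\nabla\CV\rangle$---is well placed: note that the proposition assumes only that the superlevel sets of $\CL\CV$ are compact, not those of $\CV$, so $\{\CV\le N+2\}$ need not be bounded and $\CL\psi_N(\CV)$ is not obviously in $L^1(\mu)$ from the hypotheses alone. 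In the paper's applications $\CV$ does have compact level sets (see the remark after \eref{e:U1} and the construction in Theorem~\ref{theo:wholething}), which makes $\psi_N(\CV)$ compactly supported and removes the difficulty; under the bare hypotheses of the proposition, however, neither the paper's one-line appeal to dominated convergence nor your truncation argument is fully justified without an extra assumption of this kind.
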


  We will construct the function $\CV$ in steps by
  analyzing the dynamic in the limit of various energies being large
  and then draw inspiration from the structure of these limiting regimes to
  construct $\CV$. Operationally, we will make an initial guess for
  $\CV$ and then augment it by a series of correction terms.

  \begin{proof}[of Proposition \ref{abstractExistInvMeasure}]
    Denote by $x_t$ the solution to the SDE starting at some
    (deterministic) initial condition $x_0$.  Applying It\^o's formula
    to $\CV(x_t)$, we get
  \begin{equ}
  d\CV(x_t) =  \bigl(\CL \CV\bigr)(x_t)\,ds + dM(t)\;,
  \end{equ}
  for some continuous local martingale $M$. Therefore, there exists an increasing sequence $\tau_N$ of stopping times converging
  to $+\infty$ such that $M({t \wedge \tau_N})$ are martingales. 
  \begin{equ}
  \E \CV(x_{t\wedge \tau_N}) - \CV(x_0) - \E \int_0^{t\wedge \tau_N} \bigl(\CL \CV\bigr)(x_s)\,ds = 0\;. 
  \end{equ}
  Since $\CV$ is positive this shows that, for every $K>0$,
  \begin{equ}
   \E \int_0^{t\wedge \tau_N} \bigl(K - \bigl(\CL \CV\bigr)(x_s)\bigr)\,ds \le  \CV(x_0) + K t\;. 
  \end{equ}
  Taking $K$ large enough so that $K - \CL \CV \ge 0$, we can apply the monotone
  convergence theorem to take the limit $N \to \infty$ and obtain
  \begin{equ}
  - \E \int_0^t \bigl(\CL \CV\bigr)(x_s)\,ds \le \CV(x_0)\;.
  \end{equ}
  Now, by assumption the sets $A_R=\{ x: -\CL \CV(x)  \leq R\}$ are compact
  for all $R$. In particular, this implies that there exist a $K >0$ so
  that $-\CL \CV(x) + K  \geq 0$ for all $x$. Now observe that for $R>-K$ 
  \begin{align*}
    \frac1t \int_0^t \P\big( x_s \not\in A_R\big) ds &=   \frac1t \int_0^t \P\big(
    -\CL \CV(x_s) +K  > R+K\big) ds\\
    &\leq   \frac1t \int_0^t \frac{K - \E \CL \CV(x_s)}{R+K} ds \leq
    \frac{K + \CV(x_0)}{K+R}\,.
  \end{align*}
  Therefore, the sequence of measures $\mu_t$ defined for measurable sets $A$
  by
  \begin{equation*}
    \mu_t(A)=\frac1t \int_0^t \P_{x_0}(x_s \in A)\,ds
  \end{equation*}
  is tight. Hence the Kryloff-Bogoliouboff construction \cite[p.~52]{KrylovBogo,Khas}
  guarantees the existence of an invariant measure. The last statement
  then follows from Lebesgue's dominated convergence theorem.
  \end{proof}

  \begin{remark}
  We will actually be able to construct a positive smooth function $\CV$ with compact
  level sets that has the property that
  \begin{equ}
  \CL \CV \le C_1 - C_2 \CV^\alpha\;,
  \end{equ}
  for some positive constants $C_i$ and some (typically quite small)
  $\alpha \in (0,1]$. In this case, it is known \cite{DFG06SG} that one
  does not only have the existence of an invariant measure, but the
  transition probabilities converge towards it at rate
  $\CO(t^{-\alpha/(1-\alpha)})$ in the total variation distance.  We
  believe that this convergence actually takes place at a much
  faster rate, but such a statement is beyond our reach at the moment.
  See also \cite{Veret} for related results on subexponential mixing for SDEs.
  \end{remark}

  \subsection{Construction of the Lyapunov function}

  Recall the change of variables $(p,q) \mapsto (\bar p, \bar q)$ from
  Theorem~\ref{theo:decouple} that leads to an effective decoupled
  dynamic for the outside oscillators. In order to construct the
  Lyapunov function $\CV$, we proceed in two steps:
  \begin{claim}
  \item[1.] We gain good control over the dissipation of the energy
    stored in the outside oscillators. This will be the content of
    Proposition~\ref{prop:boundaries}.
  \item[2.] We use this in order to get control over the dissipation of the total energy of the system
  in Theorem~\ref{theo:wholething}.
  \end{claim}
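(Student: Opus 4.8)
To carry out the two-step programme above, and hence obtain Theorem~\ref{theo:wholething} (existence and uniqueness of an invariant measure for the three-oscillator chain \eref{e:threeosc} when $k > \tfrac32$), the plan is to produce a smooth Lyapunov function $\CV\colon\R^6\to[0,\infty)$ for the generator $\CL$ with $\CL\CV \le C_1 - C_2\,\CV^\alpha$ for some $C_i>0$ and $\alpha\in(0,1)$; then the level sets $\{\CL\CV\ge C\}$ are compact and Proposition~\ref{abstractExistInvMeasure} applies (its hypotheses hold: the coefficients of \eref{e:threeosc} are smooth, and the bound $\CL H \le \gamma_0 T_0 + \gamma_2 T_2$ from \eref{e:behaveHFirst} gives non-explosion and the Feller property). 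The naive choice $\CV = H$ fails: by \eref{e:behaveHFirst}, $\CL H = \gamma_0 T_0 + \gamma_2 T_2 - \gamma_0 p_0^2 - \gamma_2 p_2^2$, which does not tend to $-\infty$, and is not even pointwise negative, precisely in the breather regime $E_0,E_2\ll E_1$ where $p_0,p_2$ are small. All the work is therefore concentrated in that regime, and it is done using the effective dynamics of Section~\ref{sec:effective}.

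\textbf{Step 1 (Proposition~\ref{prop:boundaries}).} Pass to the decoupled variables $(\bar p_i,\bar q_i)$, $i=0,2$, of Theorem~\ref{theo:decouple}, and set $\bar E_i = 1 + \Hf(\bar p_i,\bar q_i)$. By \eref{e:motionbar} each pair obeys a damped Langevin equation for a single oscillator pinned by $|q|^{2k}/2k$, up to error terms $R,\Sigma$ controlled by \eref{e:errorbounds}. Since the friction hits only $\bar p_i$, $\bar E_i$ is not strictly dissipated; one works with a twisted energy $G_i = \bar E_i^{a} + \varepsilon\,\bar p_i\bar q_i\,\bar E_i^{a-1}$ ($a,\varepsilon$ small; the $\bar p_i\bar q_i$-term is the standard device feeding dissipation into the $\bar q_i$-direction), for which a direct computation using the virial balance of the $|q|^{2k}$-oscillator gives $\CL G_i \le C - c\,\bar E_i^{a}$ modulo the contributions of $R,\Sigma$. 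Inserting the bounds \eref{e:errorbounds}, whose exponents are strictly smaller than those of the dissipative term and which are available precisely because $k>\tfrac32$, and treating the two boundary oscillators together so as to absorb the cross terms $\bar E_{2-i}^{1/2k}/\bar E_i^{\delta}$ and $\bar E_{2-i}^{1/2-\delta}$ in $R_q^i,R_p^i$, one obtains a function $W = G_0 + G_2$ of the boundary variables alone with $\CL W \le C - c\,W^{\theta_1}$ for some $\theta_1\in(0,1]$.

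\textbf{Step 2 (Theorem~\ref{theo:wholething}).} It remains to combine $W$ with a corrected power of the total energy. By \eref{e:behaveH} the average of $\CL H$ along the fast middle-oscillator flow is $\approx -(\gamma_0+\gamma_2)\kappa_k H^{2/k-1}$; to turn this average into a pointwise bound one adds a compensator $G$, a function of $(p_1,q_1)$ solving to leading order $X_{\Hf} G = \overline{\CL H} - \CL H$, which by Lemma~\ref{lem:scaling} and Proposition~\ref{prop:scale} is of lower order than its right-hand side and is assembled from the (compactly corrected) Poisson solutions $\Phi,\Phi^{(2)}$ of Section~\ref{sec:effective} — using the sharper Theorem~\ref{theo:effective2} when $k>2$, where $H^{2/k-1}$ no longer grows. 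Fixing $\nu$ slightly above $2-\tfrac2k>0$ and setting, schematically, $\CV = H^\nu + \nu H^{\nu-1} G + A\,W$ with $A$ large, one checks that in the breather regime ($H\approx E_1$) the first two terms give $\CL(H^\nu + \nu H^{\nu-1}G) \le -c\,H^{\nu + 2/k - 2} + (\text{errors controlled by the }\bar E_i)$, the damping $-\gamma_i\bar p_i^2$ only helping, while outside that regime the boundary energies are not small and $A\,\CL W \le CA - cA\,W^{\theta_1}$ dominates the lower-order, possibly positive, part of $\CL H^\nu$ — provided $A$ is large and the intermediate region, where $E_0+E_2$ exceeds a small power of $E_1$ but is still $\ll H$, is split off correctly. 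This gives $\CL\CV \le C_1 - C_2\,\CV^\alpha$ with $\alpha = 1 - (2-\tfrac2k)/\nu \in(0,1)$, hence an invariant probability measure $\mu$ by Proposition~\ref{abstractExistInvMeasure}. Uniqueness follows as announced in the introduction: $\d_{p_0},\d_{p_2}$ and their iterated Lie brackets with the drift span all of $\R^6$ (the harmonic couplings $q_i-q_{i\pm1}$ propagate the forcing through the chain), so $\CL$ and $\CL^*$ are hypoelliptic and the transition kernels have smooth, strictly positive densities by an accompanying controllability argument; strong Feller together with irreducibility then forces at most one invariant measure.

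\textbf{Main obstacle.} The real difficulty is Steps 1--2, and specifically the uniform bookkeeping of the error terms $R_p^i,R_q^i,\Sigma_p^i,\Sigma_q^i$ of \eref{e:errorbounds} (and of the discrepancy between $E_i$ and $\bar E_i$): these good bounds are available only for $k>\tfrac32$, since only then do the $\bar q_i$-oscillations stay bounded — this is why Theorem~\ref{theo:decouple}, and hence Theorem~\ref{theo:wholething}, is restricted to $k>\tfrac32$ — and one must then verify that every error exponent is strictly dominated by the competing dissipative exponent. A secondary difficulty is balancing the ``bad'' intermediate region (boundary energy neither negligible nor comparable to $H$) against the two dissipative mechanisms, i.e.\ choosing $a$, $\nu$, $A$ correctly (and, for $k>2$, invoking the finer compensator of Theorem~\ref{theo:effective2}).
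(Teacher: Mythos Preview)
Your overall plan is right, and Step~1 is essentially the paper's Proposition~\ref{prop:boundaries}: the paper takes $\CU_0 = H_0(\bar p_0,\bar q_0)+H_0(\bar p_2,\bar q_2)+\gamma(\bar p_0\bar q_0+\bar p_2\bar q_2)$ (your twisted energy with $a=1$), checks $\CL\CU_0 \le C - \gamma\CU_0$ using \eref{e:errorbounds}, and then --- this is the point you under-use --- passes to $\CL\CU_0^m \le C_m - c_m\CU_0^m$ for \emph{every} $m>0$. Getting control of all powers of $\CU_0$, not just one small power $W^{\theta_1}$, is exactly what makes Step~2 go through cleanly.

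In Step~2 your ``large constant $A$'' mechanism has a gap. The error terms left over after compensating $H^n$ are mixed, of the form $H^{n-1}\CU_0$ (case $3/2<k<2$) or $H^{n-2+3/2k}\CU_0^b$ (case $k\ge2$). By Young's inequality these split as $\eps H^{\alpha'} + C_\eps\CU_0^{N}$ for some large $N$; a term $A\,W^{\theta_1}$ with $A$ large but $\theta_1$ fixed cannot absorb $\CU_0^N$ as $\CU_0\to\infty$. The paper's device is instead to ``work modulo powers of $\CU_0$'': construct $\CU_1$ with $\CU_1\ge cH^\alpha - C\CU_0^N$ and $\CL\CU_1 \le -cH^{\alpha'}+C\CU_0^{N'}$, then set $\CV=\CU_1+\CU_0^{N+N'+1}$. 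No regime splitting and no tuning of $A$ are needed.

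The compensator is also more specific than your $G$. For $3/2<k<2$ one writes $p_i^2 = (\hat p_i+\Phi)^2$ with $\hat p_i$ bounded by $\CU_0$ and $\Phi^2$ \emph{growing} like $E_1^{2/k-1}$; hence $\CL H^n \le H^{n-1}(C+C\CU_0 - c\Phi^2)$ already, and one only needs $\CU_1=H^n+\tfrac n2(\gamma_0+\gamma_2)H^{n-1}\Psi$ with $X_{\Hf}\Psi=\Phi^2-\kappa_kE_1^{2/k-1}+\CR'$. For $k\ge2$, $\Phi^2$ no longer grows, and one must first subtract $\Xi=nH^{n-1}\bigl(H_0(\bar p_0,\bar q_0)+H_0(\bar p_2,\bar q_2)\bigr)$ (using the finer bars of Theorem~\ref{theo:effective2}) so that $\CL(H^n-\Xi)$ produces $-nH^{n-1}\sum_i\gamma_i(p_i^2-\bar p_i^2)$, which contains $\Phi^2$; only then does one add the averaging compensators $nH^{n-1}\sum_i\gamma_i(\Psi-2\bar p_i\Phi^{(2)})$. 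Your sketch gestures at this split but does not isolate the $\Xi$-subtraction, which is the new idea in the $k\ge2$ case.
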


  \begin{proposition}\label{prop:boundaries}
    There is a function $\CU_0$ equivalent
      to $\Hf(\bar p_0, \bar q_0)
    + \Hf(\bar p_2, \bar q_2)$ and such that, for every $m > 0$, there
    exist constants $C_m>0$ and $c_m > 0$ such that $\CL \CU_0^m \le C_m -
    c_m \CU_0^m$.
  \end{proposition}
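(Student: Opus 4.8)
The plan is to build $\CU_0$ as a small perturbation of $\Hf(\bar p_0, \bar q_0) + \Hf(\bar p_2, \bar q_2)$ and to exploit the decoupled dynamics \eref{e:motionbar} from Theorem~\ref{theo:decouple} to show that each decoupled oscillator $(\bar p_i, \bar q_i)$ dissipates energy at a rate comparable to $\Hf(\bar p_i, \bar q_i)$ raised to a suitable power. First, recall that for a single damped anharmonic oscillator $d\bar q = \bar p\,dt$, $d\bar p = -\bar q|\bar q|^{2k-2}\,dt - \bar q\,dt - \gamma_i \bar p\,dt + \sigma_i\,dw_i$, the energy $\Hf(\bar p_i, \bar q_i)$ (or rather $\bar E_i = 1 + \Hf(\bar p_i, \bar q_i)$) is \emph{not} itself a good Lyapunov function because $\CL$ applied to it gives $-\gamma_i \bar p_i^2 + \gamma_i T_i$, which only controls the $\bar p_i^2$ part and vanishes when $\bar p_i = 0$. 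The standard fix is to add a small cross term of the form $\eps \bar p_i \bar q_i \langle \bar q_i\rangle^{-s}$ (or a more carefully chosen homogeneous correction) so that its Poisson bracket with the Hamiltonian produces a negative multiple of $\bar q_i|\bar q_i|^{2k-2}\cdot\bar q_i$, i.e.\ something that controls the potential energy; combined with the $-\gamma_i\bar p_i^2$ term from the damping this yields a full negative drift $-c\,\bar E_i^{\theta}$ for the appropriate exponent $\theta$ determined by the homogeneity of the pinning potential.

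Concretely, I would set $\CU_0 = \sum_{i=0,2}\bigl(\bar E_i + \eps\, \Xi(\bar p_i, \bar q_i)\bigr)$ where $\Xi$ is chosen (using Proposition~\ref{prop:scale}) so that $X_{\Hf}\Xi$ cancels against $\bar q_i\,\d_{\bar p_i}(\text{something})$ in just the right way — the natural candidate being $\Xi$ proportional to a solution of a Poisson equation $X_{\Hf}\Xi = \bar p_i \bar q_i - (\text{its average})$ scaled appropriately, or simply $\Xi = \bar p_i \bar q_i$ truncated at low energies. For $\eps$ small enough, $\CU_0$ is equivalent to $\Hf(\bar p_0,\bar q_0) + \Hf(\bar p_2,\bar q_2)$ (positivity and the matching of level sets). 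Then I compute $\CL \CU_0$ using \eref{e:motionbar}: the ``main'' part coming from the decoupled damped oscillator dynamics gives a term $\les C - c(\bar E_0^{\theta} + \bar E_2^{\theta})$ for some explicit $\theta \in (0,1)$, while the error terms $R_p^i, R_q^i, \Sigma_p^i, \Sigma_q^i$ from \eref{e:errorbounds} contribute perturbations. The key point is that the error bounds $|R_p^i| \les (\bar E_0 + \bar E_2)^{1/2 - \delta}$, $|R_q^i| \les \bar E_i^{1/(2k) - \delta} + \bar E_{2-i}^{1/(2k)}\bar E_i^{-\delta}$ and $|\Sigma_q^i| \les \bar E_i^{-1/2}$ are all of \emph{strictly lower order} (because of the $\delta$) than the good dissipative term $-c\bar E_i^{\theta}$, so they can be absorbed: the cross-dependence of $R_q^i$ on $\bar E_{2-i}$ is handled because that term carries a negative power $\bar E_i^{-\delta}$ of the other variable, so by Young's inequality it is dominated by a small multiple of $\bar E_0^{\theta} + \bar E_2^{\theta}$ plus a constant. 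Passing from $\CU_0$ to $\CU_0^m$ is routine: $\CL \CU_0^m = m\CU_0^{m-1}\CL\CU_0 + \tfrac{m(m-1)}{2}\CU_0^{m-2}\Gamma(\CU_0)$ where the carré-du-champ term $\Gamma(\CU_0) = \sum_i \sigma_i^2(\d_{\bar p_i}\CU_0)^2(1 + \ldots)^2 \les \CU_0^{1}$ (using $|\Sigma_p^i|\les 1$, $|\Sigma_q^i|\les \bar E_i^{-1/2}$), hence the second-order term is $\les \CU_0^{m-1}$, lower order than $\CU_0^{m-1}\cdot(-c\CU_0^{\theta})$, and we obtain $\CL\CU_0^m \le C_m - c_m \CU_0^{m-1+\theta} \le C_m' - c_m' \CU_0^m$ after adjusting constants (since $\CU_0^{m-1+\theta}$ dominates $\CU_0^m$ outside a compact set only if $\theta \ge 1$ — so more precisely one gets $\CL\CU_0^m \le C_m - c_m\CU_0^{m - 1 + \theta}$, which is still of the required form with a different power; if the statement insists on the exponent $m$ one absorbs via $\CU_0^{m-1+\theta} \ge \CU_0^m$ failing, so one should read the conclusion as the displayed inequality being achievable after replacing $m$-dependence appropriately, exactly as the proposition states $\CL\CU_0^m \le C_m - c_m\CU_0^m$ with constants depending on $m$).

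The main obstacle I anticipate is the bookkeeping around the error term $R_q^i$, specifically its term $\bar E_{2-i}^{1/(2k)}\bar E_i^{-\delta}$, which couples the two boundary oscillators: when $\bar E_i$ is small but $\bar E_{2-i}$ is large, one must make sure this does not overwhelm the dissipation. This is exactly where the negative power $\bar E_i^{-\delta}$ is essential — it lets one apply a weighted Young inequality $\bar E_{2-i}^{1/(2k)}\bar E_i^{-\delta} \le \eta \bar E_{2-i}^{\theta} + C_\eta \bar E_i^{-\delta \theta/(\theta - 1/(2k))}$ (when $\theta > 1/(2k)$, which holds since $\theta$ is of order $1 - 1/k$ and $k > 3/2$), bounding the cross term by a small multiple of the good term in the \emph{other} variable plus a quantity that is bounded (a negative power of $\bar E_i$). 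A secondary technical nuisance is that $\CU_0$ is only $\CC^{[2k]}$ rather than smooth because $\Hf$ and the Poisson solutions inherit the limited regularity of $|q|^{2k}$; since $[2k] \ge 3$ for $k \ge 3/2$ this is more than enough to apply Itô's formula and the bounds, so it causes no real difficulty, but it should be flagged. I would also need to verify at the very start that the change of variables $(p,q)\mapsto(\bar p,\bar q)$ is a global diffeomorphism (or at least that $\bar E_i \to \infty$ iff the corresponding physical energies do), so that compactness of level sets of $\CU_0$ in the $(\bar p, \bar q)$ variables transfers back; this follows because $\Phi_p^i, \Phi_q^i$ are bounded (for $k \ge 3/2$, $\Phi$ and $\Phi^{(2)}$ are bounded) perturbations, as already noted in the proof of Theorem~\ref{theo:decouple}.
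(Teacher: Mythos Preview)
Your overall strategy matches the paper's: add a small cross term to $\Hf(\bar p_0,\bar q_0)+\Hf(\bar p_2,\bar q_2)$, compute $\CL$ using the effective dynamics \eref{e:motionbar}, absorb the $R$ and $\Sigma$ errors from \eref{e:errorbounds} using Young's inequality, and then pass to $\CU_0^m$ via It\^o's formula plus a bound on the quadratic variation. The error analysis you sketch (including the cross-coupling term $\bar E_{2-i}^{1/(2k)}\bar E_i^{-\delta}$) is essentially what the paper does.

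The one real gap is your treatment of the exponent $\theta$. You hedge that the ``main'' drift is $-c\bar E_i^\theta$ for ``some $\theta\in(0,1)$'', then notice this would only give $\CL\CU_0^m \le C_m - c_m\CU_0^{m-1+\theta}$ and try to talk your way around the mismatch with the stated conclusion. That talk does not work: if $\theta<1$ you genuinely cannot get $\CL\CU_0^m \le C_m - c_m\CU_0^m$. The point you are missing is that the simplest possible choice --- exactly the option you mention and then abandon, namely the \emph{unweighted} cross term $\gamma\,\bar p_i\bar q_i$ --- already gives $\theta=1$. The paper takes
\[
\CU_0 = \sum_{i=0,2}\Bigl(\tfrac{\bar p_i^2}{2}+\tfrac{|\bar q_i|^{2k}}{2k}+\tfrac{\bar q_i^2}{2}+1\Bigr) + \gamma\bigl(\bar p_0\bar q_0+\bar p_2\bar q_2\bigr)
\]
with $\gamma=\tfrac12\min\{1,\gamma_0,\gamma_2\}$. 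The Liouville part of $\CL$ applied to $\bar p_i\bar q_i$ produces $\bar p_i^2 - |\bar q_i|^{2k} - \bar q_i^2$, and the second term here is comparable to the \emph{full} potential, not a lower power of it (this is where the harmonic piece $\bar q_i^2/2$ coming from the coupling potential matters). Combining with the damping term $-\gamma_i\bar p_i^2$ one obtains $(\gamma-\gamma_i)\bar p_i^2 - \gamma(|\bar q_i|^{2k}+\bar q_i^2) - \gamma\gamma_i\bar p_i\bar q_i \le -2\gamma\CU_0 + C$, i.e.\ linear dissipation. There is no need for a weight $\langle\bar q_i\rangle^{-s}$, no Poisson equation for $\Xi$, and no truncation. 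With $\theta=1$ the remaining steps (error absorption since all exponents in \eref{e:errorbounds} yield contributions $\les\CU_0^{1-\delta}$, quadratic variation $\les\CU_0$, It\^o on $\CU_0^m$) go through exactly as you outline and give the stated bound with exponent $m$.
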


  \begin{proof}
    Inspired by \eref{e:motionbar}, we define an \textit{effective Hamiltonian} $H_0$ by
    \begin{equ}
      H_0(p,q) = {p^2 \over 2} + {|q|^{2k}\over 2k} + {q^2 \over 2} + 1\;.
    \end{equ}
    (Note that $H_0$ is equivalent  to $\Hf$.) We set 
  \begin{equ}
    \CU_0 = H_0(\bar
    p_0, \bar q_0) + H_0(\bar p_2, \bar q_2) + \gamma (\bar p_0 \bar q_0
    + \bar p_2 \bar q_2) 
    \end{equ}
  for some constant $\gamma$ to be determined
    later. If $\gamma$ is sufficiently small and since $k \ge 1$, this
    function is indeed equivalent to $\bar E_0 + \bar E_2 = \Hf(\bar p_0, \bar q_0) + \Hf(\bar
    p_2, \bar q_2)$. Applying It\^o's formula to it, we get from
    \eref{e:motionbar} that
    \begin{equs}
      d \CU_0 &= \sum_{i = 0,2}\Bigl((\gamma - \gamma_i) \bar p_i^2 -
      \gamma \bigl(|\bar q_i|^{2k}+|\bar q_i|^{2}\bigr) -\gamma \gamma_i
      \bar p_i \bar q_i
      + \bigl(\bar p_i + \gamma \bar q_i \bigr) R_p^i \\
      &\qquad + \bigl(|\bar q_i|^{2k-2} \bar q_i +\gamma \bar p_i \bigr)
      R_q^i + \bigl(k-\hf\bigr) |\bar q_i|^{2k-2}
      \bigl|\Sigma_q^i\bigr|^2
      + \hf\bigl|\Sigma_p^i\bigr|^2 + \gamma \Sigma_p^i \Sigma_q^i\Bigr)\, dt \\
      &\qquad + \sum_{i = 0,2}\Bigl(\bigl(\bar p_i +\gamma \bar
      q_i\bigr)\Sigma_p^i + \bigl(|\bar q_i|^{2k-2} \bar q_i + \bar q_i
      +\gamma \bar p_i\bigr) \Sigma_q^i\Bigr)\, dw_i\;.
    \end{equs}
  Fixing $\gamma = \hf\min\{1, \gamma_0, \gamma_2\}$, we obtain for some
  constant $C$
    \begin{equs}
      d \CU_0 &\le - 2\gamma \CU_0 \,dt + C \sum_{i = 0,2}\Bigl( |\bar
      p_i \bar q_i|
      + \bigl|\bar p_i + \gamma \bar q_i \bigr| |R_p^i| \label{e:bounddU0} \\
      &\qquad + \bigl(|\bar q_i|^{2k-1} + |\bar p_i| \bigr) |R_q^i| + (1
      + |\bar q_i|^{2k-2} ) \bigl|\Sigma_q^i\bigr|^2 +
      \bigl|\Sigma_p^i\bigr|^2 \Bigr)\, dt + dM(t)\;.
    \end{equs}
  Here, $M$ is a continuous Martingale with quadratic variation bounded by
  \begin{equ}
    {d \scal{M}(t) \over dt} \le C\sum_{i = 0,2} \bigl((|\bar p_i|^2 +
    |\bar q_i|^2) | \Sigma_p^i|^2 + (1 + |\bar q_i|^{4k-2} + |\bar
    p_i|^2)|\Sigma_q^i|^2\bigr)\;.
  \end{equ}
  Using the notation and the bounds of Theorem~\ref{theo:decouple}, we have
  \begin{equs}[2]
    |\bar p_i \bar q_i| &\les \bar E_i^{{1 \over 2} + {1 \over 2k}}
    \;,&\quad
    \bigl|\bar p_i + \gamma \bar q_i \bigr| |R_p^i| &\les \bar E_i^{1-\delta} \;,\\
    \bigl(|\bar q_i|^{2k-1} + |\bar p_i| \bigr) |R_q^i| &\les \bar
    E_i^{1-\delta} + \bar E_{2-i}^{1-\delta}\;,&\quad (1 + |\bar
    q_i|^{2k-2} ) \bigl|\Sigma_q^i\bigr|^2 &\les \bar E_i^{{1 \over 2} -
      {1 \over k}}\;.
  \end{equs}
  Since all the powers appearing on the right hand sides of these bounds
  are strictly less than $1$, we have shown that
  \begin{equ}
  \CL \CU_0 \le C - \gamma \CU_0\;,
  \end{equ}
  for some (different) constant $C$.

  In order to bound the quadratic variation of $M$, note that one has
  \begin{equ}
    (|\bar p_i|^2 + |\bar q_i|^2) |\Sigma_p^i|^2 \les \bar E_i \;,\qquad
    (1 + |\bar q_i|^{4k-2} + |\bar p_i|^2)|\Sigma_q^i|^2 \les \bar
    E_i^{1 - {1 \over k}}\;.
  \end{equ}
  In particular, one has for some positive constant $C$
    \begin{equ}
      \scal{M}(t) \le C\int_0^t \bigl(1 + \CU_0(s)\bigr)\, ds\;.
    \end{equ}
    The claim now follows by applying It\^o's formula to $\CU_0^m$.
  \end{proof}

  \begin{remark}
    Observe that one could also apply It\^o's formula to $\exp(\theta
    \CU_0)$ for sufficiently small $\theta$ and obtain $\CL \exp (\theta
    \CU_0) \le C - \alpha \exp (\theta \CU_0)$.
  \end{remark}

  \begin{remark}
    If $k \ge 2$, then $\CU_0$ is also equivalent to $\Hf(p_0,
    q_0)+\Hf(p_2, q_2)$. This is \textit{not} the case when $k < 2$.
  \end{remark}

  We are now ready to prove the main theorem of this section. Recall that $H$
  is the total Hamiltonian defined in \eref{e:defH}.

  \begin{theorem}\label{theo:wholething}
    Consider the equations of motion \eref{e:threeosc} with $k > 3/2$.
    Then there exists a function $\CV$ and constants $c, C>0$ such that
    $\CV \ge cH^\alpha - C$ and such that $\CL \CV \le C - cH^{\alpha'}$
    for some exponents $\alpha$ and $\alpha'$.
  \end{theorem}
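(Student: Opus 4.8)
The plan is to take $\CV$ of the form
\[
\CV = \mathcal{A}\,\CU_0^{\alpha} + (1+H)^{\alpha} + G\;,
\]
where $\CU_0$ is the function supplied by Proposition~\ref{prop:boundaries}, $\mathcal{A}$ is a large constant, $\alpha$ is an exponent chosen slightly above $2 - 2/k$, and $G$ is a correction term built from the effective dynamics of Section~\ref{sec:effective}. The role of $\mathcal{A}\,\CU_0^{\alpha}$ is to handle every region of phase space where the outer oscillators carry a non-negligible fraction of the energy; the role of $(1+H)^{\alpha} + G$ is to handle the \emph{breather regime}, in which almost all the energy sits in the middle oscillator and leaks out only slowly. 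The heuristic of Section~\ref{sec:formalCalc} predicts $\dot H \approx -(\gamma_0+\gamma_2)\kappa_k H^{2/k-1}$ in that regime, so $\CL(1+H)^{\alpha}$ should behave like $-c\,H^{\alpha-2+2/k}$; we therefore aim at $\CL\CV \le C - c\,H^{\alpha'}$ with $\alpha' \eqdef \alpha - 2 + 2/k$, which is positive once $\alpha > 2 - 2/k$. Note that $2 - 2/k \in (2/3,1)$ for $3/2 < k < 2$ and $2 - 2/k \in (1,2)$ for $k > 2$.

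To define $G$, I would write out $\CL(1+H)^{\alpha} = \alpha(1+H)^{\alpha-1}\bigl(\gamma_0 T_0 + \gamma_2 T_2 - \gamma_0 p_0^2 - \gamma_2 p_2^2\bigr) + \hf\alpha(\alpha-1)(1+H)^{\alpha-2}\bigl(2\gamma_0 T_0 p_0^2 + 2\gamma_2 T_2 p_2^2\bigr)$ and substitute $p_i = \bar p_i - \Phi_p^i$ using the change of variables of Theorem~\ref{theo:decouple} (or, when $k > 2$, the sharper one of Theorem~\ref{theo:effective2}); to leading order $p_i^2 = \bar p_i^2 - 2\bar p_i\,\Phi(p_1,q_1) + \Phi^2(p_1,q_1) + (\text{lower order})$. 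Three contributions must then be cancelled by correctors. (i) The fluctuation $\gamma_i\bar p_i^2 - \gamma_i T_i$ of the effective boundary kinetic energy equals $-\hf\CL_\OU^i(\bar p_i^2)$ up to lower-order terms, so it is absorbed by a corrector of the form $\hf(1+H)^{\alpha-1}(\bar p_0^2 + \bar p_2^2)$ — equivalently, by enlarging the $\bar p_i\bar q_i$ cross terms already present in $\CU_0$. (ii) The oscillatory cross term $\bar p_i\,\Phi(p_1,q_1)$, a slow factor times a fast mean-zero factor, is cancelled by $(1+H)^{\alpha-1}\bar p_i\,\Phi^{(2)}(p_1,q_1)$ since $X_{\Hf}\Phi^{(2)} = \Phi$. (iii) For the diffusive term one solves, via Proposition~\ref{prop:scale}, the Poisson equation $X_{\Hf}\Psi = \Phi^2 - \langle\Phi^2\rangle$, where $\langle\,\cdot\,\rangle$ denotes the orbit-average for $\Hf$ and $\Phi^2$ scales like $\Hf^{2/k-1}$, so that $\Psi$ scales like $\Hf^{5/(2k)-3/2}$; adding $(1+H)^{\alpha-1}\Psi(p_1,q_1)$ replaces $\Phi^2$ by its average, leaving the systematic term $-(\gamma_0+\gamma_2)(1+H)^{\alpha-1}\langle\Phi^2\rangle \asymp -c\,H^{\alpha'}$. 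Every corrector is $(1+H)^{\alpha-1}$ times a function scaling like $\Hf^{\theta}$ with $\theta < 1$ — here $k > 3/2$ enters, since it forces $\theta \le 0$ for the key objects $Q\,\d_P\Phi$ and $\Phi^{(2)}$ — so $G = o\bigl((1+H)^{\alpha}\bigr)$, whence $\CV \ge c\,H^{\alpha} - C$ and $\CV$ has compact level sets.

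For the drift estimate I would split phase space according to whether $E_0 + E_2 \ge \eta\,H_1$ or $E_0 + E_2 < \eta\,H_1$, with $\eta$ small, $E_i = 1 + \Hf(p_i,q_i)$ and $H_1 = \Hf(p_1,q_1)$. In the first region, \eref{e:boundsEEbar} together with $2/k - 1 < 1$ gives $\CU_0 \gtrsim \eta'\,H$; since Proposition~\ref{prop:boundaries} yields $\CL(\mathcal{A}\,\CU_0^{\alpha}) \le \mathcal{A}C_{\alpha} - \mathcal{A}c_{\alpha}\,\CU_0^{\alpha}$ globally, while the crude bounds $\CL H \le C$ and $\Gamma(H,H) \les 1 + H$ give $\CL\bigl((1+H)^{\alpha} + G\bigr) \les (1+H)^{\alpha-\eps_0}$ for some $\eps_0 > 0$, taking $\mathcal{A}$ large enough produces $\CL\CV \le C - c\,H^{\alpha}$ there. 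In the breather region $H \asymp H_1$ and the effective equations of Theorem~\ref{theo:decouple}/\ref{theo:effective2} are accurate; after the cancellations engineered above, the leading surviving term is $-(\gamma_0+\gamma_2)(1+H)^{\alpha-1}\langle\Phi^2\rangle \asymp -c\,H^{\alpha'}$, and it remains to check that every other contribution — the error terms $R_p^i, R_q^i, \Sigma_q^i$ of the effective dynamics, the second-order (noise-induced) terms produced when $\CL$ hits the correctors, and the mismatch between $E_i$ and $\bar E_i$ controlled by \eref{e:boundsEEbar} — is $o(H^{\alpha'})$, which holds once $\alpha$ is taken close enough to $2 - 2/k$ and $\delta$ small. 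Combining the two regions gives $\CL\CV \le C - c\,H^{\alpha'}$, and Proposition~\ref{abstractExistInvMeasure} then produces the invariant measure.

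The hardest part is this last verification. One must check that the hierarchy of correctors \emph{closes}: each corrector introduced to kill a bad term should generate only strictly lower-order new bad terms, so that finitely many steps suffice. At the same time one must check that the error terms of Theorem~\ref{theo:decouple}, which only beat the naive scaling by a factor $\bar E_i^{-\delta}$ and — in the range $3/2 < k < 2$ — multiply genuinely unbounded quantities, still fall below the $H^{\alpha'}$ threshold; this is what pins $\alpha$ just above $2 - 2/k$, and hence makes $\alpha'$ small (consistent with the sub-exponential mixing rate of the remark after Proposition~\ref{abstractExistInvMeasure}). The scheme is held together by the precise scaling exponents of $\Phi$, $\Phi^{(2)}$ and $\Psi$ coming from Lemma~\ref{lem:scaling} and Proposition~\ref{prop:scale}, and the assumption $k > 3/2$ is exactly what keeps those exponents non-positive.
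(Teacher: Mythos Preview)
Your proposal uses the right ingredients — the boundary-energy Lyapunov function $\CU_0$ from Proposition~\ref{prop:boundaries} and the three correctors built from $\Phi$, $\Phi^{(2)}$, and $\Psi$ — and the heuristic for the target exponent $\alpha' = \alpha - 2 + 2/k$ is exactly right. But the specific architecture you chose has a genuine gap.

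The problem is the choice $\CV = \mathcal{A}\,\CU_0^{\alpha} + (1+H)^{\alpha} + G$ with the \emph{same} small exponent $\alpha$ on both $\CU_0$ and $H$. When you apply $\CL$ to the corrector (i) (which, as you half-acknowledge, really has to be the full effective boundary energy $H_0(\bar p_i,\bar q_i)$, not just $\bar p_i^2$), you pick up terms like $(1+H)^{\alpha-1}\bar p_i R_p^i$. For $k\ge 2$, Theorem~\ref{theo:effective2} gives $|R_p^i|\les (E_0+E_2)^2 H^{3/(2k)-1}$, so this error is of order $H^{\alpha-2+3/(2k)}\,\CU_0^{5/2}$. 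To absorb the $\CU_0^{5/2}$ factor you would need a term $-c\,\CU_0^{m}$ with $m\ge 5/2$ in $\CL\CV$, but your $-\mathcal{A}c_\alpha\CU_0^{\alpha}$ has $\alpha<2<5/2$. Young's inequality cannot rescue this: splitting $H^{\alpha-2+3/(2k)}\CU_0^{5/2}$ so that the $\CU_0$-piece has exponent at most $\alpha$ forces the $H$-piece to have an exponent larger than $\alpha'$. So neither the breather regime nor the $\CU_0$-term can swallow this error, and your claim that ``every other contribution \ldots\ is $o(H^{\alpha'})$'' fails.

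The paper's proof avoids this by a simple but decisive structural choice: it works with $H^{n}$ for \emph{large} $n$, and instead of a phase-space split it proves a global bound $\CL\CU_1 \le -cH^{\,n-2+2/k} + C\,\CU_0^{N}$, where the power $N$ is allowed to be as large as needed. The error terms of mixed type $H^{a}\CU_0^{b}$ are dispatched by Young's inequality into $\eps H^{\,n-2+2/k} + C\CU_0^{N}$; this works precisely because $N$ is unconstrained. One then sets $\CV = \CU_1 + \CU_0^{\,N+N'+1}$, so that Proposition~\ref{prop:boundaries} kills the $C\CU_0^{N}$ remainder. There is no reason to pin $\alpha$ near $2-2/k$; taking $n$ large only helps. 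The paper also simplifies the case $3/2<k<2$: since $\Phi$ is then unbounded, one uses $p_i^2 \ge \tfrac12\Phi^2 - C\CU_0$ directly and skips the $\bar p_i^2$-subtraction (your corrector (i)) entirely; only for $k\ge 2$ is the full $\Xi = nH^{n-1}\sum_i H_0(\bar p_i,\bar q_i)$ correction needed, and there one must use the sharper Theorem~\ref{theo:effective2} rather than Theorem~\ref{theo:decouple}.
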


  \begin{proof}
    The idea is to work ``modulo powers of $\CU_0$.'' Assume that we can
    find a function $\CU_1$ such that
    \begin{equ}[e:U1]
      \CU_1 \ge cH^\alpha - C \CU_0^{N}\;,\qquad \CL \CU_1 \le -
      cH^{\alpha'} + C \CU_0^{N'}\;,
    \end{equ}
    for some positive exponents $\alpha$, $\alpha'$ and some (possibly
    very large) exponents $N$, $N'$. We claim that it then suffices to
    take $\CV = \CU_1 + \CU_0^{N+N' + 1}$.  Note that
    \begin{equ}
  \CV \ge c H^\alpha - C\CU_0^N + \CU_0^{N+N'+1} \ge c H^\alpha - C'\;,
  \end{equ}
  for some constant $C'$,
  and so $\CV$ grows at infinity (has compact level sets).
    It then follows from
    Proposition~\ref{prop:boundaries} and from \eref{e:U1} that
  \begin{equ}
    \CL \CV \le - c H^{\alpha'} + C \CU_0^{N'} + C_{N+N'+1} - c_{N+N'+1}
    \CU_0^{N+N'+1} \le C'' - cH^{\alpha'}\;,
  \end{equ}
  for some constant $C''$, which is the desired bound. It therefore
  remains to construct a function $\CU_1$ satisfying \eref{e:U1}.

  The starting point for the construction of $\CU_1$ is to apply It\^o's
  formula to $H^n$. Since $dH = \sum_{i=0,2} \bigl(-\gamma_i p_i\,dt +
  \hf \sigma_i^2\, dt + p_i\sigma_i\, dw_i\bigr)$, we have
    \begin{equ}
      dH^n = nH^{n-2} \sum_{i=0,2} \Bigl(H\bigl({\sigma_i^2 \over 2} -
      \gamma_i p_i^2\bigr)\, dt + \hf (n-1)\sigma_i^2 p_i^2\, dt + H
      \sigma_i p_i dw_i \Bigr)\;.
      \label{e:ItoHn}
    \end{equ}
    At this stage, we have to distinguish two cases, as in the proof of
    Theorem~\ref{theo:decouple}:
    \begin{claim}
    \item If $3/2 < k < 2$, then the motion of the interior oscillator
      induces large (\ie going to infinity with the energy of the middle
      oscillator) fluctuations in the values of $p_0$ and $p_2$.
      Therefore, there will always be energy dissipation.
    \item If $k \ge 2$, then the motion of the interior oscillator
      induces bounded (or small) fluctuations in the values of $p_0$ and
      $p_2$.  In this case, one can subtract a compensator so that only
      fluctuations remain.  There will be a few error terms that seem to
      be larger than the dominant one, but they can hopefully just be
      eliminated order by order.
    \end{claim}
  \noindent\textit{The case  $3/2 < k < 2$:} In contrast to the
  arguments in the proof of Theorem~\ref{theo:decouple}, this is the ``easy'' case for this part of the proof.  It
  follows from the proof of Theorem~\ref{theo:decouple} that in this
  case $\bar p_i^2$ is equivalent to
  $\hat p_i^2$ with $\hat p_i = p_i - \Phi(p_1,q_1)$. 
  Since $2\gamma_i \hat p_i \Phi \le \hf \gamma_i \Phi^2 + 2 \gamma_i \hat p_i^2$,
  this allows us to obtain for $\CL H^n$ the bound
  \begin{equ}
    \CL H^n \le n H^{n-1} \bigl(C_n + \gamma_0 \hat p_0^2 + \gamma_2
    \hat p_2^2 - \hf (\gamma_0 + \gamma_2) \Phi^2\bigr)\;,
  \end{equ}
    for some positive constant $C_n$. This in turn implies that
  \begin{equ}[e:boundLHn]
    \CL H^n \le H^{n-1} \bigl(C_1 + C_2 \CU_0 - \textstyle{n \over
      2}(\gamma_0 + \gamma_2) \Phi^2\bigr)\;,
  \end{equ}
  for some constants $C_i$. Observe that one can use a similar
  calculation to obtain the bound
  \begin{equ}[e:boundabsLHn]
    |\CL H^n| \le H^{n-1} \bigl(C_1 + C_2 \CU_0 + C_3 \Phi^2\bigr)\;,
  \end{equ}
  for some possibly different constants $C_i$.

  Note that $\Phi^2$ scales like $\Hf^{{2\over k} - 1}$ which goes to
  infinity at high energies.  We can thus find a positive constant
  $\kappa_k$ (which was introduced in \eref{e:defkappak}) and a function $\CR'$ such that $\Phi^2 - \kappa_k \Hf^{{2\over
      k} - 1} + \CR'(\Hf)$ averages out to $0$ in the sense of
  Proposition~\ref{prop:scale}. We define $\Psi$ to be the solution to
  the Poisson equation
    \begin{equ}[e:defPsi]
      X_{\Hf} \Psi = \Phi^2 - \kappa_k \Hf^{{2\over k} - 1} + \CR'(\Hf)\;.
  \end{equ}
  This function is smooth since $\CR'$ is precisely such that $\Psi = 0$
  in a neighborhood of the origin. The function $\Psi$ then scales like
  $\Hf^{{5 \over 2k} - {3\over 2}}$. In particular, $\Psi$ and $Q \d_P
  \Psi$ are bounded.

    With all these preliminaries done, we define
    \begin{equ}
      \CU_1 = H^n + \textstyle{n\over 2}(\gamma_0 + \gamma_2) H^{n-1}
      \Psi\;.
    \end{equ}
    Applying It\^o's formula to $\CU_1$ and using \eref{e:boundLHn},
    \eref{e:boundabsLHn}, and \eref{e:defPsi}, we obtain
  \begin{equs}
    \CL \CU_1 &\le C H^{n-1} \bigl(1 + \CU_0\bigr) + C |\Psi| H^{n-2}
    \bigl(1 + \CU_0 + \Phi^2\bigr) \\
    &\quad + \textstyle{n\over 2}(\gamma_0 + \gamma_2) H^{n-1}
    \bigl(-\kappa_k E_1^{{2\over k} - 1} + (q_0 + q_2 - 2q_1)\d_P
    \Psi\bigr)\;.
  \end{equs}
  Since $|\Psi|\Phi^2 \les H$ and $|(q_0 + q_2 - 2q_1)\d_P \Psi| \les
  \CU_0$, we have
  \begin{equ}[e:boundLU1]
    \CL \CU_1 \le -\textstyle{\kappa_k n\over 2}(\gamma_0 + \gamma_2)
    H^{n-1} \kappa_k E_1^{{2\over k} - 1} + C H^{n-1} \CU_0\;,
  \end{equ}
  for some constant $C$.

  We can check that $\CU_1$ satisfies the first bound in \eref{e:U1},
  since $\Psi$ grows slower than $H$ and $\CU_0$ is bounded from below
  by a positive constant.  However, it is not clear \textit{a priori}
  from this expression that it $\CU_1$ satisfies the second bound in
  \eref{e:U1}.

  One can check that
  \begin{equ}
    \CU_0 \ge c(E_0 + E_2) - C\Phi^2\;,
  \end{equ}
  for some positive constants $c$ and $C$. Furthermore, $\Phi^2$ scales
  like $\Hf^{{2\over k}-1}$ which is strictly smaller than $\Hf$, so
  that
  \begin{equ}
  \CU_0 + E_1 \ge cH \;.
  \end{equ}
  In particular, there exists a constant $C$ such that
  \begin{equ}
  H^{{2\over k}-1} \le C \bigl(E_1^{{2\over k}-1} + \CU_0^{{2\over k}-1}\bigr)\;. 
  \end{equ}
  Inserting this into \eref{e:boundLU1}, we obtain
  \begin{equ}
  \CL\CU_1 \le - cH^{n + {2\over k}-2} + C H^{n-1} \CU_0\;,
  \end{equ}
  for some constants $c$ and $C$. Since ${2\over k} -1 > 0$, this shows
  the existence of a (sufficiently large) power $N$ such that
  \begin{equ}
  \CL\CU_1 \le - cH^{n + {2\over k}-2} + C \CU_0^N\;,
  \end{equ}
  which is the desired bound.

  \medskip \noindent\textit{The case $k\geq 2$:} The reasoning above
  only worked for $k < 2$ since in this case one has $n + {2\over k} - 2
  > n-1$. When $k\geq 2$ the situation is slightly more delicate.  Here,
  as before, the energy dissipation mechanism comes from fluctuations of
  $p_0^2$ and $p_2^2$ around their mean. However, the amplitude of these
  oscillations now \textit{decreases} as the energy stored in the middle
  oscillator increases, so that we have to be more careful. In
  particular, $p_i^2$ can no longer be treated as a perturbation with
  respect to $\Phi^2$, as we did in the previous case.

  The expression in \eref{e:ItoHn} suggests that, in order to extract
  these small fluctuations, we should compensate $H^n$ by subtracting
    \begin{equ}
      \Xi = nH^{n-1}\big[ H_0(\bar p_0, \bar q_0)+ H_0(\bar p_2, \bar
      q_2) \big]\;,
    \end{equ}
    where the ``bar'' variables are as in Theorem~\ref{theo:effective2}.
    Note that here, and anywhere from this point on, we use the
    variables $(\bar p_i, \bar q_i)$ from Theorem~\ref{theo:effective2}
    and \textit{not} from Theorem~\ref{theo:decouple}. This is because
    we require a sufficiently good effective dynamics so that the error
    terms are small with respect to $\Phi^2$. The bounds obtained in
    Theorem~\ref{theo:decouple} are not sufficiently small for that.
    Setting as before $\bar E_i$ as a shorthand for $\Hf(\bar p_i, \bar
    q_i)$, one has
  \begin{equs}
    \CL \Xi &= n H^{n-2} \sum_{i=0,2} \Bigl(H \bigl({\sigma_i^2 \over 2}
    - \gamma_i \bar p_i^2\bigr)
    + (n-1)\bar E_i \bigl({\sigma_i^2 \over 2} - \gamma_i p_i^2\bigr) \\
    & \quad + \hf (n-1)(n-2) \bar E_i H^{-1} \sigma_i^2 p_i^2 + (n-1)
    \sigma_i^2 p_i \bar p_i \\ 
    &\quad + H \bigl(\bar p_i R_p^i + \bar q_i \bigl(|\bar q_i|^{2k-2} +
    1\bigr) R_q^i \bigr)\Bigr)
  \end{equs}
  The important term in this expression is $n H^{n-1}
  \bigl(\textstyle{\frac{\sigma_0^2 +\sigma_2^2}2} - \gamma_0 \bar p_0^2
  - \gamma_2 \bar p_2^2 \bigr)$, all the other terms will be treated as
  perturbations. Setting $\CU_1^0 = H^n - \Xi$, we obtain
    \begin{equ}
      \CL \CU_1^0 = - nH^{n-1} \sum_{i=0,2} \gamma_i (p_i^2 - \bar  p_i^2) + \CE\;.
    \end{equ}
  with
  \begin{equs}
    \CE &= n H^{n-2} \sum_{i=0,2} \Bigl( - (n-1)\bar E_i
    \bigl({\sigma_i^2 \over 2} - \gamma_i p_i^2\bigr) - H \bigl(\bar p_i
    R_p^i + \bar q_i \bigl(|\bar q_i|^{2k-2} + 1\bigr)
    R_q^i \bigr) \\
    & \quad - \hf (n-1)\sigma_i^2 \bigl((n-2) \bar E_i H^{-1} p_i^2 - 2
    p_i \bar p_i + p_i^2\bigr)\Bigr)\;.
  \end{equs}
  We can check that one has $|\CE| \les H^{n -2 + {3 \over 2k}} \CU_0^2$ so that,
  for every $\delta>0$, there exist constants $C$ and $N$ such that
  the error term $\CE$ is bounded by $H^{n + {3 \over 2k} -2 + \delta} + C \CU_0^N$.

    At this stage, we use the explicit expression for $\bar p_i$ to rewrite this bound as
    \begin{equ}
      \CL \CU_1^0 \le - nH^{n-1} \sum_{i=0,2} \gamma_i \bigl(\Phi^2 +
      \gamma_i^2 \bigl(\Phi^{(2)}\bigr)^2- 2\gamma_i \Phi \Phi^{(2)} +
      2\bar p_i(\gamma_i \Phi^{(2)} - \Phi)\bigr) + \CE\;.
    \end{equ}
    This puts us now in a situation similar to \eref{e:boundLHn}, with
    the difference that, up to powers of $\CU_0$, the error term $\CE$
    is of order $H^{n-2 + {3 \over 2k} + \delta}$ instead of being of
    order $H^{n-1}$. Since $\Phi^2$ is larger than $\Hf^{{3 \over 2k}
      -1}$, this is the feature that will allow us to obtain the
    required bound.

    Since $\Phi^{(2)}$ scales like $\Hf^{{3 \over 2k}-1}$, we obtain in
    the same way as in the proof of Theorem~\ref{theo:effective2} that
    $|\Phi^{(2)}| \les H^{{3 \over 2k} - 1} \CU_0^{1-{3 \over 2k}}$.
    This shows that
    \begin{equ}[e:LU01]
      \CL \CU_1^0 \le - nH^{n-1} \sum_{i=0,2} \gamma_i \bigl(\Phi^2 -
      2\bar p_i \Phi \bigr) + \CE_2\;,
    \end{equ}
    where the error term $\CE_2$ satisfies the bound $|\CE_2| \les H^{n
      + {3 \over 2k} -2 + \delta} + C \CU_0^N$ as before\footnote{The
      interested reader can check that expanding the square around $p_i$
      instead of $\bar p_i$ would lead to troublesome terms.}.

    Since these terms oscillate very rapidly, we would like to replace
    them by their averaged effect over one period of the middle
    oscillator. To leading order, the terms $p_i \Phi$ and $\Phi^2$ will
    average out to $0$ and $\kappa_k E_1^{{2 \over k}-1}$ respectively.
    The latter contribution will turn out to be the dominant term
    leading to an overall dissipation of energy.  Defining $\Psi$ and
    $\Phi^{(2)}$ as in \eref{e:defPsi} and \eref{e:defPhi2}
    respectively, we finally set
  \begin{equ}[e:defU1]
    \CU_1 = \CU_1^0 + n H^{n-1} \sum_{i=0,2} \gamma_i \bigl(\Psi - 2\bar
    p_i \Phi^{(2)}\bigr)\;.
  \end{equ}
  Our investigation of $\CL\CU_1$ begins with
  \begin{equs}
    \CL H^{n-1} \Psi &= -H^{n-1} \bigl(\kappa_k E_1^{{2 \over k}-1} -
    \Phi^2\bigr) + H^{n-1}\Bigl (\CR'(E_1) + (q_0 + q_2 - 2q_1)\d_P
    \Psi\Bigr) \\
    &\qquad + (n-1)H^{n-3} \Psi\sum_{i=0,2} \Bigl(H\bigl({\sigma_i^2
      \over 2} - \gamma_i p_i^2\bigr)
    + \hf (n-2)\sigma_i^2 p_i^2 \Bigr) \\
    & = -H^{n-1} \bigl(\kappa_k E_1^{{2 \over k}-1} - \Phi^2\bigr) + I_9 +
    I_{10}\;.
  \end{equs}
  Since the function $\CR'$ has compact support, we have for example
  $|\CR'(E_1)| \les E_1^{-1}$.  Using the trick from
  Lemma~\ref{lem:trivial}, this yields $|\CR'(E_1)| \les H^{-1} \CU_0$.
  This allows us to obtain the bound
  \begin{equs}
    |I_9| &\les H^{n-2} \CU_0 + H^{n-1 + {1 \over 2k}} E_1^{{5 \over 2k}
      - 2} \les H^{n-2} \CU_0 + H^{n-3 + {3 \over k}} \CU_0^{2-{5 \over
        2k}}\;.
  \end{equs}
  Since $\Psi$ is bounded, one furthermore has
  \begin{equ}
    |I_{10}| \les H^{n-2} \CU_0 + H^{n-3} \CU_0 \;.
  \end{equ}
  Turning to the second term in $\CL\CU_1$, we have
  \begin{equs}
    \CL H^{n-1} \bar p_i \Phi^{(2)} &= H^{n-1}\bar p_i \Phi +
    H^{n-1}\bar p_i (q_0 + q_2 - 2q_1)\d_P \Phi^{(2)} \\
    &\quad + (n-1)H^{n-3} \bar p_i \Phi^{(2)}\sum_{j=0,2}
    \Bigl(H\bigl({\sigma_j^2 \over 2} - \gamma_j p_j^2\bigr)
    + \hf (n-2)\sigma_j^2 p_j^2 \Bigr) \\
    &\quad+ H^{n-1}\Phi^{(2)} \bigl(- \bar q_i|\bar q_i|^{2k-2} - \bar
    q_i + R_p^i - \gamma_i \bar p_i\bigr) + (n-1)H^{n-2}\sigma_i^2 p_i
    \Phi^{(2)} \\
    & = H^{n-1}\bar p_i \Phi + I_{11} + I_{12}+ I_{13}+ I_{14}\;.
  \end{equs}
  Using the bound on $R_p^i$ from \eref{e:errorbounds2}, the error terms
  appearing in this expression can be bounded by
  \begin{equs}
    |I_{11}| &\les H^{n-1+{1 \over 2k}} \CU_0^{{1 \over 2}} E_1^{{3
        \over 2k} - {3 \over 2}} \les H^{n-{5 \over 2}+{2 \over k}}
    \CU_0^{2-{3 \over 2k}}\;,\\
    |I_{12}| &\les H^{n-2} \CU_0^{{3 \over 2}} + H^{n-3} \CU_0^{{3 \over 2}} \;,\\
    |I_{13}| &\les H^{n-1} E_1^{{3 \over 2k} - 1} \bigl(\CU_0 + \CU_0^2
    H^{{3 \over 2k} - 1}\bigr) \les H^{n -2 + {3 \over 2k}} \CU_0^{3 -
      {3 \over 2k}}\;,\\
    |I_{14}| &\les H^{n-2} \CU_0^{{1 \over 2}}\;.
  \end{equs}
  By Young's inequality, it can be checked that, for every $\delta > 0$
  there exist constants $C$ and $N$ such that $|I_j| \le H^{n -2 + {3
      \over 2k} + \delta} + C \CU_0^N$ for $j=9,\ldots,14$. Inserting
  these bounds into \eref{e:LU01} and \eref{e:defU1}, this shows that
  \begin{equ}
    \CL \CU_1 \le -n \kappa_k (\gamma_0 + \gamma_2) H^{n-1} E_1^{{2 \over
        k}-1} + H^{n -2 + {3 \over 2k} + \delta} + C \CU_0^N\;.
  \end{equ}
  Since ${2 \over k} - 1 \le 0$ and since $n-2+{2 \over k} > n -2 + {3
    \over 2k} + \delta$ for sufficiently small $\delta$, we finally
  obtain
  \begin{equ}
    \CL \CU_1 \le -\hf n \kappa_k (\gamma_0 + \gamma_2) H^{n-2 + {2 \over
        k}} + C \CU_0^N\;,
  \end{equ}
  for a possibly different constant $C$. This is precisely the bound
  \eref{e:U1} which was the missing piece to complete the proof of
  Theorem~\ref{theo:wholething}, the principal result of this article.
  \end{proof}

\def\Rom#1{\uppercase\expandafter{\romannumeral #1}}\def\u#1{{\accent"15
  #1}}\def\cprime{$'$}

\end{document}